\documentclass[thmsa,12pt]{article}
\normalfont\sffamily
\usepackage{amsfonts}
\usepackage{amssymb}
\usepackage[active]{srcltx}
\usepackage{color}
\usepackage[x11names]{xcolor}
\usepackage{amsmath}
\usepackage{amsthm}
\usepackage{bm} 
\usepackage{graphicx}
\usepackage{textcomp}
\usepackage{a4wide}
\usepackage{multirow}
\usepackage{diagbox}
\usepackage{color}
\usepackage[format=hang]{caption}
\usepackage{subcaption}
\usepackage{url}
\usepackage{xfrac}

\usepackage{tikz}
\usepackage[normalem]{ulem}
\usepackage{comment}
\usepackage{pxfonts}
\usepackage[margin=2cm]{geometry} 
\usepackage{titlesec}

\usepackage{setspace}
\usepackage{footmisc}
\footnotesep 9.16pt 
\usepackage{accents}
\usepackage{appendix}

\usepackage{array,booktabs,ragged2e}
\newcolumntype{R}[1]{>{\RaggedRight}p{#1}}

\usepackage{cellspace}
\setlength\cellspacetoplimit{2pt}
\setlength\cellspacebottomlimit{2pt}

\usepackage{hyperref}
%
%
%
%

\newtheorem{lemma}{Lemma}
\newtheorem{conjecture}{Conjecture}

\newtheorem{definition}{Definition}
\newtheorem{remark}{Remark}
\newtheorem{example}{Example}

\def \be {\begin{equation}}
\def \ee {\end{equation}}

\newcommand{\N}{\mathbb{N}}

\newcommand{\Z}{\mathbb{Z}}
\newcommand{\R}{\mathbb{R}}

\newcommand{\cC}{\mathcal{C}}
\newcommand{\cG}{\mathcal{G}}
\newcommand{\cI}{\mathcal{I}}

\newcommand{\cN}{\mathcal{N}}
\newcommand{\cP}{\mathcal{P}}

\begin{document}

\def\title #1{\begin{center}
{\Large {\sc #1}}
\end{center}}
\def\author #1{\begin{center} {#1}
\end{center}}

\setstretch{1.1}

\begin{titlepage}
    \phantomsection \label{Titlepage}
    \addcontentsline{toc}{section}{Title page}

\renewcommand{\thefootnote}{\fnsymbol{footnote}}\addtocounter{footnote}{1}
\title{\sc
Equivalent bounded confidence processes \\ \medskip \  }

\author{Sascha Kurz 
	\\ {\small Dept.\ of Mathematics, University of Bayreuth, 95440 Bayreuth, Germany\\email: sascha.kurz@uni-bayreuth.de}}

\vspace{0.3cm}

\begin{center} {\tt 
\hspace{-1.9em} December 19, 2025 
}
\end{center}

\vspace{0.3cm}

\begin{center} {\bf {\sc Abstract}} \end{center}
{\small
In the bounded confidence model the opinions of a set of agents evolve over discrete time steps. In each round an agent averages the opinion of all agents whose opinions are at most a certain threshold apart. Here we assume that the opinions of the agents are elements of the real line. The details of the dynamics are determined by the initial opinions of the agents, i.e.\ a starting configuration, and the mentioned threshold -- both allowing uncountable infinite possibilities. Recently it was observed that for each starting configuration the set of thresholds can be partitioned into a finite number of intervals such that the evolution of opinions does not depend on the precise value of the threshold within one of the intervals. So, we may say that, given a starting configuration of initial opinions, there is only a finite number of equivalence classes of bounded confidence processes (and an algorithm to compute them). Here we systematically study different notions of equivalence. In our widest notion we can also get rid of the initial starting configuration and end up with a finite number of equivalent bounded confidence processes for each given (finite) number of agents. This allows to precisely study the occurring phenomena for small numbers of agents without the jeopardy of missing interesting cases by performing numerical experiments. We exemplarily study the freezing time, i.e.\ number of time steps needed until the process stabilizes, and the degree of fragmentation, i.e.\ the number of different opinions that survive once the process has reached its final state. 
}

\vspace{0.2cm}

\begin{description}
{\small
\item[Keywords:] opinion dynamics $\cdot$ bounded confidence $\cdot$ Hegselmann--Krause model $\cdot$ equivalence classes
}
\end{description}


\vspace{2cm}

\vfill
\noindent {\footnotesize I am grateful to Rainer Hegselmann for feedback on earlier drafts and generously sharing his insights on the bounded confidence model and his pitfalls, as well as being a driving force in slowly but steadily uncovering the mystery of the analytical properties of the bounded confidence model. }

\end{titlepage}

\addtocounter{footnote}{-1}

\setstretch{1.26}

\pagenumbering{arabic}

\section{Introduction}\label{sec_intro}

The Hegselmann--Krause (HK) model \cite{hegselmannkrause2002,krause1997soziale} has become a central framework for studying opinion dynamics. Here, period by period, the agents iteratively adjust their opinions by averaging those that are sufficiently close. More precisely, the interactions are restricted to agents whose opinions differ by no more than a prescribed confidence bound, so that
one only speaks of a bounded confidence model. In the HK model the opinions are updated simultaneously. An alternative
pairwise sequential updating procedure was proposed in \cite{deffuant2000mixing}. In the literature both variants are called bounded confidence (BC) models. While the distinction into the Hegselmann--Krause and the Defuant--Weisbuch model would be more precise, we will speak of the BC model in the following. Since its introduction, the BC model has been extensively analyzed from e.g.\ dynamical and sociological perspectives, yielding fundamental results on convergence, clustering, and the emergence of consensus or polarization. The defining feature of the simple bounded confidence mechanism gives rise to rich collective behavior and a wide variety of long-term outcomes.

Here we assume that the opinions of the agents are elements of the real line. The details of the dynamics are determined by the initial opinions of the agents, i.e.\ a starting configuration, and the mentioned threshold -- both allowing uncountable infinite possibilities. Recently it was observed that for each starting configuration the set of thresholds can be partitioned into a finite number of intervals such that the evolution of opinions does not depend on the precise value of the threshold within one of the intervals \cite{hegselmann2023bounded}. So, we may say that, given a starting configuration of initial opinions, there is only a finite number of equivalence classes of bounded confidence processes (and an algorithm to compute them). Here we systematically study different notions of equivalence. In our widest notion we can also get rid of the initial starting configuration and end up with a finite number of equivalent bounded confidence processes for each given (finite) number of agents. A so-called influence graph describes the combinatorial interaction between the agents and a sequence of influence graphs combinatorially described the entire BC-process. This allows to precisely study the occurring phenomena for small numbers of agents without the jeopardy of missing interesting cases by performing numerical experiments. We exemplarily study the freezing time, i.e.\ number of time steps needed until the process stabilizes, and the degree of fragmentation, i.e.\ the number of different opinions that survive once the process has reached its final state. 

The remaining part of this paper is structured as follows. In Section~\ref{sec_preliminaries} we introduce the necessary preliminaries before we dive into the details of influence graphs and linear programming in Section~\ref{sec_influence_graphs}. Given the combinatorial description of BC-processes as sequences of influence graphs and linear programming as a computational tool to decide the existence of a suitable starting configuration we have a machinery to study applications. In Section~\ref{sec_freezing_time} we consider the maximum number of time steps needed to end up in a stable state and in Section~\ref{sec_fragmentation} we study the maximum number of disjoint opinion clusters in the stable state. We close with a brief conclusion and a few open problems in Section~\ref{sec_conclusion}.

\section{Preliminaries}\label{sec_preliminaries}

First we introduce some terminology to precisely state what we mean by a bounded confidence (BC) process:
\begin{enumerate}
  \item[(A)] There is a set $A:=\{1,\dots,n\}$ of $n\in\N_{>0}$ agents.\\[-11mm]
  \item[(B)] Time is discrete; $t\in\N=\{0,1,2,\dots\}$.\\[-11mm]
  \item[(C)] Each agent $i\in A$ starts at time $t=0$ with a certain opinion $x_i(0)\in\R$.\\[-11mm]
  \item[(D)] The \emph{opinion profile} at time $t\in\N$ is given by
             \begin{equation}
                X(t)=\left(x_1(t),x_2(t),\dots,x_n(t)\right)\in\R^n.
             \end{equation}
             We call $X(0)$ the \emph{starting profile} or \emph{starting configuration}.\\[-11mm]
  \item[(E)] Given a \emph{confidence level} $\varepsilon\in\R_{\ge 0}$ the
            \emph{confidence interval} of agent $i\in A$ and at time $t\in\N$
            is given by
            \begin{equation}
              V_i(t)=\left[ x_i(t)-\varepsilon,x_i(t)+\varepsilon\right]\subsetneq \R.´
            \end{equation}
  \item[(F)] The set of \emph{$i$-insiders} at time $t\in \N$ for an
             agent $i\in A$ is given by
             \begin{equation}
                I_i(t)=\left\{j\in A\,:\, \left|x_j(t)-x_i(t)\right|\le \varepsilon\right\}=\left\{j\in A\,:\, x_j(t)\in V_i(t)\right\}\subseteq A.
             \end{equation}
             We call its complement
             \begin{equation}
                O_i(t)=A\backslash I_i(t)=\left\{j\in A\,:\, \left|x_j(t)-x_i(t)\right|> \varepsilon\right\}=\left\{j\in A\,:\, x_j(t)\notin V_i(t)\right\}\subsetneq A
             \end{equation}
             the set of \emph{$i$-outsiders} at time $t\in\N$.\\[-11mm]
  \item[(G)] The \emph{updating rule} for the opinions is given by
             \begin{equation}
               \label{eq_update_rule}
               x_i(t+1)= \frac{1}{\# I_i(t)}\cdot\sum_{j\in I_i(t)} x_j(t)
             \end{equation}
             for all $i\in A$ and all $t\in \N$.\\[-11mm]
  \item[(H)] The \emph{influence graph} at time $t\in\N$ is denoted as $\cG(t)=(A,E(t))$, where the vertex set is given the set $A$ of agents and the edge set is given by
  \begin{equation}
    E(t):=\big\{\{i,j\}\,:\, i,j\in A,i\neq j, \left|x_i(t)-x_j(t)\right|\le\varepsilon \big\}.
  \end{equation}
\end{enumerate}

We remark that some authors assume that the starting opinions $x_i(0)$, and in general $x_i(t)$, are contained in the unit interval $[0,1]$. We will see in a moment that both definitions lead to equivalent BC-processes, while our choice allows a bit more flexibility for some situations. The confidence level $\varepsilon$ is sometimes also called confidence radius. If we restrict the $x_i(t)$ to $[0,1]$, then we can also assume $\varepsilon\in[0,1]$ w.l.o.g. Of course, the choice $\varepsilon=0$ leads to a rather uninteresting BC-process where we have $x_i(t)=x_i(0)$ for all $t\in\N$ and all $i\in A$. We will nevertheless include the case $\varepsilon=0$ in our definition to make the presentation more smooth at some places. Note that we always have $i\in I_i(t)$, i.e.\ $I_i(t)$ is non-empty and has cardinality at least one, so that the updating rule (\ref{eq_update_rule}) is well defined. While the influence graph is not used in a majority of the papers on the BC model, its usage can at least dated back to \cite{kurz2014long,kurz2015optimal}. In \cite{wedin2015hegselmann} the authors speak of a receptivity graph. We will see that the notion of the influence graphs is an important combinatorial invariant of a BC-process. We remark that we have
\begin{equation}
  \label{eq_insider_graph}
  I_i(t)=\{i\}\cup \left\{j\in A\backslash\{i\}\,:\, \{i,j\}\in E_i(t)\right\},
\end{equation}
so that the updating rule (\ref{eq_update_rule}) can also be expressed in terms of $\cG(t)$.

\begin{remark}
  In this paper we restrict ourselves to the one dimensional bounded confidence model where the agents' opinions $x_i(t)$ are real numbers. Of course this model was generalized to higher dimensions where $x_i(t)\in\R^l$ or a subset $\mathcal{R}$ thereof. In order to generalize the notions of the confidence interval, $i$-insiders, $i$-outsiders, and the influence graph,
  we need to choose a \emph{metric} $d$ on $\R^l$ (or $\mathcal{R}$). With this the \emph{confidence region} of agent $i\in A$ at time $t\in \N$ is given by $V_i(t):=\left\{y\in\R^l\,:\, d\!\left(y,x_i(t)\right)\le\varepsilon\right\}\subseteq\R^l$. If the sets of $i$-insiders and $i$-outsiders at time $t$ are defined via the confidence interval (region) $V_i(t)$ then we can keep the definition. Otherwise we set $I_i(t)=\left\{j\in A\,: d\!\left(x_j(t),x_i(t)\right)\le \varepsilon\right\}\subseteq A$ and $O_i(t)=\left\{j\in A\,: d\!\left(x_j(t),x_i(t)\right)> \varepsilon\right\}\subseteq A$. Finally, the edge set of the influence graph at time $t$ is given by $E(t)=\left\{(i,j)\,:\, i,j\in A,i\neq j,d\!\left(x_i(t),x_j(t)\right))\le \varepsilon\right\}$, so that Equation~(\ref{eq_insider_graph}) remains valid.
  For the choice of the metric $d$ one usually takes one induced by a \emph{norm} $\Vert\cdot\Vert$, i.e.\ $d(x,y):=\Vert x-y\Vert$.
  On the real line $\R$ this leaves the choice $\Vert x\Vert:=\alpha\cdot |x|$ for some $\alpha>0$ only.
  By rescaling the confidence level $\varepsilon$ we may assume $\alpha=1$ w.l.o.g.\ For $l\ge 2$ there are infinitely many norms with corresponding non-similar unit balls.
  Famous examples are the \emph{Euclidean norm} $\left\Vert \left(x_1,\dots,x_l\right)\right\Vert_2:=\sqrt{\sum_{i=1}^l x_i^2}$,
  the \emph{$1$-norm}  $\left\Vert \left(x_1,\dots,x_l\right)\right\Vert_1:=\sum_{i=1}^l \left|x_i\right|$
  (inducing the \emph{Taxicab geometry} a.k.a.\ \emph{Manhattan geometry}),
  and the \emph{maximum norm} (or \emph{Chebyshev norm}) $\left\Vert \left(x_1,\dots,x_l\right)\right\Vert_\infty:=\max \left\{ \left|x_i\right|\,:\, 1\le i\le l\right\}$. Some of our later statements are also valid in this more general setting. Here we will give remarks and refer to the higher dimensional BC model or higher dimensional BC-processes, where the latter notion is introduced below.
\end{remark}

Based on (A)-(H) we can now define that the sequence of opinion profiles $\big(X(t)\big)_{t\in\N}$ is a BC-process if it is generated by the updating rule (\ref{eq_update_rule}). Noting that the number of entries of $X(0)$ encodes the number $n=\#A$ of agents, we observe that a BC-process, which is an $n$-dimensional dynamical system, is uniquely described by the starting configuration $X(0)$ and the confidence level $\varepsilon$, so that we also speak of a BC-process $\cP=\left(X(0),\varepsilon\right)$. So formally, we may write $V_i(t;X(0),\varepsilon)$, $I_i(t;X(0),\varepsilon)$, $O_i(t;X(0),\varepsilon)$, $\cG(t;X(0),\varepsilon)$ or $V_i(t;\cP)$, $I_i(t;\cP)$, $O_i(t;\cP)$, $\cG(t;\cP)$. However, we prefer the more handy abbreviations whenever the underlying BC-process $\cP$ is clear from the context.

An important property of BC-processes is that it stabilizes in a finite number of time steps, see e.g.\ \cite{lorenz2005stabilization}. Formally, the freezing time $F(\cP)$ is the smallest integer $t$ such that $X(t+1)=X(t)$, which obviously implies $X(t')=X(t)$ for all $t'\ge t$. Interestingly enough, the freezing time of a BC-process can be upper bounded by its number $n$ of agents. To this end, we define $F(n)$ as the maximum freezing time of a BC-process with $n$ agents. In \cite{wedin2015quadratic} a sequence of examples with a quadratic freezing time was constructed. So, we have $F(n)\in\Omega\!\left(n^2\right)$.\footnote{We write $f(n)\in \Omega(g(n))$ if there exist constants $n_0\in\N$ and $c\in\R_{>0}$ such that $f(n)\ge c\cdot g(n)$ for all $n\ge n_0$.} A cubic upper bound was shown in \cite{bhattacharyya2013convergence,mohajer2013convergence}. So, we e.g.\ have $F(n)<4n\!\left(n^2+1\right)\in O\!\left(n^3\right)$.\footnote{We write $f(n)\in O(g(n))$ if there exist constants $n_0\in\N$ and $c\in\R_{>0}$ such that $f(n)\le c\cdot g(n)$ for all $n\ge n_0$. The upper bounds $O\!\left(n^5\right)$ and $F(n)\in O\!\left(n^4\right)$ were shown in \cite{martinez2007synchronous} and \cite{touri2011discrete}, respectively.}. Note that the upper bound is independent from from the starting configuration $X(0)$ and the confidence level $\varepsilon$. For higher dimensional BC-processes in $\R^l$ the freezing time is upper bounded by $O\!\left(n^{10}l^2\right)$ \cite[Theorem 4.4]{bhattacharyya2013convergence}.

\medskip

\begin{example}\label{example_1}
  Consider the BC-process $\cP=(X(0),\varepsilon)$ given by $X(0)=(2,3,4,5)$ and $\varepsilon=1$. Evaluating the updating rule (\ref{eq_update_rule}) we compute $X(1)=\left(\tfrac{5}{2},3,4,\tfrac{9}{2}\right)$, $X(2)=\left(\tfrac{11}{4},\tfrac{19}{6},\tfrac{23}{6},\tfrac{17}{4}\right)$, $X(3)=\left(\tfrac{71}{24},\tfrac{13}{4},\tfrac{15}{4},\tfrac{97}{24}\right)$, $X(4)=\left(\tfrac{239}{72},\tfrac{7}{2},\tfrac{7}{2},\tfrac{265}{72}\right)$, and $X(5)=\left(\tfrac{7}{2},\tfrac{7}{2},\tfrac{7}{2},\tfrac{7}{2}\right)$, so that $\varphi(\cP)=5$. The corresponding sequence of influence graphs $\cG(t)$ is given by:\\
  $\cG(0)$: \begin{tikzpicture}[node/.style={circle, draw, inner sep=2pt}]
    \node[node,label=below:$1$] (v1) at (0,0) {};
    \node[node,label=below:$2$] (v2) at (1,0) {};
    \node[node,label=below:$3$] (v3) at (2,0) {};
    \node[node,label=below:$4$] (v4) at (3,0) {};
    \draw (v1) -- (v2) -- (v3) -- (v4);
  \end{tikzpicture} \quad\quad\quad\quad
  $\cG(1)$:  \begin{tikzpicture}[node/.style={circle, draw, inner sep=2pt}]
    \node[node,label=below:$1$] (v1) at (0,0) {};
    \node[node,label=below:$2$] (v2) at (1,0) {};
    \node[node,label=below:$3$] (v3) at (2,0) {};
    \node[node,label=below:$4$] (v4) at (3,0) {};
    \draw (v1) -- (v2) -- (v3) -- (v4);
  \end{tikzpicture}\\
  $\cG(2)$:  \begin{tikzpicture}[node/.style={circle, draw, inner sep=2pt}]
    \node[node,label=below:$1$] (v1) at (0,0) {};
    \node[node,label=below:$2$] (v2) at (1,0) {};
    \node[node,label=below:$3$] (v3) at (2,0) {};
    \node[node,label=below:$4$] (v4) at (3,0) {};
    \draw (v1) -- (v2) -- (v3) -- (v4);
  \end{tikzpicture}\quad\quad\quad\quad
  $\cG(3)$:  \begin{tikzpicture}[node/.style={circle, draw, inner sep=2pt}]
    \node[node,label=below:$1$] (v1) at (0,0) {};
    \node[node,label=below:$2$] (v2) at (1,0) {};
    \node[node,label=below:$3$] (v3) at (2,0) {};
    \node[node,label=below:$4$] (v4) at (3,0) {};
    \draw (v1) -- (v2) -- (v3) -- (v4);
    \draw[bend left=45] (v1) to (v3);
    \draw[bend left=45] (v2) to (v4);
  \end{tikzpicture}\\
  $\cG(4)$:  \begin{tikzpicture}[node/.style={circle, draw, inner sep=2pt}]
    \node[node,label=below:$1$] (v1) at (0,0) {};
    \node[node,label=below:$2$] (v2) at (1,0) {};
    \node[node,label=below:$3$] (v3) at (2,0) {};
    \node[node,label=below:$4$] (v4) at (3,0) {};
    \draw (v1) -- (v2) -- (v3) -- (v4);
    \draw[bend left=45] (v1) to (v3);
    \draw[bend left=45] (v2) to (v4);
    \draw[bend left=45] (v1) to (v4);
  \end{tikzpicture}\quad\quad\quad\quad
  $\cG(5)$:  \begin{tikzpicture}[node/.style={circle, draw, inner sep=2pt}]
    \node[node,label=below:$1$] (v1) at (0,0) {};
    \node[node,label=below:$2$] (v2) at (1,0) {};
    \node[node,label=below:$3$] (v3) at (2,0) {};
    \node[node,label=below:$4$] (v4) at (3,0) {};
    \draw (v1) -- (v2) -- (v3) -- (v4);
    \draw[bend left=45] (v1) to (v3);
    \draw[bend left=45] (v2) to (v4);
    \draw[bend left=45] (v1) to (v4);
  \end{tikzpicture}\\
  We have $\cG(t)=\cG(5)$ and $X(t+1)=X(t)$ for all $t\ge 5$.
\end{example}

\bigskip

Next we collect a few basic analytic properties:\\[-8mm]
\begin{lemma}$\,$\\[-9mm]\label{lemma_basic}
\begin{itemize}
 \item[(1)] If $x_i(t)\le x_j(t)$ for some $i,j\in A$ and $t\in \N$, then we have $x_i(t')\le x_j(t')$ for all $t'\ge t$, c.f.\ \cite[Lemma 2]{krause2000discrete}.\\[-10mm]
 \item[(2)] For all $t\in\N$ we have
 $\min\{x_i(t)\,:\, i\in A\} \le \min\{x_i(t+1)\,:\, i\in A\}$
 and $\max\{x_i(t)\,:\, i\in A\} \ge \max\{x_i(t+1)\,:\, i\in A\}$.\\[10mm]
 \item[(3)] For all $i,j\in A$ and $t\in \N$ we have that $j\in I_i(t)$ is equivalent to $i\in I_j(t)$.\\[-10mm]
 \item[(4)] If $\#\{i\in A\,:\, x_i(t)=y\}=\#\{i\in A\,:\, x_i(t)=2m-y\}$ for all $y\le m$, then we say that $X(t)$ is \emph{mirror symmetric} at $m\in\R$. So, if $X(t)$ is mirror symmetric at $m$, so is $X(t')$ for all $t'\ge t$.\\[-10mm]
 \item[(5)] Let $\cP=(X(0),\varepsilon)$ be an arbitrary BC-process. If $X'(0)=\alpha X(0)+\beta$ for some $\alpha,\beta\in\R\backslash\{0\}$, i.e.\ $x_i'(0)=\alpha x_i(0)+\beta$ for all $i\in A$, and $\varepsilon'=\alpha\varepsilon$, then we have $X'(t)=\alpha X(t)+\beta$ for all $t$ for the BC-process $\cP'=(X(0)',\varepsilon')$. We also write $\cP'=\alpha\cP+\beta$.\\[-10mm]
  \item[(6)] Let $\cP=(X(0),\varepsilon)$ and $\cP'=(X'(0),\varepsilon')$ be two BC-processes with equal starting configuration, i.e.\ $X(0)=X'(0)$. If $\cG(i)=\cG'(i)$ for all $0\le i\le t$ and $\cG(t+1)\neq\cG'(t+1)$, then we have $X(i)=X'(i)$ for all $0\le i\le t$ and $X(t+1)\neq X'(t+1)$.
\end{itemize}
\end{lemma}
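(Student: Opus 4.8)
The plan is to argue by a simple induction on $i$, using the observation that the influence graph $\cG(t)$ at time $t$ completely determines the opinion profile $X(t+1)$ as a function of $X(t)$ via the updating rule. First I would record this key fact: by Equation~(\ref{eq_insider_graph}) the set $I_i(t)$ of $i$-insiders is a function of $\cG(t)$ alone (it is $\{i\}$ together with the neighbours of $i$ in $\cG(t)$), and hence by the updating rule (\ref{eq_update_rule}) each coordinate $x_i(t+1)$ is a deterministic function $f_i$ of $\cG(t)$ and $X(t)$, namely the average of the entries of $X(t)$ indexed by $I_i(t)$. In particular, if two BC-processes have $X(i)=X'(i)$ and $\cG(i)=\cG'(i)$, then $X(i+1)=X'(i+1)$.

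Next I would carry out the forward induction. The base case is $X(0)=X'(0)$, which holds by hypothesis. For the inductive step, suppose $X(i)=X'(i)$ for some $i$ with $0\le i\le t$; since $i\le t$ we also have $\cG(i)=\cG'(i)$ by hypothesis, so applying the key fact gives $X(i+1)=X'(i+1)$. This establishes $X(i)=X'(i)$ for all $0\le i\le t$ (and, if one wished, one more step would even give $X(t+1)=X'(t+1)$ provided the graphs agreed at time $t+1$, which here they do not).

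It remains to prove the contrapositive-flavoured second assertion: $X(t+1)\neq X'(t+1)$. Here is where a little care is needed. We know $X(t)=X'(t)$ (just shown) but $\cG(t+1)\neq\cG'(t+1)$. The influence graph at time $t+1$ is determined by the \emph{pairwise distances} $|x_i(t+1)-x_j(t+1)|$ and the threshold $\varepsilon$ (respectively $\varepsilon'$), via edge set $E(t+1)=\{\{i,j\}:|x_i(t+1)-x_j(t+1)|\le\varepsilon\}$. Two cases arise. If $\varepsilon=\varepsilon'$, then $\cG(t+1)$ and $\cG'(t+1)$ are the \emph{same} function of the respective opinion profiles, so $X(t+1)=X'(t+1)$ would force $\cG(t+1)=\cG'(t+1)$, a contradiction; hence $X(t+1)\neq X'(t+1)$. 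If instead $\varepsilon\neq\varepsilon'$, then—as I would note—the graphs $\cG(i)$ and $\cG'(i)$ coincide for all $0\le i\le t$ by hypothesis \emph{and} the opinion profiles $X(i)=X'(i)$ coincide for all $0\le i\le t$, so the differing thresholds happen not to affect the first $t$ influence graphs; but the argument for $X(t+1)\neq X'(t+1)$ must now be slightly different, since even if $X(t+1)=X'(t+1)$ the graphs could differ purely because $\varepsilon\neq\varepsilon'$. I expect this $\varepsilon\neq\varepsilon'$ subcase to be the main obstacle: one must show that the hypothesis $X(0)=X'(0)$ together with $\cG(i)=\cG'(i)$ for $0\le i\le t$ but $\cG(t+1)\neq\cG'(t+1)$ is simply impossible unless $X(t+1)\neq X'(t+1)$. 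The cleanest resolution is to observe that if $X(t+1)=X'(t+1)$, then for every pair $\{i,j\}$ the quantities $|x_i(t+1)-x_j(t+1)|$ are equal in both processes, so the only way to get $\cG(t+1)\neq\cG'(t+1)$ is that some distance $d_{ij}:=|x_i(t+1)-x_j(t+1)|$ satisfies $\varepsilon<d_{ij}\le\varepsilon'$ (or vice versa); tracing back, one shows that this same distance pattern, being a continuous function of $X(0)$ which is shared, together with the requirement that \emph{all earlier} graphs agree, pins down enough to force the contradiction—alternatively, and more simply, one may note that the statement is really being invoked in the paper only for fixed $X(0)$ and varying $\varepsilon$, so a direct appeal to the description of $\cG(t+1)$ as a function of $X(t+1)$ and $\varepsilon$ together with $X(t)=X'(t)$ suffices: were $X(t+1)=X'(t+1)$, the sole remaining source of discrepancy would be $\varepsilon\neq\varepsilon'$, and then one checks that this discrepancy already shows up no later than time $t+1$ precisely because it shows up at time $t+1$, giving $\cG(t+1)\neq\cG'(t+1)$ with $X(t+1)=X'(t+1)$—so in fact the clean statement is: \emph{if additionally $\varepsilon=\varepsilon'$ is not assumed, the conclusion $X(t+1)\neq X'(t+1)$ can fail}, and the lemma should be read under the implicit convention that the confidence level is not what changed. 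I would therefore either add the hypothesis $\varepsilon=\varepsilon'$ or, preferably, simply present the argument above for the case that the discrepancy in $\cG(t+1)$ is not attributable solely to the threshold, which is the only case the later sections use.
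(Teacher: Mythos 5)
You address only part~(6) of this six-part lemma (the paper itself gives no proof, citing \cite{krause2000discrete} for~(1) and treating the rest as routine), so parts (1)--(5) go unexamined; I will concentrate on your treatment of~(6). Your ``key fact'' is correct: $X(i+1)$ is a function of $X(i)$ and $\cG(i)$ alone. But this is exactly where your argument breaks down, because applying that fact at $i=t$ --- where the hypothesis \emph{does} supply $\cG(t)=\cG'(t)$ --- yields $X(t+1)=X'(t+1)$, the precise opposite of the conclusion $X(t+1)\neq X'(t+1)$ you then set out to establish. Your parenthetical remark that this extra step would require ``the graphs to agree at time $t+1$'' is an off-by-one slip: by the updating rule~(\ref{eq_update_rule}), $X(t+1)$ is computed from $\cG(t)$, not from $\cG(t+1)$. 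Once this is noticed, your case $\varepsilon=\varepsilon'$ collapses (with $X(0)=X'(0)$ and $\varepsilon=\varepsilon'$ the two processes coincide, so the hypothesis $\cG(t+1)\neq\cG'(t+1)$ is unsatisfiable and the claim is vacuous there), and your case $\varepsilon\neq\varepsilon'$ correctly senses trouble but misdiagnoses it: you propose reading the lemma ``under the implicit convention that the confidence level is not what changed,'' which is backwards, since the only non-vacuous instances are those with $\varepsilon\neq\varepsilon'$. What you have actually uncovered is an index shift in the printed statement: the hypotheses force $X(i)=X'(i)$ for all $0\le i\le t+1$, and the conclusion that is both true and actually needed in the proof of Lemma~\ref{lemma_epsilon_equivalence_structure} is $X(t+2)\neq X'(t+2)$.

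That last inequality is the genuinely nontrivial part, and your proposal never supplies an argument for it. Here is one. Write $Y:=X(t+1)=X'(t+1)$ with $y_1\le\dots\le y_n$ and assume w.l.o.g.\ $\varepsilon<\varepsilon'$, so that $E(t+1)\subsetneq E'(t+1)$ and hence $I_i(t+1)\subseteq I_i'(t+1)$ for every $i$. Among all pairs $\{i,j\}\in E'(t+1)\setminus E(t+1)$ with $i<j$ choose one with $i$ minimal. By this minimality every agent $k\in I_i'(t+1)\setminus I_i(t+1)$ satisfies $k>i$ and therefore $y_k>y_i+\varepsilon$, whereas every $k\in I_i(t+1)$ satisfies $y_k\le y_i+\varepsilon$; consequently the average over $I_i'(t+1)$ strictly exceeds the average over $I_i(t+1)$, i.e.\ $x_i'(t+2)>x_i(t+2)$. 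This is the missing idea --- enlarging the insider set of a suitably extremal agent by agents lying strictly outside its old confidence interval strictly moves its updated opinion --- and without it (together with the corrected indexing) your proof of part~(6) is not complete.
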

We remark that statements (3) and (6) of Lemma~\ref{lemma_basic} are also valid for higher dimensional BC-processes in $\R^l$, as well as statement~(5) is we choose $\beta\in\R^l$.

Due to the first property we will always assume the ordering $x_1(0)\le x_2(0)\le\dots\le x_n(0)$ in the following. Defining the \emph{width} of a given BC-process $\cP=(X(0),\varepsilon)$ as
\begin{equation}
  \omega(t)=\max\{x_i(t)\,:\, i\in A\}-\min\{x_i(t)\,:\, i\in A\},
\end{equation}
the second property states that the width $\omega(t)$ is weakly decreasing in $t$. The \emph{initial width} and the \emph{final width} of $\cP$ are given by $\omega(0)$ and $\omega(f(\cP))$, respectively. In Example~\ref{example_1} the initial width equals three and the final width is equal to zero. Whenever the final width equals zero we speak of \emph{consensus}, i.e.\ all agents have the same opinion in the final state, and of \emph{fragmentation} otherwise. Applying an affine transform with constants $\alpha$ and $\beta$ results in a finial (and initial) width multiplied by $\alpha$.
The third property is the reason why we can consider an undirected influence graph $\cG(t)$. In our BC-process in Example~\ref{example_1} we have that $X(0)$ is mirror symmetric at $m=\tfrac{7}{2}$. Since we already know that it reaches a consensus, the fourth property implies that the final unique opinion equals $m$. A mirror symmetry of $X(t)$ also implies that the corresponding influence graph $\cG(t)$ has an automorphism of order two. However, $\cG(t)$ can have many more symmetries than $X(t)$.
Due to the fifth property we say that two BC-processes $\cP$ and $\cP'$ are \emph{affine equivalent} if there exist constants $\beta\in\R$ and $\alpha\in\R\backslash\{0\}$ such that $\cP'=\alpha\cP+\beta$. The BC-process in Example~\ref{example_1} is affine equivalent to the BC-process $\cP'=(X'(0),\varepsilon')$ with $X'(0)=\left(0,\tfrac{1}{3},\tfrac{2}{3},1\right)$ and $\epsilon'=\tfrac{1}{3}$. So, when only considering BC-processes up to affine equivalence we can always restrict the opinions $x_i(t)$ to the unit interval $[0,1]$. Alternatively, for $\varepsilon\neq 0$ we can always assume $\varepsilon=1$. We remark that we will use the sixth property in the proof of Lemma~\ref{lemma_epsilon_equivalence_structure}.

Certainly, we would not say that two affine equivalent BC-processes $\cP$ and $\cP'$ are essentially different, while we mostly have $\cP\neq \cP'$. We write $\cP\cong\cP'$ for two equivalent BC-processes. Since we will introduce more notions of equivalence we will write  $\cP\cong_A\cP'$ when we refer to affine equivalence and the equivalence relation is not clear from the context. In general the \emph{equivalence class} of $\cP$ consists of the set of all BC-processes that are equivalent to $\cP$. For affine equivalence the equivalence classes contain infinitely many elements, parameterized by the constants $\alpha$ and $\beta$, while there are still infinitely many equivalence classes if $n>1$.

In order to introduce another equivalence relation we observe that in Example~\ref{example_1} we could modify the confidence level $\varepsilon$ to any value in $\big[1,\tfrac{13}{12}\big)$ without changing any of the opinion profiles $X(t)$. For $\varepsilon\in [0,1)$ we obtain a BC-process with freezing time $0$ and in Example~\ref{example_2} we will see that the resulting BC-process is essentially different.
\begin{definition}
  Given two BC-processes $\cP=(X(0),\varepsilon)$ and $\cP'=(X'(0),\varepsilon')$ we say that $\cP$ and $\cP'$ are \emph{$\varepsilon$-equivalent}, notated by $\cP\cong_\varepsilon\cP'$, iff $X(t)=X'(t)$ for all $t\in \N$.
\end{definition}
\begin{example}\label{example_2}
  Consider the BC-process $\cP=(X(0),\varepsilon)$ given by $X(0)=(2,3,4,5)$ and $\varepsilon=\tfrac{13}{2}$. Evaluating the updating rule (\ref{eq_update_rule}) we compute $X(1)=\left(\tfrac{5}{2},3,4,\tfrac{9}{2}\right)$, $X(2)=\left(\tfrac{11}{4},\tfrac{19}{6},\tfrac{23}{6},\tfrac{17}{4}\right)$, $X(3)=\left(\tfrac{13}{4},\tfrac{7}{2},\tfrac{7}{2},\tfrac{15}{4}\right)$, and $X(4)=\left(\tfrac{7}{2},\tfrac{7}{2},\tfrac{7}{2},\tfrac{7}{2}\right)$, so that $\varphi(\cP)=4$. The corresponding sequence of influence graphs $\cG(t)$ is given by:\\
  $\cG(0)$: \begin{tikzpicture}[node/.style={circle, draw, inner sep=2pt}]
    \node[node,label=below:$1$] (v1) at (0,0) {};
    \node[node,label=below:$2$] (v2) at (1,0) {};
    \node[node,label=below:$3$] (v3) at (2,0) {};
    \node[node,label=below:$4$] (v4) at (3,0) {};
    \draw (v1) -- (v2) -- (v3) -- (v4);
  \end{tikzpicture} \quad\quad\quad\quad
  $\cG(1)$:  \begin{tikzpicture}[node/.style={circle, draw, inner sep=2pt}]
    \node[node,label=below:$1$] (v1) at (0,0) {};
    \node[node,label=below:$2$] (v2) at (1,0) {};
    \node[node,label=below:$3$] (v3) at (2,0) {};
    \node[node,label=below:$4$] (v4) at (3,0) {};
    \draw (v1) -- (v2) -- (v3) -- (v4);
  \end{tikzpicture}\\
  $\cG(2)$:  \begin{tikzpicture}[node/.style={circle, draw, inner sep=2pt}]
    \node[node,label=below:$1$] (v1) at (0,0) {};
    \node[node,label=below:$2$] (v2) at (1,0) {};
    \node[node,label=below:$3$] (v3) at (2,0) {};
    \node[node,label=below:$4$] (v4) at (3,0) {};
    \draw (v1) -- (v2) -- (v3) -- (v4);
    \draw[bend left=45] (v1) to (v3);
    \draw[bend left=45] (v2) to (v4);
  \end{tikzpicture}\quad\quad\quad\quad
  $\cG(3)$:  \begin{tikzpicture}[node/.style={circle, draw, inner sep=2pt}]
    \node[node,label=below:$1$] (v1) at (0,0) {};
    \node[node,label=below:$2$] (v2) at (1,0) {};
    \node[node,label=below:$3$] (v3) at (2,0) {};
    \node[node,label=below:$4$] (v4) at (3,0) {};
    \draw (v1) -- (v2) -- (v3) -- (v4);
    \draw[bend left=45] (v1) to (v3);
    \draw[bend left=45] (v2) to (v4);
    \draw[bend left=45] (v1) to (v4);
  \end{tikzpicture}\\
  $\cG(4)$:  \begin{tikzpicture}[node/.style={circle, draw, inner sep=2pt}]
    \node[node,label=below:$1$] (v1) at (0,0) {};
    \node[node,label=below:$2$] (v2) at (1,0) {};
    \node[node,label=below:$3$] (v3) at (2,0) {};
    \node[node,label=below:$4$] (v4) at (3,0) {};
    \draw (v1) -- (v2) -- (v3) -- (v4);
    \draw[bend left=45] (v1) to (v3);
    \draw[bend left=45] (v2) to (v4);
    \draw[bend left=45] (v1) to (v4);
  \end{tikzpicture}\\
  We have $\cG(t)=\cG(4)$ and $X(t+1)=X(t)$ for all $t\ge 4$.
\end{example}
\begin{lemma}\label{lemma_finite_number_euqivalence_classes_1}
  For each starting configuration $X(0)\in\R^n$ there exists a finite number of equivalence classes with respect to $\cong_\varepsilon$, i.e.\ there exist $\varepsilon_1,\dots,\varepsilon_l\in\R_{\ge 0}$ such that for each each $\varepsilon'\in\R_{\ge 0}$ there exists an index $1\le h\le l$ with $\big(X(0),\varepsilon'\big)\cong_\varepsilon\big(X(0),\varepsilon_h\big)$.
\end{lemma}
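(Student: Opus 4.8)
The plan is a finite recursion over the time steps, resting on two facts already available: the entire trajectory $\big(X(t)\big)_{t\in\N}$ is determined by the starting configuration together with the sequence of influence graphs (via the updating rule (\ref{eq_update_rule}) and (\ref{eq_insider_graph})), and the freezing time of every BC-process on $n$ agents is bounded by a function of $n$ alone. First I would treat a single step. Since $X(0)$ does not depend on $\varepsilon$, the edge set $E(0)=\{\{i,j\}:|x_i(0)-x_j(0)|\le\varepsilon\}$ changes, as $\varepsilon$ runs through $\R_{\ge 0}$, only when $\varepsilon$ crosses one of the at most $\binom{n}{2}$ pairwise distances $|x_i(0)-x_j(0)|$. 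Ordering the distinct such distances as $0\le d_1<\dots<d_r$ with $r\le\binom{n}{2}$, the half-line $\R_{\ge 0}$ decomposes into the $r+1$ intervals $[0,d_1),[d_1,d_2),\dots,[d_{r-1},d_r),[d_r,\infty)$, on each of which $\cG(0)$ is constant; hence $X(1)$ is constant on each of them as well.

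For the recursion, suppose $\R_{\ge 0}$ has already been partitioned into finitely many intervals on each of which all of $X(0),\cG(0),X(1),\cG(1),\dots,X(t),\cG(t)$ are constant (the base case $t=0$ being the previous paragraph). Fix one such interval $J$. Then $X(t+1)$ is constant on $J$, being a function of the constant data $X(t)$ and $\cG(t)$, so the distances $|x_i(t+1)-x_j(t+1)|$ are fixed numbers, and exactly as above $J$ splits into at most $\binom{n}{2}+1$ subintervals on each of which $\cG(t+1)$, and therefore $X(t+2)$, is constant. Performing this refinement on every interval of the level-$t$ partition yields a level-$(t+1)$ partition into finitely many intervals with the analogous property. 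Now set $N:=4n(n^2+1)$, which by \cite{bhattacharyya2013convergence,mohajer2013convergence} strictly exceeds the freezing time of every BC-process on $n$ agents, uniformly in $X(0)$ and $\varepsilon$, and run the recursion up to $t=N$. On any interval $J$ of the resulting (finite) partition the profiles $X(0),\dots,X(N+1)$ are constant; since the process freezes by time $N$ we have $X(N)=X(N+1)$ on $J$, hence $X(s)=X(N)$ for all $s\ge N$, and consequently $\cG(s)=\cG(N)$ for all $s\ge N$ too, the influence graph being a function of the frozen profile. Thus the whole sequence $\big(X(s)\big)_{s\in\N}$ is constant on $J$, and picking one value $\varepsilon_h$ from each of the finitely many intervals of the level-$N$ partition gives $\varepsilon_1,\dots,\varepsilon_l$ as required: any $\varepsilon'\in\R_{\ge 0}$ lies in some such interval and hence satisfies $\big(X(0),\varepsilon'\big)\cong_\varepsilon\big(X(0),\varepsilon_h\big)$.

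The one point that genuinely requires the structure of the BC model is the termination of the recursion: a priori the influence graphs could keep changing and the intervals keep subdividing indefinitely, and it is precisely the uniform, $\varepsilon$- and $X(0)$-independent cubic bound on the freezing time that cuts the process off after $N$ rounds; everything else is bookkeeping (and incidentally yields the crude bound $l\le\big(\binom{n}{2}+1\big)^{N+1}$ and an algorithm). Two minor points deserve a line of care but cause no difficulty: the pieces are half-open because $E(t)$ is defined by the non-strict inequality $\le\varepsilon$, so the endpoint $d_i$ belongs to the interval $[d_i,d_{i+1})$ on which the edge of length $d_i$ is already present, and the degenerate value $\varepsilon=0$ needs no separate treatment, falling into the leftmost interval of the level-$0$ partition and producing the frozen process $X(t)\equiv X(0)$.
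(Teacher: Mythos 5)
Your proof is correct, and it rests on the same two pillars as the paper's: the trajectory is a function of $X(0)$ and the sequence of influence graphs via (\ref{eq_insider_graph}) and (\ref{eq_update_rule}), and the freezing time is bounded uniformly in $X(0)$ and $\varepsilon$. But the route is genuinely different. The paper disposes of Lemma~\ref{lemma_finite_number_euqivalence_classes_1} with a pure counting argument: the map $\varepsilon\mapsto\big(\cG(0),\dots,\cG(N)\big)$ has finite range (at most $2^{n\choose 2}$ graphs per step, at most $N$ relevant steps), and $\varepsilon$-values with the same image are $\cong_\varepsilon$-equivalent, so there are finitely many classes --- no claim is made about \emph{which} $\varepsilon$-values land in the same class. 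Your recursive refinement of $\R_{\ge 0}$ into half-open intervals on which all the $\cG(t)$ are simultaneously constant is, almost verbatim, the paper's proof of the \emph{subsequent} Lemma~\ref{lemma_epsilon_equivalence_structure} (your splitting points are its sets $\cN^{(i,j)}$ of new pairwise distances). So you have proved the stronger structural statement --- that the equivalence classes are finitely many intervals $[\varepsilon_j,\varepsilon_{j+1})$ --- at the cost of more bookkeeping; the paper's counting argument is shorter but yields only finiteness, and your version additionally gives an algorithm and the explicit bound $l\le\big({n\choose 2}+1\big)^{N+1}$, comparable to the paper's cruder $\big(2^{n\choose 2}\big)^{4n(n^2+1)}$. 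One cosmetic point: if two agents share an initial opinion then your smallest distance is $d_1=0$ and the leftmost piece $[0,d_1)$ is empty; just discard zero distances (the corresponding edges are present for every $\varepsilon\ge 0$), as the paper implicitly does by requiring $0<\varepsilon_1$.
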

\begin{proof}
  As mentioned before, the updating rule (\ref{eq_update_rule}) can be expressed solely in terms of the influence graph $\cG(t)$, see Equation~(\ref{eq_insider_graph}). Since the number of steps till freezing is upper bounded and for each step there is a finite number of possible influence graphs, the result follows.
\end{proof}

Since the number of steps for $n$ agents is upper bounded by $4n(n^2+1)$ and the number of graphs is upper bounded by $2^{n\choose 2}$, a first crude upper bound for the number of equivalence classes for $n$ agents with respect to the equivalence relation $\cong_\varepsilon$ is given by $\left(2^{n\choose 2}\right)^{4n(n^2+1)}$, which also holds for higher dimensional BC-processes in $\R^l$.

The notion of $\varepsilon$-equivalence was implicitly defined in \cite{hegselmann2023bounded}. Here the authors considers so-called $\varepsilon$-switches. Loosely speaking, $\varepsilon$-switches are specific values where the BC-process changes when the confidence level is increased to them. They allow a precise characterization of the equivalence classes:
\begin{lemma}
  \label{lemma_epsilon_equivalence_structure}
  For each starting configuration $X(0)\in\R^n$ there exists a finite number of real values $0<\varepsilon_1<\dots<\varepsilon_l<\infty$ such that for each $\varepsilon'\in \big[\varepsilon_j,\varepsilon_{j+1}\big)$, where $0\le j\le l$, $\varepsilon_0=0$, and $\varepsilon_{l+1}=\infty$, we have $\big(X(0),\varepsilon'\big)\cong_\varepsilon \big(X(0),\varepsilon_j\big)$ and $\big(X(0),\varepsilon'\big)\not\cong_\varepsilon\big(X(0),\varepsilon''\big)$ for all $\varepsilon''\in\R_{\ge 0}\backslash \big[\varepsilon_j,\varepsilon_{j+1}\big)$.
\end{lemma}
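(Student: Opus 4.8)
The plan is to identify, for a fixed starting configuration $X(0)$, the $\cong_\varepsilon$-equivalence classes with the sets of confidence levels producing a prescribed sequence of influence graphs, to show that each such set is a half-open interval, and then to observe that finitely many half-open intervals partitioning $\R_{\ge0}$ must have the shape claimed. \emph{Step 1 ($\cong_\varepsilon$ versus influence-graph sequences).} For BC-processes $\cP=(X(0),\varepsilon)$ and $\cP'=(X(0),\varepsilon')$ with the same starting configuration I would prove that $\cP\cong_\varepsilon\cP'$ if and only if $\cG(t;\cP)=\cG(t;\cP')$ for all $t\in\N$. If all influence graphs agree, an induction on $t$ using (\ref{eq_insider_graph}) and the updating rule (\ref{eq_update_rule}) shows that all opinion profiles agree. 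Conversely, if the graph sequences differ, let $s$ be minimal with $\cG(s;\cP)\neq\cG(s;\cP')$; for $s\ge1$, statement~(6) of Lemma~\ref{lemma_basic} (applied with $t=s-1$) yields $X(s;\cP)\neq X(s;\cP')$, and for the boundary case $s=0$ the same mechanism applied at time $0$ works: assuming w.l.o.g.\ $\varepsilon<\varepsilon'$, pass to the smallest agent $i^{*}$ whose insider set at time $0$ is enlarged when $\varepsilon'$ is used; by Lemma~\ref{lemma_basic}(3) and the ordering $x_1(0)\le\dots\le x_n(0)$ the new agents only occur on the right and have opinions exceeding $x_{i^{*}}(0)+\varepsilon$, hence exceeding the average of the old insider set, so $x_{i^{*}}(1)$ is shifted strictly and $X(1;\cP)\neq X(1;\cP')$. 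In all cases $\cP\not\cong_\varepsilon\cP'$.

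\emph{Step 2 (one graph sequence realizes a half-open interval).} Fix a graph sequence $\Gamma=(\cG_0,\cG_1,\dots)$ that is realized by at least one confidence level. Since every $n$-agent BC-process freezes within a bound $F$ depending only on $n$ (e.g.\ $F<4n(n^2+1)$), iterating the updating rule (\ref{eq_update_rule}) with insider sets read off from $\Gamma$ via (\ref{eq_insider_graph}) produces opinion profiles $Y(0)=X(0),Y(1),Y(2),\dots$ that depend on $X(0)$ alone (not on $\varepsilon$) and are constant from time $F$ on, so that $\Gamma$ too is constant from time $F$ on. A coupled induction tracking the profile and the graph simultaneously then shows that a level $\varepsilon\ge0$ realizes $\Gamma$ exactly when, for all $0\le t\le F$ and all pairs $\{i,j\}\subseteq A$, one has $\big|Y_i(t)-Y_j(t)\big|\le\varepsilon$ if $\{i,j\}\in E(\cG_t)$ and $\big|Y_i(t)-Y_j(t)\big|>\varepsilon$ otherwise. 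This is a finite intersection of half-lines $[c,\infty)$ (one per edge) and half-open intervals $[0,c)$ (one per non-edge); being non-empty, it equals $[L_\Gamma,U_\Gamma)$ for some $0\le L_\Gamma<U_\Gamma\le\infty$.

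\emph{Step 3 (assembling the partition).} By Lemma~\ref{lemma_finite_number_euqivalence_classes_1} there are only finitely many $\cong_\varepsilon$-classes for the fixed $X(0)$, hence by Step~1 only finitely many realizable graph sequences; by Step~2 their realization sets are finitely many non-empty half-open intervals, pairwise disjoint (the graph sequence of a process is unique) and covering $\R_{\ge0}$. Such a partition of $\R_{\ge0}$ must read $[\varepsilon_0,\varepsilon_1),[\varepsilon_1,\varepsilon_2),\dots,[\varepsilon_l,\varepsilon_{l+1})$ with $0=\varepsilon_0<\varepsilon_1<\dots<\varepsilon_l<\varepsilon_{l+1}=\infty$ (the block containing $0$ starts at $0$; its finite right endpoint is forced to be the left endpoint of the next block; iterate and use finiteness). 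With these $\varepsilon_1<\dots<\varepsilon_l$, any $\varepsilon'\in[\varepsilon_j,\varepsilon_{j+1})$ realizes the same graph sequence as $\varepsilon_j$, so $(X(0),\varepsilon')\cong_\varepsilon(X(0),\varepsilon_j)$ by Step~1, while any $\varepsilon''\in\R_{\ge0}\setminus[\varepsilon_j,\varepsilon_{j+1})$ lies in a different block, hence realizes a different graph sequence, so $(X(0),\varepsilon')\not\cong_\varepsilon(X(0),\varepsilon'')$, again by Step~1.

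\emph{Main obstacle.} I expect the delicate point to be the exact description of the realization set in Step~2: both implications of the ``$\varepsilon$ realizes $\Gamma$ iff the gap inequalities hold'' equivalence must be verified by an induction that simultaneously controls the opinion profile and the influence graph, and the freezing-time bound must be invoked to cut the infinite system down to finitely many inequalities. The $s=0$ boundary case of Step~1 is not literally covered by Lemma~\ref{lemma_basic}(6) and needs the short separate argument above; everything else is bookkeeping.
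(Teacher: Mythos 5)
Your proof is correct, but it is organized differently from the paper's. The paper proves the statement by a time-recursive refinement: it carries along a partition of $\R_{\ge 0}$ into intervals that characterizes equivalence up to time $i$ (the relation $\cong_\varepsilon^i$), and at step $i$ splits each interval at the pairwise distances $\left|x_a(i-1)-x_b(i-1)\right|$ falling inside it; the induction is over time, and the construction terminates by the freezing-time bound. You instead separate the two ingredients: first the identification of $\cong_\varepsilon$-classes with realizable influence-graph sequences, then the observation that the realization set of a fixed sequence $\Gamma$ is a single half-open interval $[L_\Gamma,U_\Gamma)$ cut out by finitely many threshold inequalities on the $\varepsilon$-independent profiles $Y(t)$, and finally a covering argument. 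The underlying content is the same (your $L_\Gamma$ and $U_\Gamma$ are maxima resp.\ minima of exactly the pairwise distances the paper's algorithm inserts as split points), but your version makes the statement ``equivalence class $=$ realization interval of a graph sequence'' explicit and cleanly separates the combinatorics from the assembly, whereas the paper's version is directly algorithmic and matches the breadth-first computation of the $\varepsilon$-switches. Two small remarks: in Step~1 you invoke Lemma~\ref{lemma_basic}(6) with $t=s-1$ to get $X(s)\neq X'(s)$; as your own $s=0$ argument shows, the profiles in fact still agree at time $s$ (they are computed from identical graphs up to time $s-1$) and first differ at time $s+1$, so the paper's statement (6) is off by one as written --- this does not affect your conclusion $\cP\not\cong_\varepsilon\cP'$, and your direct argument via the minimal agent with an enlarged insider set is the correct repair and in fact covers all $s$, not just $s=0$. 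Second, in Step~2 you should say explicitly that $L_\Gamma=0$ when no edge ever occurs and $U_\Gamma=\infty$ when no non-edge ever occurs, so that the empty intersection conventions do not bite; with that, everything goes through.
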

\begin{proof}
  We will prove the statement recursively. In the beginning let $l(0)=0$, $\varepsilon_0^{(0)}=0$, $\varepsilon_1^{(0)}=\infty$, and $\cI^{(0,0)}=[0,\infty)$. For each step $i>0$ we start with $\left(\varepsilon_0^{(i)},\dots,\varepsilon_{l(i-1)+1}^{(i)}\right)=\left(\varepsilon_0^{(i-1)},\dots,\varepsilon_{l(i-1)+1}^{(i-1)}\right)$ and insert further values as described below by looping over all intervals $\cI^{(i,j)}=\big[\varepsilon_j^{(i-1)},\varepsilon_{j+1}^{(i-1)}\big)$ for $0\le j\le l(i-1)$. Given an interval $\cI^{(i,j)}$ let
  $$
    \cN^{(i,j)}=\left(\left\{\left| x_a(i-1)-x_b(i-1)\right|\,:\, a,b,\in A\right\}\cap \cI^{(i.j)}\right)\backslash\left\{\varepsilon_j^{(i-1)}\right\},
  $$
  i.e.\ the set of different distances between the opinions of a pair of agents at time $i$ within the interval $\cI^{(i,j)}$ excluding its left boundary point. Between $\varepsilon_j^{(i-1)}$ and $\varepsilon_{j+1}^{(i-1)}$ we insert the elements of $\cN^{(i,j)}$ in an increasing order.

  We write $(X'(0),\varepsilon')\cong_\varepsilon^t X''(0),\varepsilon'')$ iff $X'(t')=X''(t')$ for all $0\le t'\le t$.
  By induction over $i$ we proof the existence of real values $0<\varepsilon_1^{(i)}<\dots<\varepsilon_{l(i)}^{(i)}<\infty$
  such that for each $\varepsilon'\in \big[\varepsilon_j^{(i)},\varepsilon_{j+1}^{(i)}\big)$,
  where $0\le j\le l(i)$, $\varepsilon_0^{(i)}=0$, and $\varepsilon_{l(i)+1}^{(i)}=\infty$,
  we have $\big(X(0),\varepsilon'\big)\cong_\varepsilon^i \big(X(0),\varepsilon_j^{(i)}\big)$ and $\big(X(0),\varepsilon'\big)\not\cong_\varepsilon^i\big(X(0),\varepsilon''\big)$ for all $\varepsilon''\in\R_{\ge 0}\backslash \big[\varepsilon_j^{(i)},\varepsilon_{j+1}^{(i)}\big)$.

  For the induction start $i=0$ the statement is obviously true. For the inductions step we observe that we have constructed a sequence $0<\varepsilon_1^{(i)}<\dots<\varepsilon_{l(i)}^{(i)}<\infty$. If $\varepsilon'\in \cI^{(i,j)}$ and $\varepsilon''\in \cI^{(i,j')}$ with $j\neq j'$ then we have $(X(0),\varepsilon')\not\cong_\varepsilon^i (X(0),\varepsilon'')$ due to the induction hypothesis $(X(0),\varepsilon')\not\cong_\varepsilon^{i-1} (X(0),\varepsilon'')$.
  If $j=j'$, then the induction hypothesis gives $(X(0),\varepsilon')\cong_\varepsilon^{i-1} (X(0),\varepsilon'')$.
  By construction we have $\cG'(i)=\cG''(i)$, denoting the influence graphs for $(X(0),\varepsilon')$ and $(X(0),\varepsilon'')$,
  iff there exists an index $0\le h\le l(i)$ such that
  $\varepsilon',\varepsilon''\in\big[\varepsilon_h^{(i)},\varepsilon_{h+1}^{(i)}\big)$, which proves the induction step.

  It remains to observe that after at most $F(n)$ steps the BC-processes stabilize and we have $\cN^{(i,j)}=\emptyset$ for all $i\ge F(n)$. So, setting $t=F(n)+1$ and $\left(\varepsilon_0,\varepsilon_1,\dots,\varepsilon_{l+1}\right)=\left(\varepsilon_0^{(t)},\varepsilon_1^{(t)},\dots,\varepsilon_{l(t)+1}^{(t)}\right)$ we obtain the stated result since we have $\cP\cong_\varepsilon\cP'$ iff $\cP\cong_\varepsilon^t\cP'$.
\end{proof}
In words Lemma~\ref{lemma_epsilon_equivalence_structure} says that the $\varepsilon$-equivalence classes are given by $\big[0,\varepsilon_1\big)$, $\big[\varepsilon_1,\varepsilon_2\big)$, $\dots$, $\big[\varepsilon_{l-1},\varepsilon_l\big)$, $\big[\varepsilon_l,\infty\big)$, i.e.\ confidence levels within the same interval correspond to equivalent BC-processes and confidence levels from different intervals correspond to non-equivalent BC-processes. Those values $\varepsilon_i$ with $1\le i\le l$ are called $\varepsilon$-switches in \cite{hegselmann2023bounded}. The stated proof describes an algorithm how to explicitly compute these $\varepsilon$-switches. We remark that this algorithm is the breadth first  version of the depth first type algorithm described in \cite{hegselmann2023bounded}. Lemma~\ref{lemma_finite_number_euqivalence_classes_1} proves the conjecture in \cite[Analytical Note ]{hegselmann2023bounded}. Lemma~\ref{lemma_epsilon_equivalence_structure} completely characterizes the structure of the equivalence sets w.r.t.\ $\cong_\varepsilon$. We remark that the proof of Lemma~\ref{lemma_epsilon_equivalence_structure} also implies that the $\varepsilon$-switches are rational numbers if $X(0)\in\mathbb{Q}^n$. Otherwise the proof still works but on the algorithmic side we have to deal with the problem how to represent and compute with non-rational real numbers. While we do not state any lower bound on the minimal length of the intervals $\big[\varepsilon_j,\varepsilon_{j+1}\big)$ (in terms of the input data $X(0)$), we remark that all of these intervals have a non-zero length at the very least. So, in principle one may find all different BC-process by slowly increasing $\varepsilon$ iteratively by a sufficiently small constant $\Delta>0$.\footnote{This is e.g.\ different in the context of voting where certain scoring rules only exist for singular non-rational choices of parameters if the number of alternatives is larger than two, see e.g.\ \cite{kurz2023art} for a description of the application context.} For the BC-process in Example~\ref{example_1} the $\varepsilon$-switches are given by
$1$, $\tfrac{13}{12}$, $\tfrac{3}{2}$, $2$, $3$.
Clearly, an affine transform $\alpha X(0)+\beta$ transforms the $\varepsilon$-switches by multiplication with $\alpha$. Actually, our last interval in Lemma~\ref{lemma_epsilon_equivalence_structure} has infinite length. However, the initial width is the largest value for $\varepsilon$ that we actually need. So, if $X(0)\subseteq [0,1]$, then we do not need to consider confidence levels larger than $1$ and we may replace $\varepsilon_{l+1}$ by any number larger than $1$. What we have said, especially Lemma~\ref{lemma_epsilon_equivalence_structure}, remains true for higher dimensional BC-processes in $\R^l$.

The proof of Lemma~\ref{lemma_finite_number_euqivalence_classes_1} suggest an even coarser notion of equivalence.
\begin{definition}
  Two BC-processes $\cP'=(X'(0),\varepsilon')$ and $\cP''=(X''(0),\varepsilon'')$ are called \emph{graph equivalent}, in symbols $\cP'\cong_{\cG}\cP''$ iff their influence graphs coincide, i.e.\ $\cG'(t)=\cG''(t)$ for all $t\in\N$.
\end{definition}
\begin{example}
  \label{example_3}
  Let us fix the confidence level at $\varepsilon=1$ and describe the set of starting profiles $X(0)\in\R^4$, assuming the affine normalization $x_1(0)=0$,  which is graph equivalent to the BC-process from Example~\ref{example_1}. As a parameterization we choose $X(0)=(0,x,x+y,x+y+z)$, where $x,y,z\in\R_{\ge 0}$. The specified influence graph $\cG(0)$ gives the constraints $x\le 1$, $y\le 1$, $z\le 1$, $x+y>1$, and $y+z>1$. Note that the constraint $x+y+z>1$ for the missing edge between agent~$1$ and agent~$4$ is implied by the other constraints. Using $\cG(0)$ we compute $X(1)=\left(\tfrac{x}{2},\tfrac{2x+y}{3},x+\tfrac{2y+z}{3},x+y+\tfrac{z}{2}\right)$. With this, the specified influence graph $\cG(1)$ yields the additional constraints $3x+4y+2z>6$ and $2x+4y+3z>6$. Using $\cG(1)$ we compute $X(2)=\left(\tfrac{7x+2y}{12},\tfrac{13x+6y+2z}{18},\tfrac{16x+12y+5z}{18},x+\tfrac{10y+5z}{12}\right)$, so that $\cG(2)$ implies the additional constraints $11x+18y+10z>36$ and $10x+18y+11z>36$. Using $\cG(2)$ we compute $X(3)=\left(\tfrac{47x+18y+4z}{72},\tfrac{79x+42y+14z}{108},\tfrac{94x+66y+29z}{108},\tfrac{68x+54y+25z}{72}\right)$, so that $\cG(3)$ implies the constraints $47x+78y+46z \le 216$ and $46x+78y+47z\le 216$, which are automatically satisfied, as well as the additional constraint $7x+12y+7z>24$. Using $\cG(3)$ we compute
 $X(4)=\left(\tfrac{487x+270y+98z}{648},\tfrac{79x+42y+14z}{108},\tfrac{94x+66y+29z}{108},\tfrac{612x+486y+225z}{648}\right)$, so that $\cG(4)$ implies the constraint $125x+216y+127z\le 648$, which is automatically satisfied. Finally, $\cG(5)$ yields $X(5)=\left(\tfrac{4147x+2592y+1037z}
 {5184},\tfrac{4147x+2592y+1037z}{5184},\tfrac{4147x+2592y+1037z}{5184},\tfrac{4147x+2592y+1037z}{5184}\right)$.

 In order to draw a picture of the feasible region for out parameters $x,y,z$ we additionally assume $x=y$, i.e.\ we restrict to mirror symmetric starting configurations. After plugging in and removing the redundant constraints $x+y>1$, $6x+4y > 6$ we end up with $\left\{(x,y)\in\R_{\ge 0}^2\,:\, x\le 1,y\le 1,7x+6y>12\right\}$. In the Figure~\ref{fig_feasible_region} we have filled the feasible region in blue and marked the edge that is not contained by a thicker line. We can see that also the conditions $x,y\ge 0$ can be removed.
\end{example}
\begin{figure}[htp]
 \begin{center}
  \begin{tikzpicture}[scale=5.8]
    \draw[->] (-0.1,0) -- (1.1,0) node[right] {$x$};
    \draw[->] (0,-0.1) -- (0,1.1) node[above] {$y$};
    \draw (-0.02,1) -- (0.02,1) node[near start, left] {$1$};
    \draw (1,-0.02) -- (1,0.02) node[near start,below] {$1$};

    \draw[very thick,blue] (1,0.8333) -- (0.8571,1) node[left,below] {$7x+6y>12\quad\quad\quad\quad\quad$};

    \filldraw[fill=blue!20,draw=blue!60]
        (1,1) -- (1,0.8333) -- (0.8571,1) -- cycle;

  \end{tikzpicture}
  \caption{Feasible region $\left\{(x,y)\in\R_{\ge 0}^2\,:\, x\le 1,y\le 1,7x+6y>12\right\}$ of Example~\ref{example_3}.}
  \label{fig_feasible_region}
  \end{center}
\end{figure}

\begin{lemma}
  For each $n\in\N_{>0}$ there exists a finite number of equivalence classes with respect to $\cong_{\cG}$.
\end{lemma}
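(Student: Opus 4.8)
The plan is to observe that, by definition, the graph equivalence class of a BC-process $\cP$ is nothing but the sequence $\big(\cG(t;\cP)\big)_{t\in\N}$ of its influence graphs, so it suffices to show that only finitely many such sequences can occur among BC-processes with $n$ agents. Since $\cG(0;\cP)=(A,E(0))$ is a graph on the vertex set $A=\{1,\dots,n\}$, the number $n$ of agents is recovered from any member of a $\cong_{\cG}$-class, so fixing $n$ is consistent and the count below is meaningful.

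The key steps are as follows. First I would recall from the preliminaries that every BC-process with $n$ agents freezes after at most $F(n)\le 4n(n^2+1)$ steps, i.e.\ $X(t)=X(F(n))$ for all $t\ge F(n)$, where this bound is independent of the starting configuration $X(0)$ and the confidence level $\varepsilon$. Since the influence graph $\cG(t)$ is a function of the opinion profile $X(t)$ alone (it is defined through the pairwise distances $|x_i(t)-x_j(t)|$), it follows that $\cG(t)=\cG(F(n))$ for all $t\ge F(n)$; hence the whole sequence $\big(\cG(t)\big)_{t\in\N}$ is determined by its initial segment $\big(\cG(0),\cG(1),\dots,\cG(F(n))\big)$. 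Each $\cG(t)$ is a simple undirected graph on $n$ labelled vertices, of which there are exactly $2^{\binom{n}{2}}$; therefore at most $\big(2^{\binom{n}{2}}\big)^{F(n)+1}$ distinct influence-graph sequences can arise, and this finite number bounds the number of equivalence classes with respect to $\cong_{\cG}$.

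The main obstacle is essentially nonexistent: the one ingredient that genuinely requires the earlier results is that the freezing time admits a bound $F(n)$ that is \emph{uniform} over all starting configurations and all confidence levels, so that a single cutoff applies to all $n$-agent processes simultaneously; the stabilization of the influence-graph sequence past that cutoff is then immediate from the fact that $\cG(t)$ depends only on $X(t)$. I would also add a remark that the displayed count is extremely crude -- it ignores, for instance, that an influence graph here must be the intersection pattern of equal-length intervals on the real line and that consecutive influence graphs are heavily constrained by the updating rule (\ref{eq_update_rule}) -- and that the argument carries over verbatim to the higher-dimensional BC-model, since there too the freezing time is bounded by a function of $n$ (and $l$) only, e.g.\ $O\!\left(n^{10}l^2\right)$ by \cite[Theorem 4.4]{bhattacharyya2013convergence}.
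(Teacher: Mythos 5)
Your proposal is correct and follows essentially the same route as the paper: both arguments combine the uniform freezing-time bound $F(n)$ with the fact that each $\cG(t)$ is one of finitely many labelled graphs on $n$ vertices, so that at most $\bigl(2^{\binom{n}{2}}\bigr)^{F(n)+1}$ influence-graph sequences (hence $\cong_{\cG}$-classes) can occur. The paper's proof additionally points to the linear-programming formulation for deciding which of these sequences are actually realizable, but, as you note, that is not needed for the finiteness claim itself.
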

\begin{proof}
  As in the proof of Lemma~\ref{lemma_finite_number_euqivalence_classes_1} we consider a sequence of length $F(n)$ of possible influence graphs. Now the entries of the starting configuration $X(0)$ are considered as variables and the presence or absence of edges in $\cG(t)$ is modeled as a linear constraint, see Example~\ref{example_4} and Section~\ref{sec_influence_graphs} for details. For each possibility there either exists at least one feasible solution or the feasible region is empty (which does not harm the upper bound).
 \end{proof}
We remark that the statement remains true for higher dimensional BC-processes.

\begin{example}\label{example_4}
  For $n=1$ agent there exists a unique equivalence class with respect to $\cong_{\cG}$. Here $x_1(0)\in\R$ and $\varepsilon\in\R_{\ge 0}$ are arbitrary and we have $X(t)=X(0)$ for all $t\in\N$, i.e.\ a freezing time of zero. For two agents we assume the ordering $x_1(0)\le x_2(0)$. If $x_2(0)-x_1(0)>\varepsilon$, then we have a freezing time of zero and no consensus. If $x_1(0)=x_2(0)$, then we also have a freezing time of zero but we also have consensus. If $x_1(0)<x_2(0)$ and $x_2(0)-x_1(0)\le \varepsilon$, then we have a freezing time of one and a consensus at $\tfrac{x_1(0)+x_2(0)}{2}$. Note that the two latter cases are graph equivalent, so that we will refine the notion of an influence graph in Section~\ref{sec_influence_graphs} in order to achieve that the freezing time is an invariant for the equivalence classes. In general we have a freezing time of zero for all cases where $\varepsilon=0$, so that we assume $\varepsilon>0$ in the following. For $n\ge 3$ agents the situation gets more and more involved, so that we use affine transformations to normalize to $\varepsilon=1$, $x_1(0)=0$ and use the parameterization $X(0)=(0,x,x+y)$.
\begin{itemize}
 \item If $x=0$ and $y=0$, then we have a freezing time of zero and consensus.\\[-10mm]
 \item If $x=0$ and $0<y\le 1$, then we have a freezing time of one and a consensus at $\tfrac{y}{3}$.\\[-10mm]
 \item If $x=0$ and $y>1$, then we have a freezing time of zero and fragmentation.\\[-10mm]
 \item If $0<x\le 1$ and $y=0$, then we have a freezing time of one and a consensus at $\tfrac{2x}{3}$.\\[-10mm]
 \item If $x>1$ and $y=0$, then we have a freezing time of zero and fragmentation.\\[-10mm]
 \item If $x>1$ and $y>1$, then we have a freezing time of zero and fragmentation.\\[-10mm]
 \item If $0<x\le 1$ and $y>1$, then we have a freezing time of one and fragmentation.\\[-10mm]
 \item If $x>1$ and $0<y\le 1$, then we have a freezing time of one and fragmentation.\\[-10mm]
 \item If $0<x\le 1$, $0<y\le 1$, and $x+y\le 1$, then we have a freezing time of one and a consensus at $\tfrac{2x+y}{3}$.\\[-10mm]
 \item If $0<x\le 1$, $0<y\le 1$, and $x+y> 1$, then we have $X(1)=\left(\tfrac{x}{2},\tfrac{2x+y}{3},x+\tfrac{y}{2}\right)$. Since $\tfrac{x+y}{2}\le 1$ we conclude a freezing time of $2$ and a consensus at $\tfrac{13x+5y}{18}$.\\[-10mm]
\end{itemize}
This small discussion shows $F(1)=0$, $F(2)=1$, and $F(3)=2$.
\end{example}

In Section~\ref{sec_influence_graphs} we will state a precise linear programming formulation for the problem whether there exists a starting configuration $X(0)$  that attains a given sequence of influence graphs. In the subsequent sections we will use these methods to gather some insight into properties and counter intuitive phenomena of BC-processes.

Besides the announced studies there is another reason to be interested in graph equivalence. In many studies it has been noted that BC-processes are highly vulnerable to small numerical deviations. When dealing with practical applications, where the values $x_i(0)$ of the starting configurations as well as the confidence level $\varepsilon$ may have to be determined by experiments or estimations, it seems more reasonable that we do not know the desired data points exactly but we can state intervals.
\begin{example}\label{example_5}
  Consider a BC-process for four agents where we only know $x_i(0)\in[i+0.9,i+1.1]$ for $1\le i\le 4$ and $\varepsilon\in[0.9,1.1]$. What can happen? Can we make any meaningful statements on the freezing time, the final width, or the degree of fragmentation? Using the computational methods described in the subsequent sections we state the freezing time can be any number between $0$ and $5$, the number of connected components can be any number between $1$ and $4$, the minimum final width can be zero (for a consensus), and the maximum final width can be $3.2$ (for full segregation in the starting profile). So, essentially almost everything can happen. This is different for larger numbers of agents or tighter constraints on the parameters.
\end{example}

\section{Influence graphs and linear programming}\label{sec_influence_graphs}
Many properties of a BC-process can be read off from its sequence $\left(\cG(t)\right)_{t\in\N}$ of influence graphs. So, here we introduce a few basic graph theory concepts and refer the interested reader to literary any textbook on graph theory for more details. First of all, a \emph{graph} $G=(V,E)$ is given by a set $V$, whose elements are called \emph{vertices} or nodes, and a set $E\subseteq \big\{\{a,b\}\,:\, a,b\in V,a\neq v\big\}$, whose elements are called \emph{edges}. For our influence graphs $\cG(t)$ the set of vertices is given by the set $A$ of agents. There is an edge between agent $i\in A$ and agent $j\in A\backslash\{i\}$ at time $t$ iff the opinions are within the level of confidence, i.e.\ iff $\left|x_i(t)-x_j(t)\right|\le \varepsilon$. Since for a graph $G=(V,E)$ with $n$ vertices each of the ${n\choose 2}$ pairs of different vertices can either be contained in $E$ or not contained in $E$, we have $2^{n\choose 2}$ possible (labeled) graphs. We depict all possible (labeled) graphs in Figure~\ref{fig_labeled_graphs_3}

\begin{figure}
  \begin{center}
  $\cG_1$: \begin{tikzpicture}[node/.style={circle, draw, inner sep=2pt}]
    \node[node,label=below:$1$] (v1) at (0,0) {};
    \node[node,label=below:$2$] (v2) at (1,0) {};
    \node[node,label=below:$3$] (v3) at (2,0) {};
  \end{tikzpicture} \quad\quad\quad\quad
  $\cG_2$: \begin{tikzpicture}[node/.style={circle, draw, inner sep=2pt}]
    \node[node,label=below:$1$] (v1) at (0,0) {};
    \node[node,label=below:$2$] (v2) at (1,0) {};
    \node[node,label=below:$3$] (v3) at (2,0) {};
    \draw (v1) -- (v2);
  \end{tikzpicture} \quad\quad\quad\quad
  $\cG_3$: \begin{tikzpicture}[node/.style={circle, draw, inner sep=2pt}]
    \node[node,label=below:$1$] (v1) at (0,0) {};
    \node[node,label=below:$2$] (v2) at (1,0) {};
    \node[node,label=below:$3$] (v3) at (2,0) {};
    \draw (v2) -- (v3);
  \end{tikzpicture}\\[5mm]
  $\cG_4$: \begin{tikzpicture}[node/.style={circle, draw, inner sep=2pt}]
    \node[node,label=below:$1$] (v1) at (0,0) {};
    \node[node,label=below:$2$] (v2) at (1,0) {};
    \node[node,label=below:$3$] (v3) at (2,0) {};
    \draw[bend left=45] (v1) to (v3);
  \end{tikzpicture} \quad\quad\quad\quad
  $\cG_5$: \begin{tikzpicture}[node/.style={circle, draw, inner sep=2pt}]
    \node[node,label=below:$1$] (v1) at (0,0) {};
    \node[node,label=below:$2$] (v2) at (1,0) {};
    \node[node,label=below:$3$] (v3) at (2,0) {};
    \draw (v1) -- (v2);
    \draw[bend left=45] (v1) to (v3);
  \end{tikzpicture} \quad\quad\quad\quad
  $\cG_6$: \begin{tikzpicture}[node/.style={circle, draw, inner sep=2pt}]
    \node[node,label=below:$1$] (v1) at (0,0) {};
    \node[node,label=below:$2$] (v2) at (1,0) {};
    \node[node,label=below:$3$] (v3) at (2,0) {};
    \draw (v2) -- (v3);
    \draw[bend left=45] (v1) to (v3);
  \end{tikzpicture}\\[5mm]
  $\cG_7$: \begin{tikzpicture}[node/.style={circle, draw, inner sep=2pt}]
    \node[node,label=below:$1$] (v1) at (0,0) {};
    \node[node,label=below:$2$] (v2) at (1,0) {};
    \node[node,label=below:$3$] (v3) at (2,0) {};
    \draw (v1) -- (v2) -- (v3);
  \end{tikzpicture} \quad\quad\quad\quad
  $\cG_8$: \begin{tikzpicture}[node/.style={circle, draw, inner sep=2pt}]
    \node[node,label=below:$1$] (v1) at (0,0) {};
    \node[node,label=below:$2$] (v2) at (1,0) {};
    \node[node,label=below:$3$] (v3) at (2,0) {};
    \draw (v1) -- (v2) -- (v3);
    \draw[bend left=45] (v1) to (v3);
  \end{tikzpicture}
    \caption{Labeled graphs with three vertices.}
    \label{fig_labeled_graphs_3}
  \end{center}
\end{figure}
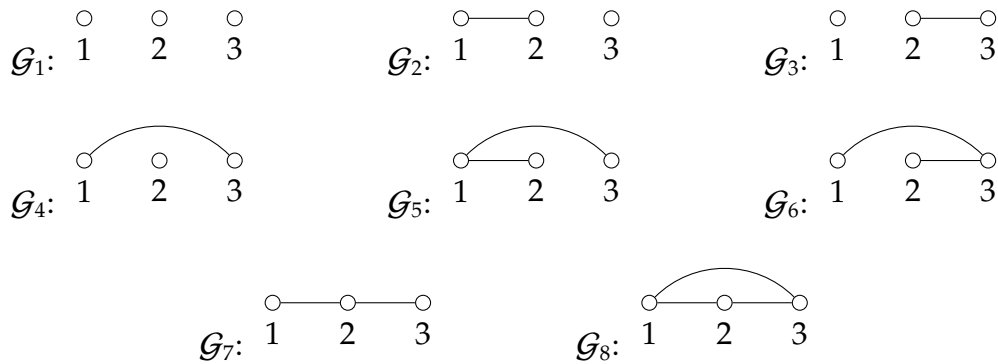

Removing the labels, i.e.\ the names of the vertices, we speak of unlabeled graphs. We say that two graphs $G=(V,E)$ and $G'=(V',E')$ are isomorphic if there exists a bijection $\pi\colon V\to V'$ such that $\{a,b\}\in E$ iff
$\{\pi(a),\pi(b)\}\in E$ for all $a,b\in A$ with $a\neq b$. So, the graphs $\cG_2$, $\cG_3$, $\cG_4$ as well as the graphs $\cG_5$, $\cG_6$, $\cG_7$ are pairwise isomorphic, i.e.\ there exist exactly four non-isomorphic (unlabeled) graphs with three vertices. At this point we remark that unlabeled graphs can be useful to remove the number of possible cases but there are also some subtleties that we have do deal with when using them in our context. As an example we mention that the graph $\cG_4$ does not occur in our discussion on BC-processes for three agents in Example~\ref{example_4} since it is excluded by our specific parameterization. More precisely, the edge $\{1,3\}$ relates to $x+y\le \varepsilon=1$, the non-edge $\{1,2\}$ to $x>\varepsilon=1$, and the non-edge $\{2,3\}$ to $y>\varepsilon=1$, which is an inconsistent linear inequality system, i.e.\ it does not admit a feasible solution.´Of course, we can always stick to labeled graphs to avoid these difficulties with parameterizations or assumed orderings like $x_1(0)\le x_2(0)\le \dots\le x_n(0)$.

A \emph{complete graph} $G=(V,E)$ is a graph where $E$ contains all $2$-subsets of $V$, i.e.\ the complete graph for three vertices is $\cG_($. We remark that if $\cG(t)$ is a complete graph for a given BC-process $\cP$, then we either have that all entries of $X(t)$ are equal, i.e.\ a consensus, or this property holds for $X(t+1)$.
So, we can detect a consensus more or less directly from the influence graphs. In order to describe fragmentation, we introduce more graph theoretic notation. Given a graph $G=(V,E)$, a \emph{path} between $U_0\in V$ and $U_l\in V$ is a sequence of vertices $\left(U_0,U_1,\dots,U_l\right)$ with $U_i\in V$ such that $\left\{U_i,U_{i+1}\right\}\in E$ for all $0\le i\le l$. In other words, there is a path between two vertices if there exists a sequence of edges connecting them.
The \emph{connected component} $\cC(U)$ of a vertex $U\in V$ is the set of vertices $U'\in V$ that admit a path between $U$ and $U'$ (including $U$ itself). Being contained in a connected component is an equivalence relation, i.e.\ the set of vertices is partitioned into disjoint connected components. E.g.\ the graph $\cG_4$ consists of the two connected components $\{1,3\}$ and $\{2\}$. In our context we write $\cC_i(t)$ for the connected component of agent $i\in A$ and time $t\in \N$. The number of connected components is a measurement for fragmentation. Clearly, there can be at most $n$ connected components and this maximum can always be attained in our context if the initial opinions $x_i(0)$ are pairwise different and the confidence level $\varepsilon$ is sufficiently small (e.g.\ $\varepsilon=0$). For $n=3$ graph $\cG_3$ has three connected components. If a given graph $G=(V,E)$ has a unique connected component, which then has to equal $V$, then we say that the graph is \emph{connected} and \emph{disconnected} otherwise. For three vertices the connected graphs are given by $\cG_5$, $\cG_6$, $\cG_7$, and $\cG_8$. In \cite{hegselmann2023bounded} the author speaks of an $\varepsilon$-profile if the influence graph $\cG(t)$ is connected.

Let us now face the problem to decide whether a given graph can be the influence graph of a given BC-process. The corresponding analytical description can be state easily:
\begin{lemma}
  Let $\cP=(X(0),\varepsilon)$ be a BC-process for $n$ agents and $G=(V,E)$ be a graph with vertex set $\{1,\dots,n\}$. Then $G$ is the influence graph of $\cP$ at time $t\in\N$ iff $\left|x_i(t)-x_j(t)\right|\le\varepsilon$ for all $\{i,j\}\in E$ and $\left|x_i(t)-x_j(t)\right|>\varepsilon$ for all $\{i,j\}\notin E$ with $1\le i<j\le n$.
\end{lemma}
Of course we did nothing more than spelling out the conditions for $G=\cG(t)$ according to the definition of the influence graph at time $t$. However this is useful nevertheless if we introduce variables and assume $x_1(0)\le x_2(0)\le \dots x_n(0)$.
\begin{lemma}
  A graph $G(V,E)$ can be the the influence graph $\cG(0)$ of a BC-process for $n$ agents if there exist $x_1,\dots,x_n\in\R$ and $\varepsilon\in\R_{\ge 0}$ satisfying $x_1\le x_2\le\dots\le x_n$, $ x_i-x_j\le\varepsilon$ for all $\{i,j\}\in E$ with $i>j$, and $x_i-x_j>\varepsilon$ for all $\{i,j\}\notin E$ with $n\ge i>j\ge 1$.
\end{lemma}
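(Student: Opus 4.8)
The plan is to read the statement off the preceding lemma, instantiated at time $t=0$; the only real step is to rewrite the absolute-value conditions $\left|x_i(0)-x_j(0)\right|\lesseqgtr\varepsilon$ as conditions on the signed differences $x_i-x_j$ with $i>j$, using the ordering assumption.

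First I would note that a BC-process $\cP=(X(0),\varepsilon)$ is specified by \emph{freely} choosing a starting configuration $X(0)\in\R^n$ together with a confidence level $\varepsilon\in\R_{\ge 0}$, the only restriction on $X(0)$ being the standing convention $x_1(0)\le\dots\le x_n(0)$. So, given reals $x_1\le x_2\le\dots\le x_n$ and $\varepsilon\ge 0$ satisfying the inequalities in the statement, I would set $X(0):=(x_1,\dots,x_n)$ — a legitimate sorted starting configuration — and keep $\varepsilon$, obtaining a BC-process $\cP=(X(0),\varepsilon)$. It remains to check $\cG(0)=G$. By the preceding lemma with $t=0$, this holds iff $\left|x_i(0)-x_j(0)\right|\le\varepsilon$ for every $\{i,j\}\in E$ and $\left|x_i(0)-x_j(0)\right|>\varepsilon$ for every $\{i,j\}\notin E$. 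Since $x_1\le\dots\le x_n$, writing each unordered pair with the larger index first gives $\left|x_i(0)-x_j(0)\right|=x_i-x_j$ for $i>j$, so these are precisely the hypotheses $x_i-x_j\le\varepsilon$ for $\{i,j\}\in E$ with $i>j$ and $x_i-x_j>\varepsilon$ for $\{i,j\}\notin E$ with $n\ge i>j\ge 1$. Hence $\cG(0)=G$, which is the asserted ``if'' direction.

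The converse runs identically: if $G=\cG(0)$ for some (necessarily sorted) BC-process, then $x_i:=x_i(0)$ and $\varepsilon$ satisfy all the stated inequalities by the preceding lemma, so the lemma may in fact be read as an equivalence — which is exactly what is wanted when translating realizability of $\cG(0)$ into a system of linear constraints, and, as a sanity check, it reproduces the earlier observation that the labeled graph $\cG_4$ of Figure~\ref{fig_labeled_graphs_3} is unrealizable, the corresponding system $x_3-x_1\le\varepsilon$, $x_2-x_1>\varepsilon$, $x_3-x_2>\varepsilon$ having no solution. I do not expect a genuine obstacle here: the proof is a single bookkeeping step layered on the preceding lemma, and the only point demanding care is to fix, for each unordered pair, the representative with the larger index first, so that the absolute value $\left|x_i-x_j\right|$ collapses to $x_i-x_j\ge 0$ and the $\le$/$>$ relations carry over unchanged.
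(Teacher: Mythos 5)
Your proposal is correct and matches the paper's (implicit) argument: the paper states this lemma as an immediate reformulation of the preceding one at $t=0$, with the ordering $x_1\le\dots\le x_n$ used exactly as you do to replace $\left|x_i-x_j\right|$ by $x_i-x_j$ for $i>j$. Your added remark that the statement is in fact an equivalence is consistent with how the paper subsequently uses it.
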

So, we have transferred the problem to the question whether a linear inequality system admits a feasible solution, i.e.\ a choice of values for all variables
such that all constraints are satisfied. Strict inequalities are a mess for numerics as well as for mathematical properties, so that we further reformulate to an optimization problem, more precisely a linear program (LP).
\begin{lemma}
  \label{lemma_lp_realization_check}
  A (suitably labeled) graph $G(V,E)$ can be the the influence graph $\cG(0)$ of a BC-process for $n$ agents if the linear program
  $$
  \begin{aligned}
  \text{maximize} \quad & \Delta \\
  \text{subject to}\quad \\
    x_i-x_j -\varepsilon & \le 0\quad\forall \{i,j\}\in E\text{ with } i>j \\
    x_i-x_j -\varepsilon-\Delta & \ge 0\quad\forall \{i,j\}\notin E\text{ with } 1\le j<i\le n \\
    x_i-x_{i+1} &\le 0\quad\forall 1\le i\le n-1\\\
    \varepsilon &\ge 0
  \end{aligned}
  $$
  admits a solution with target value strictly larger than $0$.
\end{lemma}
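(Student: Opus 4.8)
The plan is to reduce the statement directly to the strict‑inequality characterization of realizable influence graphs obtained in the two preceding lemmas; essentially all that remains is to relate the strict inequalities to the slack variable $\Delta$, and to keep track of the (harmless) fact that the objective may be unbounded.

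First I would record that the constraints of the linear program, after deleting $\Delta$ and rewriting each non‑edge constraint $x_i-x_j-\varepsilon-\Delta\ge 0$ as $x_i-x_j-\varepsilon\ge 0$, are exactly the non‑strict relaxation of the system $x_1\le\dots\le x_n$, $\varepsilon\ge 0$, $x_i-x_j\le\varepsilon$ for all edges $\{i,j\}$ with $i>j$, and $x_i-x_j>\varepsilon$ for all non‑edges $\{i,j\}$ with $i>j$. By the preceding two lemmas, for a suitably labeled $G$ — i.e.\ one whose vertex labels are chosen consistently with a non‑decreasing order of the opinions, which is without loss of generality by Lemma~\ref{lemma_basic}(1) — realizability as $\cG(0)$ of some BC‑process for $n$ agents is equivalent to this strict system admitting a feasible point.

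For the implication ``LP has a feasible point with $\Delta>0$ $\Rightarrow$ $G$ realizable'' I would take such a point $(x^{*},\varepsilon^{*},\Delta^{*})$: for every non‑edge $\{i,j\}$ with $i>j$ we then have $x_i^{*}-x_j^{*}\ge\varepsilon^{*}+\Delta^{*}>\varepsilon^{*}$, while the remaining LP constraints are verbatim the remaining constraints of the strict system, so $(x^{*},\varepsilon^{*})$ realizes $G$. For the converse, starting from a point $(x^{*},\varepsilon^{*})$ feasible for the strict system, I would set $\Delta^{*}:=\min\{\,x_i^{*}-x_j^{*}-\varepsilon^{*} : \{i,j\}\notin E,\ n\ge i>j\ge 1\,\}$ when $G$ is not complete, and $\Delta^{*}:=1$ when $G$ is complete. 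Since this is a minimum over finitely many strictly positive quantities (or the constraint set over which we minimize is empty), we get $\Delta^{*}>0$, and $(x^{*},\varepsilon^{*},\Delta^{*})$ is LP‑feasible with objective value $\Delta^{*}>0$.

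I do not expect a genuine obstacle here; the two points that deserve a word of care are that the LP objective is unbounded precisely when $G$ is complete (hence trivially realizable, e.g.\ by equal opinions), so the criterion must be read as ``admits a feasible point with target value strictly larger than $0$'' rather than as the attainment of a finite optimum, and that replacing the strict inequalities by a single uniform slack $\Delta$ is legitimate only because $E$ is finite, so that the least slack over all non‑edge constraints at a strictly feasible point is itself strictly positive.
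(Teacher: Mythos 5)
Your proof is correct and matches the paper's (implicit) argument: the lemma appears in the paper without a separate proof, as a direct reformulation of the preceding strict-inequality criterion, and your translation between strict feasibility and LP-feasibility with $\Delta>0$ (taking the minimum slack over the finitely many non-edge constraints in the converse direction) is exactly what is intended. One side remark is inaccurate: since all constraints are homogeneous of degree one, the LP objective is unbounded whenever \emph{any} feasible point with $\Delta>0$ exists, not only when $G$ is complete --- this is why the paper adds the normalization $x_1=0$, $\varepsilon=1$, $x_n\le n$ to restore boundedness --- but this does not affect your argument, which correctly reads the criterion as the existence of a feasible point with strictly positive objective value rather than the attainment of a finite optimum.
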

We observe that our linear program is feasible since $x_1=\dots,x_n=\varepsilon=\Delta=0$ satisfies all constraints.
If we want to ensure that our linear program is unbounded, then we can utilize affine transformations and additionally require $x_1=0$, $\varepsilon=1$, $x_n\le n$. Here e.g.\ $x_1=\dots=x_n=0$, $\varepsilon=1$, $Delta=-1$ is a feasible solution.
Since all coefficients are rational numbers e.g.\ the simplex algorithm can be evaluated with exact arithmetic and compute a rational solution $\left(x^\star,\Delta^\star\right)\in \mathbb{Q}_{\ge 0}^{n+1}$ or decide that no such solution exists. Clearly we can decide exactly whether the computed $\Delta^\star\in\mathbb{Q}$ is larger than zero or not.

\begin{example}\label{example_6}
   Let $\cG_1$ be the graph with four vertices given by the path $(1,2,3,4)$, i.e.\ $\cG_1$ equals $\cG(0)=\cG(1)$ in Example~\ref{example_2}. As $\cG_2$ we choose a complete graph with four vertices, i.e.\ $\cG_2$ equals $\cG(3)=\cG(4)$ in Example~\ref{example_2}. Now we want to computationally check whether a BC-process can start with $\cG_1$ as influence graph and then transition to $\cG_2$ as influence graph in the next step. Here we can utilize Lemma~\ref{lemma_lp_realization_check} as follows. First we use it for $\cG_1$ with variables $\left(x_1^1,\dots,x_4^1,\varepsilon,\Delta\right)$ and for $\cG_2$ with variables $\left(x_1^2,\dots,x_4^2,\varepsilon,\Delta\right)$ adding the normalization $x_1^1=0$, $x_4^1\le 1$, $\varepsilon=1$. Then we relate $X^1=\left(x_1^1,x_2^1,x_3^1,x_4^1\right)$ and $X^2=\left(x_1^2,x_2^2,x_3^2,x_4^2\right)$ via the updating rule (\ref{eq_update_rule}):
   \begin{align}
   \text{maximize} \quad & \Delta \\
   \text{subject to}\quad \nonumber\\
   x_2^1-x_1^1-\varepsilon &\le 0 \\
   x_3^1-x_2^1-\varepsilon &\le 0 \\
   x_4^1-x_3^1-\varepsilon &\le 0 \\
   x_3^1-x_1^1-\varepsilon-\Delta &\ge 0 \\
   x_4^1-x_1^1-\varepsilon-\Delta &\ge 0 \\
   x_4^1-x_2^1-\varepsilon-\Delta &\ge 0 \\
   x_1^1-x_2^1 &\le 0 \\
   x_2^1-x_3^1 &\le 0 \\
   x_3^1-x_4^1 &\le 0 \\
   x_1^1 &=0\\
   x_4^1 &\le 4\\
   \varepsilon &=1\\
   \end{align}\pagebreak
   \begin{align}
   x_2^2-x_1^2-\varepsilon &\le 0 \label{ie_lp_first}\\
   x_3^2-x_1^2-\varepsilon &\le 0 \\
   x_4^2-x_1^2-\varepsilon &\le 0 \label{ie_lp_essential}\\
   x_3^2-x_2^2-\varepsilon &\le 0 \\
   x_4^2-x_2^2-\varepsilon &\le 0 \\
   x_4^2-x_3^2-\varepsilon &\le 0 \label{ie_lp_last}\\
   x_1^2-x_2^2 &\le 0 \\
   x_2^2-x_3^2 &\le 0 \\
   x_3^2-x_4^2 &\le 0 \\
   \tfrac{1}{2} x_1^1+\tfrac{1}{2} x_2^1 -x_1^2 & = 0\\
   \tfrac{1}{3} x_1^1+\tfrac{1}{3} x_2^1+\tfrac{1}{3} x_3^1 -x_2^2 & = 0\\
   \tfrac{1}{3} x_2^1+\tfrac{1}{3} x_3^1+\tfrac{1}{3} x_4^1 -x_3^2 & = 0\\
   \tfrac{1}{2} x_3^1+\tfrac{1}{2} x_4^1 -x_4^2 & = 0\\
    \varepsilon &\ge 0
  \end{align}
  Using an LP solver we can compute the optimal solution given by $X^1=(0,1,1,2)$, $X^2=\left(\tfrac{1}{2},\tfrac{2}{3},\tfrac{4}{3},\tfrac{3}{2}\right)$, $\varepsilon=1$, and $\Delta=0$, so that the direct transition from $\cG_1$ to $\cG_2$ (as subsequent influence graphs) is impossible.
  We have coinciding intermediate influence graphs $\cG(2)$ in Example~\ref{example_2} and $\cG(3)$ in Example~\ref{example_1}.
\end{example}

\medskip

Of course we can use Lemma~\ref{lemma_lp_realization_check}, as indicated in Example~\ref{example_6}, to computationally check whether any given sequence of graphs $\cG_1,\dots,\cG_l$ can be realized as a subsequent influence graphs of a BC-process. As we see in Example~\ref{example_6} a rather large number of constraints is generated. However, some of them are redundant and can be dropped without changing the feasibility region. We already have written down the constraint $\varepsilon\ge 0$ for $\cG_1$ and $\cG_2$, as required by Lemma~\ref{lemma_lp_realization_check}, only once. Since we use the normalization $\varepsilon=1$ anyway, we can drop it. Due to the inheritance of an initial ordering of the opinions, see Lemma~\ref{lemma_basic}.(1), the integration of the updating rule~(\ref{eq_update_rule}) allows us to remove the ordering constraints $x_i-x_{i+1}$ for all but the first graph. There is another hidden dominance relation. E.g.\ we can replace the Inequalities (\ref{ie_lp_first})-(\ref{ie_lp_last}) in the LP of Example~\ref{example_6} just by Inequality~(\ref{ie_lp_essential}). In order to describe the general phenomenon we introduce more notation.
\begin{definition}
  Let $\cP=(X(0),\varepsilon)$ be a BC-process for $n$ agents with $x_1(0)\le\dots\le x_n(0)$. By $l_i(t)$ we denote the smallest integer $j\ge 1$ such that $x_i(t)-x_j(t)\le \varepsilon$ and by $r_i(t)$ the the largest integer $h\le n$ such that $x_h(t)-x_i(t)\le \varepsilon$, where $t\in \N$ and $1\le i\le n$.
\end{definition}
In words, $l_i(t)$ is the leftmost agent (having the smallest possible index if equality occurs) that still influences agent~$i$ at time $t$. Similarly, $r_i(t)$ is the rightmost agent (having the largest possible index if equality occurs) that still influences agent~$i$ at time $t$. The assume ordering $x_1(0)\le\dots\le x_n(0)$ and Lemma~\ref{lemma_basic}.(1) directly implies:
\begin{lemma}\label{lemma_insider_interval}
If $\cP=(X(0),\varepsilon)$ is a BC-process with $n$ agents satisfying $x_1(0)\le \dots\le x_n(0)$, then we have
\begin{equation}
  I_i(t)=\left\{l_i(t),l_{i+1}(t),\dots,r_i(t)\right\}
\end{equation}
for all $i\in A$ and $t\in\N$.
\end{lemma}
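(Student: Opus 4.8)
The plan is to exploit the fact that, once the agents are ordered by their initial opinions, this ordering persists for all later times, so that the absolute values in the definition of $I_i(t)$ can be resolved by sign. Concretely, I would first apply Lemma~\ref{lemma_basic}.(1) to the consecutive pairs $(i,i+1)$ to conclude $x_1(t)\le x_2(t)\le\dots\le x_n(t)$ for every $t\in\N$. Then, for a fixed $i\in A$ and an arbitrary index $j$, I would distinguish the two cases $j\le i$ and $j\ge i$, in which $\left|x_i(t)-x_j(t)\right|$ equals $x_i(t)-x_j(t)$ respectively $x_j(t)-x_i(t)$.

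In the case $j\le i$, the quantity $x_i(t)-x_j(t)$ is weakly decreasing in $j$ because the $x_j(t)$ are weakly increasing; hence, restricted to $j\in\{1,\dots,i\}$, the condition $x_i(t)-x_j(t)\le\varepsilon$ picks out an upper segment $\{l,l+1,\dots,i\}$ of $\{1,\dots,i\}$, and by definition its left endpoint is exactly $l=l_i(t)$. Note that $l_i(t)\le i$ automatically, since $j=i$ already yields $x_i(t)-x_j(t)=0\le\varepsilon$. Symmetrically, for $j\in\{i,\dots,n\}$ the quantity $x_j(t)-x_i(t)$ is weakly increasing in $j$, so the condition $x_j(t)-x_i(t)\le\varepsilon$ picks out an initial segment $\{i,i+1,\dots,r\}$ of $\{i,\dots,n\}$ whose right endpoint is exactly $r=r_i(t)\ge i$. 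Together with $j\in I_i(t)\iff\left|x_i(t)-x_j(t)\right|\le\varepsilon$, this describes $I_i(t)\cap\{1,\dots,i\}$ and $I_i(t)\cap\{i,\dots,n\}$ completely.

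Taking the union of the two segments and using $l_i(t)\le i\le r_i(t)$ then gives $I_i(t)=\{l_i(t),l_i(t)+1,\dots,r_i(t)\}$, the claimed contiguous block of indices. There is essentially no obstacle here beyond bookkeeping; the only point requiring a moment's care is that $l_i(t)$ and $r_i(t)$, although defined by quantifying over all of $\{1,\dots,n\}$, necessarily satisfy $l_i(t)\le i\le r_i(t)$, so that the two one-sided segments overlap in the index $i$ and glue to a single interval rather than a possibly disconnected set. I would also remark that the argument only uses the ordering of the opinions at the single time $t$, which is available in the one-dimensional model but not in $\R^l$ for $l\ge 2$, so the statement is genuinely one-dimensional.
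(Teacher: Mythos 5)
Your proof is correct and follows exactly the route the paper intends: the paper gives no separate argument, stating only that the preserved ordering from Lemma~\ref{lemma_basic}.(1) ``directly implies'' the claim, and your write-up simply fills in the monotonicity/sign bookkeeping of that same argument. No discrepancies to report.
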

\begin{definition}
   Let $\cG=(V,E)$ be a graph with vertex set $\{1,\dots,n\}$. By $l(i)$ we denote the smallest integer $1\le j<i$ such that $\{j,i\}\in E$ and set $l(i)=i+1$ if no such $j$ exists, where $1\le i\le n$. Similarly, by $r(i)$ we denote the largest integer $i<h\le n$ such that $\{i,h\}\in E$ and we set $r(i)=i-1$ is no such integer $h$ exists, where $1\le i\le n$.
\end{definition}
With this, Lemma~\ref{lemma_insider_interval} directly implies:
\begin{lemma}\label{lemma_edge_intervals}
  Let $\cG=(V,E)$ be a graph with vertex set $V=\{1,\dots,n\}$. If there exists a BC-process $\cP=(X(0),\varepsilon)$ with $n$ agents satisfying $x_1(0)\le \dots\le x_n(0)$ and $\cG=\cG(t)$ for some $t\in\N$, then we have
  \begin{equation}
    E=\bigcup_{1\le i\le n} \big\{\{i,j\}\,:\, l(i)\le j\le r(i), j\neq i\big\}.
  \end{equation}
\end{lemma}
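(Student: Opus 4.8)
The plan is to deduce the identity directly from Lemma~\ref{lemma_insider_interval}, which already establishes the essential point: each insider set $I_i(t)$ is a contiguous block of indices $\{l_i(t),l_i(t)+1,\dots,r_i(t)\}$. The only remaining work is bookkeeping, namely matching the graph-theoretic pointers $l(i),r(i)$ read off from $\cG$ with the analytic pointers $l_i(t),r_i(t)$ attached to the process, including their boundary conventions. First I would record two preliminary facts. By Lemma~\ref{lemma_basic}.(1) the ordering is preserved, so $x_1(t)\le x_2(t)\le\dots\le x_n(t)$; and since $\cG=\cG(t)$, a pair $\{i,j\}$ with $i\ne j$ lies in $E$ if and only if $|x_i(t)-x_j(t)|\le\varepsilon$, equivalently $j\in I_i(t)$ (equivalently $i\in I_j(t)$ by Lemma~\ref{lemma_basic}.(3), which is exactly what makes this edge relation symmetric and the undirected graph $\cG(t)$ well defined).

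Next I would fix an agent $i$ and compare the two left pointers. Because $j\mapsto x_j(t)$ is weakly increasing, the map $j\mapsto x_i(t)-x_j(t)$ is weakly decreasing, so the set of $j\le i$ with $x_i(t)-x_j(t)\le\varepsilon$ is an initial-to-$i$ interval $\{l_i(t),l_i(t)+1,\dots,i\}$. Intersecting with $\{1,\dots,i-1\}$ and comparing with the definition of $l(i)$ (the smallest $j<i$ with $\{j,i\}\in E$, or $i+1$ if there is none) gives $l(i)=l_i(t)$ when $l_i(t)<i$ and $l(i)=i+1$ when $l_i(t)=i$. Symmetrically, $r(i)=r_i(t)$ when $r_i(t)>i$ and $r(i)=i-1$ when $r_i(t)=i$. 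A short case distinction over the four combinations then shows, in every case,
\[
  \big\{\,j : l(i)\le j\le r(i),\ j\ne i\,\big\}=\{l_i(t),l_i(t)+1,\dots,r_i(t)\}\setminus\{i\}=I_i(t)\setminus\{i\},
\]
where the second equality is Lemma~\ref{lemma_insider_interval}; in the degenerate case $l_i(t)=r_i(t)=i$ both sides are empty precisely because then $l(i)=i+1>i-1=r(i)$.

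Finally I would take the union over $i$: since $\{i,j\}\in E$ holds iff $j\in I_i(t)\setminus\{i\}$, we obtain
\[
  E=\bigcup_{1\le i\le n}\big\{\{i,j\}:j\in I_i(t)\setminus\{i\}\big\}=\bigcup_{1\le i\le n}\big\{\{i,j\}:l(i)\le j\le r(i),\ j\ne i\big\},
\]
which is the asserted formula. I do not expect a genuine obstacle here; the only care needed is the boundary bookkeeping, so that isolated vertices and the outermost ``dangling'' endpoints of each agent's influence range contribute no spurious edges. It is worth stressing in the write-up that this is exactly where the hypothesis ``$\cG$ arises as an influence graph'' is used — it supplies the interval property of $I_i(t)$ from Lemma~\ref{lemma_insider_interval} — since for an arbitrary labelled graph the displayed union can strictly contain $E$, as already illustrated by $\cG_4$ in Figure~\ref{fig_labeled_graphs_3}.
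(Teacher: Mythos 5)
Your proposal is correct and follows exactly the route the paper intends: the paper gives no written proof beyond asserting that Lemma~\ref{lemma_insider_interval} \emph{directly implies} the identity, and your argument is precisely that derivation with the boundary conventions for $l(i)$ and $r(i)$ checked explicitly. The four-case bookkeeping matching $l(i),r(i)$ to $l_i(t),r_i(t)$ is sound and merely makes explicit what the paper leaves to the reader.
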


\begin{lemma}
  In Lemma~\ref{lemma_lp_realization_check} we can replace the constraints $x_i-x_j-\varepsilon\le 0$ for all $\{i,j\}\in E$ with $i>j$ by $x_i-x_j-\varepsilon\le 0$ for all $\{i,j\}\in E'$, where
  \begin{equation}
    E'=\left(\big\{1,r(i)\big\}\cup\big\{\{i,r(i)\}\,:\,2\le i\le n,r(i)>r(i-1),r(i)>i\big\}\right) \backslash \big\{\{0,1\}\big\}.
  \end{equation}
\end{lemma}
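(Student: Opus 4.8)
The plan is to prove that the two linear systems have exactly the same feasible region. Since every pair appearing in $E'$ is a genuine edge of $\cG$ (the pair $\{i,r(i)\}$ occurs only with $r(i)>i$, and $\{1,r(1)\}$ survives the removal of $\{0,1\}$ only when $r(1)>1$, and in both cases $\{i,r(i)\}\in E$ by the definition of $r$), we have $E'\subseteq E$, so passing to $E'$ can only enlarge the feasible set. All the work is in the reverse inclusion: using the ordering constraints $x_1\le x_2\le\dots\le x_n$ that are already present in the LP of Lemma~\ref{lemma_lp_realization_check}, the constraints indexed by $E'$ must imply every constraint $x_i-x_j-\varepsilon\le 0$ with $\{i,j\}\in E$ and $i>j$.

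The first step is to shrink the family of edge constraints to one indexed by right–neighbours. If $\{i,j\}\in E$ with $i>j$, then $i$ is a right–neighbour of $j$, so $r(j)\ge i>j$, and the ordering gives $x_i\le x_{r(j)}$, hence $x_i-x_j\le x_{r(j)}-x_j$. Thus the full block $\{x_i-x_j-\varepsilon\le 0:\{i,j\}\in E,\ i>j\}$ is implied by the sub-block $\{x_{r(j)}-x_j-\varepsilon\le 0:\ 1\le j\le n,\ r(j)>j\}$; conversely each $\{j,r(j)\}$ with $r(j)>j$ is an edge, so this sub-block is contained in the original edge constraints. It therefore suffices to derive each inequality $x_{r(j)}-x_j-\varepsilon\le 0$ (for $j$ with $r(j)>j$) from the $E'$-constraints.

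The second step is a descending-chain argument. Fix $j$ with $r(j)>j$. If $\{j,r(j)\}\in E'$ — i.e.\ $j=1$ (so $r(1)>1$) or $j\ge 2$ with $r(j)>r(j-1)$ — then the desired inequality is one of the retained constraints and we are done. Otherwise $j\ge 2$ and $r(j)\le r(j-1)$; together with $r(j)>j$ this yields $r(j-1)\ge r(j)>j>j-1$, so $j-1$ has a right–neighbour, and from $r(j)\le r(j-1)$ and the ordering we get $x_{r(j)}\le x_{r(j-1)}$ and $x_{j-1}\le x_j$, whence $x_{r(j)}-x_j\le x_{r(j-1)}-x_{j-1}$. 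Replacing $j$ by $j-1$ and iterating, the index strictly decreases while these pointwise dominations compose, so after finitely many steps we reach an index $k^\ast$ with $\{k^\ast,r(k^\ast)\}\in E'$, and then $x_{r(j)}-x_j\le x_{r(k^\ast)}-x_{k^\ast}\le\varepsilon$. The chain cannot overshoot the left end: at any index reached we have $r(\cdot)\ge r(j)>j\ge 2$, so in particular $r(1)>1$, which makes $1$ itself a retained index. This closes the implication and hence the proof.

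The step I expect to be the main obstacle is the careful bookkeeping around the degenerate conventions in the definition of $E'$: the rule $r(i)=i-1$ when $i$ has no right–neighbour, the identification of the removed pair $\{0,1\}$ with the case $r(1)=0$ (in which case vertex $1$ is never the larger endpoint of an edge, so no edge constraint involving it has to be recovered), and the verification that the descending chain always terminates at a genuinely retained index rather than falling off past index $1$. Everything else is just the two monotonicity observations above applied repeatedly; no appeal to realizability or to Lemma~\ref{lemma_edge_intervals} is needed, although the latter supplies the conceptual picture that the neighbourhoods of an influence graph are intervals.
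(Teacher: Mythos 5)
Your proof is correct and complete. The paper in fact states this lemma without any proof, so there is nothing to compare against; your argument supplies exactly the justification one would expect. The two ingredients -- first reducing every edge constraint $x_i-x_j-\varepsilon\le 0$ to the right-neighbour constraint $x_{r(j)}-x_j-\varepsilon\le 0$ via the ordering $x_1\le\dots\le x_n$, and then the descending-chain domination $x_{r(j)}-x_j\le x_{r(j-1)}-x_{j-1}$ whenever $r(j)\le r(j-1)$, terminating at a retained index because $r(k)\ge r(j)>j$ persists along the chain and index $1$ is always retained when reached -- are handled carefully, including the degenerate conventions $r(i)=i-1$ and the removal of $\{0,1\}$ (which is also the sensible reading of the paper's slightly garbled expression $\big\{1,r(i)\big\}$, which should be the singleton set containing the pair $\{1,r(1)\}$). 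Your observation that the equivalence of feasible regions holds for arbitrary graphs, with no appeal to realizability or to Lemma~\ref{lemma_edge_intervals}, is a correct and worthwhile remark.
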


\begin{lemma}
   In Lemma~\ref{lemma_lp_realization_check} we can replace the constraints $x_i-x_j-\varepsilon-\Delta\ge 0$ for all $\{i,j\}\notin E$ with $n\ge i>j\ge 1$ by $x_i-x_j-\varepsilon\le 0$ for all $\{i,j\}\in \widetilde{E}$ with $i>j$, where
   \begin{equation}
     \widetilde{E}=\big\{\{l(i)-1,i\}\,:2\le i\le n, 1<l(i)<i,l(i)>l(i-1) \big\}.
   \end{equation}
\end{lemma}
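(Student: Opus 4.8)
The plan is to show that the full set of "non-edge" constraints $x_i - x_j - \varepsilon - \Delta \ge 0$ (equivalently $x_i - x_j > \varepsilon$ for $\{i,j\}\notin E$ with $i>j$) is implied by the much smaller subsystem indexed by $\widetilde E$, together with the edge constraints and the ordering constraints $x_1 \le \dots \le x_n$ already present in the LP of Lemma~\ref{lemma_lp_realization_check}. The reverse direction --- that the small subsystem is implied by the full one --- is trivial since $\widetilde E$ only collects non-edges: for each $i$ with $1<l(i)<i$, the pair $\{l(i)-1,i\}$ is a non-edge (by minimality of $l(i)$, agent $l(i)-1$ does not influence agent $i$), so $\{l(i)-1,i\}\notin E$ and the constraint $x_i - x_{l(i)-1} > \varepsilon$ is one of the original ones. (One should probably also note that the statement as typed says "replace \dots\ by $x_i - x_j - \varepsilon \le 0$" but clearly means $x_i - x_j - \varepsilon - \Delta \ge 0$, i.e.\ $x_i - x_j > \varepsilon$, consistent with the orientation of the other non-edge constraints; I would fix this typo in passing.)

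The substantive direction is the forward implication. First I would fix an arbitrary non-edge $\{a,b\}$ with $a>b$, so $x_a - x_b > \varepsilon$ must be derived. The key observation is that $\{a,b\}\notin E$ with $b<a$ forces $b < l(a)$ (if $l(a)\le b$, then by Lemma~\ref{lemma_edge_intervals} the pair $\{b,a\}$ would lie in $E$). Using the ordering $x_b \le x_{l(a)-1}$, it suffices to prove $x_a - x_{l(a)-1} > \varepsilon$. Now I would argue that the relevant strict inequality is already captured by some element of $\widetilde E$: specifically, I want to find an index $c\ge a$ with $l(c)-1 \ge b$, equivalently $l(c) > b$, such that the constraint $\{l(c)-1,c\}\in\widetilde E$ is present, and then chain $x_a - x_b \ge x_c - x_{l(c)-1} > \varepsilon$ using $x_a \le x_c$ (needs $c\ge a$) and $x_{l(c)-1}\ge x_b$ (needs $l(c)-1\ge b$). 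The natural candidate is to take $c$ to be the smallest index $\ge a$ for which $l(c) > l(c-1)$, i.e.\ the place where the "left reach" first jumps strictly past wherever it was; since $l$ is monotone non-decreasing in a suitable sense on vertices that actually have edges, such a $c$ exists (with $c=a$ as the degenerate case when $l(a)>l(a-1)$ already), and at that $c$ we have $l(c)=l(a)$ — the left reach has not moved between $a$ and $c$ going downward — hence $\{l(c)-1,c\}=\{l(a)-1,c\}\in\widetilde E$, provided $1<l(c)<c$. The condition $l(c)<c$ holds because $l(c)=l(a)\le b < a \le c$; the condition $1<l(c)$ holds because $l(a) = l(c) > b \ge 1$, so $l(c)\ge 2$.

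The main obstacle is handling the monotonicity / well-definedness of this chaining carefully, i.e.\ verifying that $l(c)=l(a)$ at the chosen $c$ and that no gap in vertex degrees breaks the argument. Concretely, one must check that for the smallest $c\ge a$ with $l(c)>l(c-1)$, all indices $a, a+1, \dots, c-1$ share the same value of $l$ as $a$ (this is exactly the defining minimality of $c$, once one verifies $l(a)=l(a-1)$ is impossible to have increased in between — but this is immediate from how $c$ was chosen: $c$ is minimal with $l(c)>l(c-1)$, so $l(a)=l(a+1)=\dots=l(c-1)$). A secondary subtlety: if $l(a)=1$ already, then the pair $\{l(a)-1,a\}=\{0,a\}$ is vacuous --- but then $b\ge 1 > 0 = l(a)-1$ contradicts $b < l(a)=1$, so this case cannot arise for a genuine non-edge with $b\ge 1$, which is why $\widetilde E$ legitimately omits it. I would close by remarking that the argument uses only Lemma~\ref{lemma_edge_intervals}, the ordering constraints, and elementary transitivity of $\le$, and therefore goes through verbatim in the higher dimensional setting whenever an ordering of the agents along a line can be assumed.
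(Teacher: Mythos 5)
The paper states this lemma without proof, so your argument has to stand on its own, and its central chaining step does not work. To deduce $x_a-x_b\ge x_c-x_{l(c)-1}$ from the ordering $x_1\le\dots\le x_n$ you need $x_a\ge x_c$ and $x_b\le x_{l(c)-1}$, i.e.\ $c\le a$ and $l(c)-1\ge b$; you instead require $c\ge a$, which gives the comparison in the useless direction. Your concrete candidate, the smallest $c\ge a$ with $l(c)>l(c-1)$, makes this worse: when $c>a$, minimality together with the monotonicity of $l$ forces $l(a)=l(a+1)=\dots=l(c-1)<l(c)$, so your claim $l(c)=l(a)$ is false and the certificate you land on, $x_c-x_{l(c)-1}>\varepsilon$, has both endpoints strictly to the right of $a$ and $l(a)-1$. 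A concrete failure: take $n=5$ and $E=\big\{\{1,2\},\{2,3\},\{2,4\},\{3,4\},\{4,5\}\big\}$ (realized e.g.\ by $X=(0,1,1.9,2,3)$ with $\varepsilon=1$), so that $\big(l(1),\dots,l(5)\big)=(2,1,2,2,4)$ and $\widetilde{E}=\big\{\{1,3\},\{3,5\}\big\}$. For the non-edge $\{a,b\}=\{4,1\}$ your recipe returns $c=5$ and the constraint $x_5-x_3>\varepsilon$, from which $x_4-x_1>\varepsilon$ cannot be deduced: the point $X=(0,0.5,0.5,1,1.6)$, $\varepsilon=1$ satisfies that constraint, all edge constraints and the ordering, yet has $x_4-x_1=1\le\varepsilon$. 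The repair is to walk left instead of right: let $c$ be the smallest index with $b<c\le a$ and $l(c)>b$, which exists because $l(a)>b$. Then $c>b+1$ (since $\{c-1,c\}\in E$ forces $l(c)\le c-1$, so $c=b+1$ would contradict $l(c)>b$), hence $l(c-1)\le b<l(c)$ by minimality, certifying $l(c)>l(c-1)$ and $l(c)-1\ge b$; moreover $1<l(c)<c$, so $\{l(c)-1,c\}\in\widetilde{E}$ and $x_a-x_b\ge x_c-x_{l(c)-1}>\varepsilon$ now follows correctly from $a\ge c$ and $b\le l(c)-1$. In the example above this selects $c=3$ and the correct certificate $x_3-x_1>\varepsilon$.

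Two hypotheses that your write-up uses implicitly should be made explicit, because the lemma is false without them. Your step that $\{a,b\}\notin E$ forces $b<l(a)$ presupposes that $G$ satisfies the interval condition of Lemma~\ref{lemma_edge_intervals}; for $n=3$ with the single edge $\{1,3\}$ one gets $\widetilde{E}=\emptyset$ while the original LP of Lemma~\ref{lemma_lp_realization_check} correctly forces $\Delta\le 0$, so the two systems are not equivalent there. Likewise the argument needs every vertex $i\ge 2$ to have a left neighbour (equivalently, connectivity under the interval structure), which is exactly what guarantees $l(c)<c$ for your chosen $c$: for the disconnected graph $E=\big\{\{1,2\},\{3,4\}\big\}$ one has $l(3)=4>l(4)=3$, hence $\widetilde{E}=\emptyset$, and the indispensable constraint $x_3-x_2>\varepsilon$ is lost. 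These are arguably defects of the lemma as stated rather than of your proof, but a correct proof must name them, since they are precisely what makes the corrected choice of $c$ well defined.
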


\begin{example}\label{example_7}
  If we remove the superfluous constraints in the LP from Example~\ref{example_6} as describe above, we obtain the following LP.
  \begin{align}
   \text{maximize} \quad & \Delta \\
   \text{subject to}\quad \nonumber\\
   x_2^1-x_1^1-\varepsilon &\le 0 \\
   x_3^1-x_2^1-\varepsilon &\le 0 \\
   x_4^1-x_3^1-\varepsilon &\le 0 \\
   x_3^1-x_1^1-\varepsilon-\Delta &\ge 0 \label{ie_dual_1}\\
   x_4^1-x_2^1-\varepsilon-\Delta &\ge 0 \label{ie_dual_2}\\
   x_1^1-x_2^1 &\le 0 \\
   x_2^1-x_3^1 &\le 0 \\
   x_3^1-x_4^1 &\le 0 \\
   x_1^1 &=0\\
   x_4^1 &\le 4\\
   \varepsilon &=1\\
   x_4^2-x_1^2-\varepsilon &\le 0 \label{ie_dual_3}\\
   \tfrac{1}{2} x_1^1+\tfrac{1}{2} x_2^1 -x_1^2 & = 0\label{ie_dual_4}\\
   \tfrac{1}{3} x_1^1+\tfrac{1}{3} x_2^1+\tfrac{1}{3} x_3^1 -x_2^2 & = 0\\
   \tfrac{1}{3} x_2^1+\tfrac{1}{3} x_3^1+\tfrac{1}{3} x_4^1 -x_3^2 & = 0\\
   \tfrac{1}{2} x_3^1+\tfrac{1}{2} x_4^1 -x_4^2 & = 0\label{ie_dual_5}
  \end{align}
  An optimal solution is again given by $X^1=(0,1,1,2)$, $X^2=\left(\tfrac{1}{2},\tfrac{2}{3},\tfrac{4}{3},\tfrac{3}{2}\right)$, $\varepsilon=1$, and $\Delta=0$. The sum of $-\tfrac{1}{2}$ times Inequality~(\ref{ie_dual_1}), $-\tfrac{1}{2}$ times Inequality~(\ref{ie_dual_2}), $1$ times Inequality~(\ref{ie_dual_3}), $-1$ times Inequality~(\ref{ie_dual_4}), and $1$ times Inequality~(\ref{ie_dual_5}) gives $\Delta\ge 0$, which proves that the optimal target value for $\Delta$ is indeed $0$. We remark that such multipliers always exist if the LP is feasible and bounded, which it is when we use the stated normalization, and refer to the strong duality theorem of linear optimization (contained in any textbook on linear optimization).
\end{example}

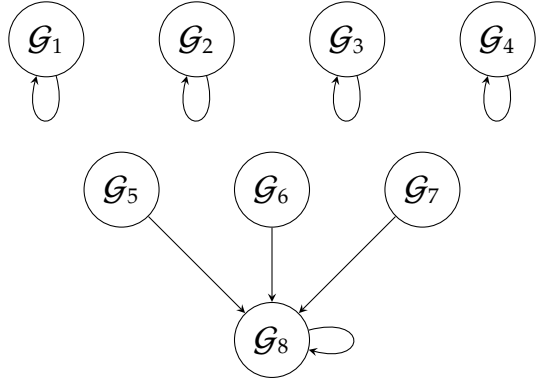
\begin{figure}[htp]
  \begin{center}
  \begin{tikzpicture}[>=stealth, node distance=2cm,
    vertex/.style={circle,draw,minimum size=6pt}]

    \node[vertex] (A1) at (0,4) {$\cG_1$};
    \node[vertex] (A2) at (2,4) {$\cG_2$};
    \node[vertex] (A3) at (4,4) {$\cG_3$};
    \node[vertex] (A4) at (6,4) {$\cG_4$};
    \node[vertex] (A5) at (1,2) {$\cG_5$};
    \node[vertex] (A6) at (3,2) {$\cG_6$};
    \node[vertex] (A7) at (5,2) {$\cG_7$};
    \node[vertex] (A8) at (3,0) {$\cG_8$};

    \draw[->] (A5) --  (A8);
    \draw[->] (A6) --  (A8);
    \draw[->] (A7) --  (A8);

    \draw[->] (A1) to[loop below]  ();
    \draw[->] (A2) to[loop below] ();
    \draw[->] (A3) to[loop below]  ();
    \draw[->] (A4) to[loop below]  ();
    \draw[->] (A8) to[loop right] ();
  \end{tikzpicture}
  \caption{Transition graph for graphs with three vertices, see Figure~\ref{fig_labeled_graphs_3} for the labeling.}
  \label{fig_transition_3}
  \end{center}
\end{figure}

So, we have a machinery at hand to compute the \emph{transition graph} of the possible influence graphs of BC-processes with a given finite number of agents, i.e.\ a \emph{directed graph}\footnote{A \emph{directed graph} $G=(V,E)$ consists of a set of vertices $V)$ and a set of edges $E\subseteq V\times V$. For an element $(i,j)\in E$ we speak of an edge directed from $i$ to $j$.} with \emph{loops}\footnote{A \emph{loop} in a directed graph $G=(V,E)$ is an edge $E \ni e=(i,i)$ which is directed from vertex $i\in V$ to itself.}  with the possible influence graph as vertices and a directed edge from one to another whenever a transition between the two corresponding influence graphs is possible in principle. Using our enumeration of graphs with three vertices in Figure~\ref{fig_labeled_graphs_3} we draw the corresponding transition graph in Figure~\ref{fig_transition_3}. For this small cases we actually do not need our general machinery, see Example~\ref{example_4} for the details of a down-to-earth computation and case analysis. For larger number of agents it nevertheless becomes indispensable. From Figure~\ref{fig_transition_3} we can directly read off that the maximum freezing time $F(3)$ for $3$ agents is at most $2$. Here indeed all possible paths in the transition diagram can be realized, which is different for $n\ge 4$ agents.

We remark that at and after freezing time the influence graph of a BC-process has to be one with a loop in the transition graph. For $n=3$ vertices all possible graphs can be realized by a BC-process and all vertices with loops in the transition graph can indeed be attained in the stable state. We will see that both properties will not hold for $n=4$ agents.

\begin{figure}[htp]
  \begin{center}
    $\cG_1$:  \begin{tikzpicture}[node/.style={circle, draw, inner sep=2pt}]
    \node[node,label=below:$1$] (v1) at (0,0) {};
    \node[node,label=below:$2$] (v2) at (1,0) {};
    \node[node,label=below:$3$] (v3) at (2,0) {};
    \node[node,label=below:$4$] (v4) at (3,0) {};
    \draw (v1) -- (v2);
    \draw[bend left=45] (v1) to (v3);
    \draw[bend left=45] (v1) to (v4);
  \end{tikzpicture}\quad\quad
  $\cG_2$:  \begin{tikzpicture}[node/.style={circle, draw, inner sep=2pt}]
    \node[node,label=below:$1$] (v1) at (0,0) {};
    \node[node,label=below:$2$] (v2) at (1,0) {};
    \node[node,label=below:$3$] (v3) at (2,0) {};
    \node[node,label=below:$4$] (v4) at (3,0) {};
    \draw (v1) -- (v2)--(v3);
    \draw[bend left=45] (v2) to (v4);
  \end{tikzpicture}\\[5mm]
  $\cG_3$:  \begin{tikzpicture}[node/.style={circle, draw, inner sep=2pt}]
    \node[node,label=below:$1$] (v1) at (0,0) {};
    \node[node,label=below:$2$] (v2) at (1,0) {};
    \node[node,label=below:$3$] (v3) at (2,0) {};
    \node[node,label=below:$4$] (v4) at (3,0) {};
    \draw (v2) -- (v3)--(v4);
    \draw[bend left=45] (v1) to (v3);
  \end{tikzpicture}\quad\quad
  $\cG_4$:  \begin{tikzpicture}[node/.style={circle, draw, inner sep=2pt}]
    \node[node,label=below:$1$] (v1) at (0,0) {};
    \node[node,label=below:$2$] (v2) at (1,0) {};
    \node[node,label=below:$3$] (v3) at (2,0) {};
    \node[node,label=below:$4$] (v4) at (3,0) {};
    \draw (v3) -- (v4);
    \draw[bend left=45] (v1) to (v4);
    \draw[bend left=45] (v2) to (v4);
  \end{tikzpicture}
  \caption{Stars with four vertices.}
  \label{fig_star_4}
  \end{center}
\end{figure}
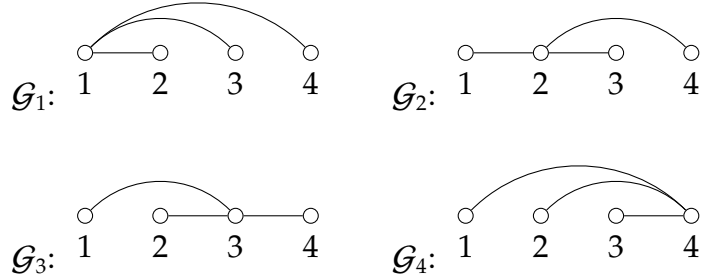

A \emph{star} is a graph with one central vertex that has an edge to every other vertex and there are no further edges. In Figure~\ref{fig_star_4} we have depicted all labeled stars with four vertices. We remark that we can directly use Lemma~\ref{lemma_lp_realization_check} to computationally verify that none of them can be an influence graph of a BC-process.\footnote{Of course for every $n\ge 3$ there exists a unique unlabeled star with $n$ vertices admitting a more symmetric drawing with the center in the middle of a unit circle and the other vertices distributed equally along the boundary of the unit circle. Since already a star with four vertices cannot be realized as the influence graph of a BC-process, no start with $n\ge 4$ vertices can be realized as a influence graph of a BC-process.}

\begin{definition}\label{def_interval_graph}
  A graph $G=(V,E)$ with vertex set $V=\{1,\dots,n\}$ is an \emph{interval graph} if there exists intervals $I_1,\dots,I_n\subseteq\R$ such that $\{i,j\}\in E$ iff $I_i\cap I_j\neq \emptyset$. If, moreover, all intervals can be chosen of the same length, then $G$ is called a \emph{unit interval graph}.
\end{definition}

\begin{table}[htp]
  \begin{center}\footnotesize
  \begin{tabular}{rrrrrrr}
  \hline
  n & $A000088(n)$ & $A005975(n)$ & $A005217(n)$ & $A001349(n)$ & $A005976(n)$ & $A007123(n)$ \\
  \hline
  1  &        1 &     1 &    1 &        1 &     1 &    1 \\
  2  &        2 &     2 &    2 &        1 &     1 &    1 \\
  3  &        4 &     4 &    4 &        2 &     2 &    2 \\
  4  &       11 &    10 &    9 &        6 &     5 &    4 \\
  5  &       34 &    27 &   21 &       21 &    15 &   10 \\
  6  &      156 &    92 &   55 &      112 &    56 &   26 \\
  7  &     1044 &   369 &  151 &      853 &   250 &   76 \\
  8  &    12346 &  1807 &  447 &    11117 &  1328 &  232 \\
  9  &   274668 & 10344 & 1389 &   261080 &  8069 &  750 \\
  10 & 12005168 & 67659 & 4502 & 11716571 & 54962 & 2494 \\
  11 & 1018997864 & 491347 & 15046 & 1006700565 & 410330 & 8524 \\
  12 & 165091172592 & 3894446 & 51505 & 164059830476 & 3317302 & 29624 \\
  13 & 50502031367952 & 33278992 & 179463 & 50335907869219 & 28774874 & 104468 \\
  14 & 29054155657235488 & 304256984 & 634086 & 29003487462848061 & 266242936 & 372308 \\
  15 & 31426485969804308768 & 2960093835 & 2265014 & 31397381142761241960 &2616100423 & 1338936 \\
  \hline
  \end{tabular}
  \caption{Number of unlabeled graphs for different graph classes. See Table~\ref{table_captions} for an explanation of the captions.}
  \label{table_number_of_graphs}
  \end{center}
\end{table}

For influence graphs $\cG_i(t)$ those intervals in Definition~\ref{def_interval_graph} are naturally given by $\left[x_i(t)-\varepsilon,x_i(t)+\varepsilon\right]$ -- all having the same length $2\varepsilon$. Characterizations of interval graphs can e.g.\ be found in \cite{golumbic1980algorithmic}. Our observation in Lemma~\ref{lemma_edge_intervals} is close to the characterization of interval graphs in \cite[Theorem 4]{olariu1991optimal}: A graph $G=(V,E)$ is an interval graph iff there exists a linear order $\prec$ on $V$ such that for every choice of vertices $u$, $v$, $w$ with $u\prec v\prec w$, $\{u,w\}\in E$ implies $\{u,v\},\{v,w\}\in E$. With this, the property of being an interval graph can be checked in $O(\#V+\#E)$ time \cite{olariu1991optimal}. An interval graph is a unit interval graph iff it does not contain a star with four vertices as an induced subgraph \cite{roberts1969indifference}. Interval graphs as well as unit interval graphs can be easily counted, see e.g.\ \cite{hanlon1982counting} for generating functions.\footnote{A \emph{generating function} is a formal power series that encodes an infinite sequence of numbers $\left(a_n\right)_{n\in\N}$ as the coefficients of the series $a(x)=\sum_{n\in\N} a_nx^n$.}

\begin{table}[htp]
  \begin{center}
  \begin{tabular}{rl}
  \hline
  $A000088(n)$ & graphs with $n$ vertices                         \\
  $A005975(n)$ & interval graphs with $n$ vertices                \\
  $A005217(n)$ & unit interval graphs with $n$ vertices           \\
  $A001349(n)$ & connected graphs with $n$ vertices               \\
  $A005976(n)$ & connected interval graphs with $n$ vertices      \\
  $A007123(n)$ & connected unit interval graphs with $n$ vertices \\
  \hline
  \end{tabular}
  \caption{Captions in Table~\ref{table_number_of_graphs} -- sequences in the OEIS.}
  \label{table_captions}
  \end{center}
\end{table}

In Table~\ref{table_number_of_graphs} we have listed the number of unlabeled graphs for up to fifteen vertices and those with the additional properties of being an interval or unit interval graph as well as the property of being connected. As captions we have used the A-numbers in the {\lq\lq}Online Encyclopedia of Integer Sequences{\rq\rq} (OEIS)\footnote{OEIS Foundation Inc. (2025), The On-Line Encyclopedia of Integer Sequences, Published electronically at \url{https://oeis.org}.}, see Table~\ref{table_captions} for verbal descriptions. The corresponding numbers of unlabeled graphs are much larger and can also be found in the OEIS. The impact of restricting general graphs to unit interval graphs, as candidates for influence graphs, is quite tremendous for, say, $n\ge 8$ agents. In some situations we can also restrict ourselves to connected interval graphs.

\begin{lemma}\label{lemma_decompose} (\cite[Proposition 2]{blondel2009krause})\\
Let $\cP=(X(0),\varepsilon)$ be a BC-process for $n$ agents. If at time $t\in\N$ and for some agent $1\le i<n$ we have $\left|x_i(t)-x_{i+1}(t)\right|>\varepsilon$, then we have $\left|x_i(t')-x_{i+1}(t')\right|>\varepsilon$ for all $t'\ge t$. So, the BC-process can be decomposed into two subprocesses, one consisting of agents $\{1, \cdots, i\}$ and the other of agents $\{i+1,\dots,n\}$, each evolving independently after time $t$.
\end{lemma}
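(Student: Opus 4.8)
The plan is to exploit the order preservation from Lemma~\ref{lemma_basic}.(1) together with the elementary fact that each updated opinion is a convex combination of opinions of agents lying in the same ``block''. First I would invoke the standing ordering assumption $x_1(0)\le\dots\le x_n(0)$, so that $x_1(t')\le\dots\le x_n(t')$ for all $t'\in\N$, and write $L=\{1,\dots,i\}$ and $R=\{i+1,\dots,n\}$. The key observation is that whenever $x_{i+1}(t')-x_i(t')>\varepsilon$ holds for some time $t'\ge t$, the influence graph $\cG(t')$ contains no edge between $L$ and $R$: for $a\in L$ and $b\in R$ the ordering gives $x_b(t')-x_a(t')\ge x_{i+1}(t')-x_i(t')>\varepsilon$, so $\{a,b\}\notin E(t')$; equivalently $I_a(t')\subseteq L$ for all $a\in L$ and $I_b(t')\subseteq R$ for all $b\in R$.

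Next I would run an induction on $t'\ge t$ to show the gap never closes. The base case $t'=t$ is the hypothesis. For the inductive step, assume $x_{i+1}(t')-x_i(t')>\varepsilon$. By the observation above and the updating rule~(\ref{eq_update_rule}), $x_i(t'+1)$ is an average of the values $x_k(t')$ with $k\in I_i(t')\subseteq L$, all of which are $\le x_i(t')$, so $x_i(t'+1)\le x_i(t')$; symmetrically $x_{i+1}(t'+1)\ge x_{i+1}(t')$ since $I_{i+1}(t')\subseteq R$ and every $x_k(t')$ with $k\ge i+1$ is $\ge x_{i+1}(t')$. Hence $x_{i+1}(t'+1)-x_i(t'+1)\ge x_{i+1}(t')-x_i(t')>\varepsilon$, completing the induction and establishing the first assertion.

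For the decomposition statement I would then note that, the gap being $>\varepsilon$ at every time $t'\ge t$, we have $I_a(t')\subseteq L$ for all $a\in L$ and all $t'\ge t$, so the restriction of the updating rule to the coordinates indexed by $L$ involves only those coordinates; thus $\big(x_1(t'),\dots,x_i(t')\big)_{t'\ge t}$ is exactly the BC-process with $i$ agents, confidence level $\varepsilon$, and starting configuration $\big(x_1(t),\dots,x_i(t)\big)$, and the symmetric argument applied to $R$ handles the second subprocess, the two evolving without any further interaction. I do not expect a serious obstacle; the only point requiring a little care is verifying that the two endpoints of the gap move monotonically apart -- equivalently, that within each block the minimum is non-decreasing and the maximum non-increasing -- which is precisely Lemma~\ref{lemma_basic}.(2) applied separately to the subprocesses on $L$ and on $R$.
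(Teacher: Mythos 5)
Your proof is correct. Note that the paper itself supplies no proof of this lemma --- it is imported verbatim as \cite[Proposition 2]{blondel2009krause} --- so there is no in-paper argument to compare against; your derivation (order preservation from Lemma~\ref{lemma_basic}.(1) forces $I_a(t')\subseteq\{1,\dots,i\}$ for $a\le i$ and $I_b(t')\subseteq\{i+1,\dots,n\}$ for $b\ge i+1$ once the gap exceeds $\varepsilon$, whence $x_i$ can only decrease and $x_{i+1}$ can only increase, and the gap persists by induction) is the standard one and is essentially the argument of the cited reference. One small caveat: your closing appeal to Lemma~\ref{lemma_basic}.(2) ``applied separately to the subprocesses on $L$ and on $R$'' is mildly circular, since the subprocess structure beyond time $t'$ is only available after the gap at time $t'+1$ has been established; this is harmless, however, because your inductive step already obtains the one-step monotonicity of $x_i$ and $x_{i+1}$ directly from $I_i(t')\subseteq L$ and $I_{i+1}(t')\subseteq R$ together with the updating rule~(\ref{eq_update_rule}), without needing part (2) at all.
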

We remark that the decomposition from Lemma~\ref{lemma_decompose} is only valid for a one-dimensional BC-process, to which we restrict in this paper. In higher dimensions, even for dimension two, the dynamics is much more complex.

If the influence graph $\cG(0)$ is connected, then the linear ordering $\prec$ associated to the interval graph allows us to consider unlabeled graphs. More precisely, the assumption $x_1(0)\le\dots\le x_n(0)$ almost uniquely determines the labeling of a given interval graph. The only degree of freedom left is to reverse the ordering $\prec$, i.e.\ to mirror the BC process at an arbitrary point, which is an affine transformation. Let us consider the labeled graphs with three vertices from Figure~\ref{fig_labeled_graphs_3} again. Assuming the ordering $x_1(0)\le x_2(0)\le x_3(0)$ the graphs $\cG_4$, $\cG_5$, and $\cG_6$ cannot be realized as a BC-process, which leaves the five graphs $\cG_1$, $\cG_2$, $\cG_3$, $\cG_7$, and $\cG_8$.  Going in line with Table~\ref{table_number_of_graphs}, graphs $\cG_2$ and $\cG_3$ are isomorphic. Corresponding BC-processes can indeed be obtained by mirroring. The other three graphs permit mirror symmetric BC-processes. The phenomenon also occurs for connected unit interval graphs with at least four vertices, see Figure~\ref{fig_mirror}, which are isomorphic as graphs.
\begin{figure}[htp]
\begin{center}
 \begin{tikzpicture}[node/.style={circle, draw, inner sep=2pt}]
    \node[node,label=below:$1$] (v1) at (0,0) {};
    \node[node,label=below:$2$] (v2) at (1,0) {};
    \node[node,label=below:$3$] (v3) at (2,0) {};
    \node[node,label=below:$4$] (v4) at (3,0) {};
    \draw (v1) -- (v2) -- (v3) -- (v4);
    \draw[bend left=45] (v1) to (v3);
  \end{tikzpicture}\quad\quad\quad\quad
  \begin{tikzpicture}[node/.style={circle, draw, inner sep=2pt}]
    \node[node,label=below:$1$] (v1) at (0,0) {};
    \node[node,label=below:$2$] (v2) at (1,0) {};
    \node[node,label=below:$3$] (v3) at (2,0) {};
    \node[node,label=below:$4$] (v4) at (3,0) {};
    \draw (v1) -- (v2) -- (v3) -- (v4);
    \draw[bend left=45] (v2) to (v4);
  \end{tikzpicture}
  \caption{Two isomorphic connected unit interval graphs with four vertices.}
  \label{fig_mirror}
\end{center}
\end{figure}
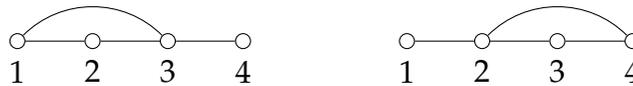

If we have several connected components, depending on the application, we may have to take the orderings of the connected components and their orientations into account. So, if we e.g.\ have two components as in Figure~\ref{fig_mirror} the question is whether it makes a difference if they are oriented in the same direction or in different directions. Using the decomposition from Lemma~\ref{lemma_decompose} we here do not take orderings of the connected components or orientations into account, i.e.\ we consider unlabeled unit interval graphs. So, e.g.\ we do not differentiate between graphs $\cG_2$ and $\cG_3$ in Figure~\ref{fig_labeled_graphs_3}. For $n=4$ agents we consider the nine unit interval graphs depicted in Figure~\ref{fig_unit_interval_graphs_4}. The corresponding transition graph is given in Figure~\ref{fig_transition_4}.

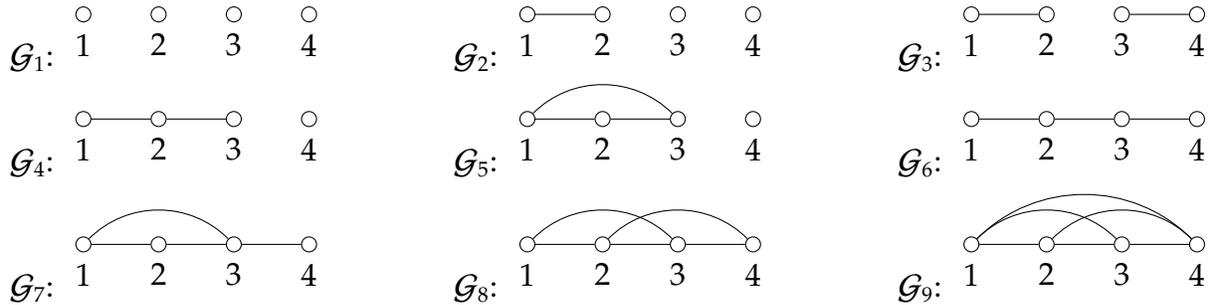
\begin{figure}[htp]
  \begin{center}
  $\cG_1$: \begin{tikzpicture}[node/.style={circle, draw, inner sep=2pt}]
    \node[node,label=below:$1$] (v1) at (0,0) {};
    \node[node,label=below:$2$] (v2) at (1,0) {};
    \node[node,label=below:$3$] (v3) at (2,0) {};
    \node[node,label=below:$4$] (v4) at (3,0) {};
  \end{tikzpicture}\quad\quad\quad\quad
  $\cG_2$: \begin{tikzpicture}[node/.style={circle, draw, inner sep=2pt}]
    \node[node,label=below:$1$] (v1) at (0,0) {};
    \node[node,label=below:$2$] (v2) at (1,0) {};
    \node[node,label=below:$3$] (v3) at (2,0) {};
    \node[node,label=below:$4$] (v4) at (3,0) {};
    \draw (v1) -- (v2);
  \end{tikzpicture}\quad\quad\quad\quad
  $\cG_3$: \begin{tikzpicture}[node/.style={circle, draw, inner sep=2pt}]
    \node[node,label=below:$1$] (v1) at (0,0) {};
    \node[node,label=below:$2$] (v2) at (1,0) {};
    \node[node,label=below:$3$] (v3) at (2,0) {};
    \node[node,label=below:$4$] (v4) at (3,0) {};
    \draw (v1) -- (v2);
    \draw (v3) -- (v4);
  \end{tikzpicture}\\
  $\cG_4$: \begin{tikzpicture}[node/.style={circle, draw, inner sep=2pt}]
    \node[node,label=below:$1$] (v1) at (0,0) {};
    \node[node,label=below:$2$] (v2) at (1,0) {};
    \node[node,label=below:$3$] (v3) at (2,0) {};
    \node[node,label=below:$4$] (v4) at (3,0) {};
    \draw (v1) -- (v2) -- (v3);
  \end{tikzpicture}\quad\quad\quad\quad
  $\cG_5$: \begin{tikzpicture}[node/.style={circle, draw, inner sep=2pt}]
    \node[node,label=below:$1$] (v1) at (0,0) {};
    \node[node,label=below:$2$] (v2) at (1,0) {};
    \node[node,label=below:$3$] (v3) at (2,0) {};
    \node[node,label=below:$4$] (v4) at (3,0) {};
    \draw (v1) -- (v2) -- (v3);
    \draw[bend left=45] (v1) to (v3);
  \end{tikzpicture}\quad\quad\quad\quad
  $\cG_6$: \begin{tikzpicture}[node/.style={circle, draw, inner sep=2pt}]
    \node[node,label=below:$1$] (v1) at (0,0) {};
    \node[node,label=below:$2$] (v2) at (1,0) {};
    \node[node,label=below:$3$] (v3) at (2,0) {};
    \node[node,label=below:$4$] (v4) at (3,0) {};
    \draw (v1) -- (v2) -- (v3) -- (v4);
  \end{tikzpicture}\\
  $\cG_7$: \begin{tikzpicture}[node/.style={circle, draw, inner sep=2pt}]
    \node[node,label=below:$1$] (v1) at (0,0) {};
    \node[node,label=below:$2$] (v2) at (1,0) {};
    \node[node,label=below:$3$] (v3) at (2,0) {};
    \node[node,label=below:$4$] (v4) at (3,0) {};
    \draw (v1) -- (v2) -- (v3) -- (v4);
    \draw[bend left=45] (v1) to (v3);
  \end{tikzpicture}\quad\quad\quad\quad
  $\cG_8$: \begin{tikzpicture}[node/.style={circle, draw, inner sep=2pt}]
    \node[node,label=below:$1$] (v1) at (0,0) {};
    \node[node,label=below:$2$] (v2) at (1,0) {};
    \node[node,label=below:$3$] (v3) at (2,0) {};
    \node[node,label=below:$4$] (v4) at (3,0) {};
    \draw (v1) -- (v2) -- (v3) -- (v4);
    \draw[bend left=45] (v1) to (v3);
    \draw[bend left=45] (v2) to (v4);
  \end{tikzpicture}\quad\quad\quad\quad
  $\cG_9$: \begin{tikzpicture}[node/.style={circle, draw, inner sep=2pt}]
    \node[node,label=below:$1$] (v1) at (0,0) {};
    \node[node,label=below:$2$] (v2) at (1,0) {};
    \node[node,label=below:$3$] (v3) at (2,0) {};
    \node[node,label=below:$4$] (v4) at (3,0) {};
    \draw (v1) -- (v2) -- (v3) -- (v4);
    \draw[bend left=45] (v1) to (v3);
    \draw[bend left=45] (v2) to (v4);
    \draw[bend left=45] (v1) to (v4);
  \end{tikzpicture}\\
  \caption{Unit interval graphs with four vertices.}
  \label{fig_unit_interval_graphs_4}
  \end{center}
\end{figure}

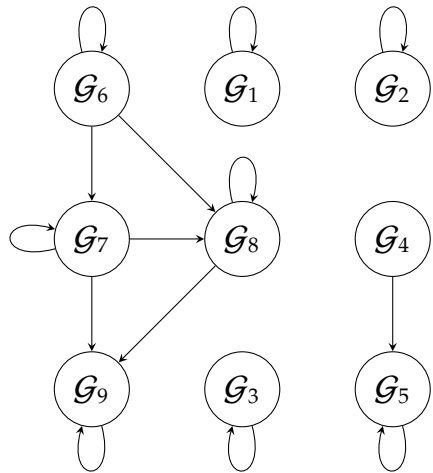
\begin{figure}[htp]
  \begin{center}
  \begin{tikzpicture}[>=stealth, node distance=2cm,
    vertex/.style={circle,draw,minimum size=6pt}]

    \node[vertex] (A6) at (0,4) {$\cG_6$};
    \node[vertex] (A7) at (0,2) {$\cG_7$};
    \node[vertex] (A8) at (2,2) {$\cG_8$};
    \node[vertex] (A9) at (0,0) {$\cG_9$};
    \node[vertex] (A1) at (2,4) {$\cG_1$};
    \node[vertex] (A2) at (4,4) {$\cG_2$};
    \node[vertex] (A3) at (2,0) {$\cG_3$};
    \node[vertex] (A4) at (4,2) {$\cG_4$};
    \node[vertex] (A5) at (4,0) {$\cG_5$};

    \draw[->] (A6) --  (A7);
    \draw[->] (A6) --  (A8);
    \draw[->] (A7) --  (A8);
    \draw[->] (A7) --  (A9);
    \draw[->] (A8) --  (A9);
    \draw[->] (A4) --  (A5);

    \draw[->] (A7) to[loop left]  ();
    \draw[->] (A8) to[loop above]  ();
    \draw[->] (A6) to[loop above]  ();
    \draw[->] (A1) to[loop above]  ();
    \draw[->] (A2) to[loop above]  ();
    \draw[->] (A9) to[loop below] ();
    \draw[->] (A3) to[loop below] ();
    \draw[->] (A5) to[loop below] ();
  \end{tikzpicture}
  \caption{Transition graph for influence graphs with four vertices, see Figure~\ref{fig_unit_interval_graphs_4} for the labeling.}
  \label{fig_transition_4}
  \end{center}
 \end{figure}

 In Example~\ref{example_4} we have distinguished more cases then we have unit interval graphs with three vertices, even if we remove symmetric cases. Also in the context of the freezing time we need a little more detailed information than provided by an influence graph, so that we introduce the following refinement.

 \begin{definition}\label{def_weighted_influence_graph}
   Let $\cP=(X(0),\varepsilon)$ be a BC-process for $n$ agents and $t\in\N$ and arbitrary time step. A \emph{cluster} at time $t$ is a group of agents $\cC\subseteq\{1,\dots,n\}$ with $x_i(t)=x_j(t)$ for all $i,j\in\cC$ and $x_i(t)\neq x_h(t)$ for all $i\in\cC$ and $h\notin\cC$. A partition of the set of agents into clusters $\cC_1(t),\dots,\cC_l(t)$ with $i<j$ for all $i\in\cC_h(h)$ and all $j\in\cC_{h'}(t)$ is called \emph{ordered cluster decomposition}. With this, the \emph{weighted influence graph} $\widetilde{\cG}(t)$ at time $t$ is a graph with vertex set $\cC_1(t),\dots, \cC_l(t)$ where vertex $\cC_i(t)$ has weight $\#\cC_i(t)\in\N$. Each pair $\{\cC_i(t),\cC_j(t)\}$ with $1\le i<j\le j$ is an edge iff $\{a,b\}$ is an edge in the ordinary influence graph $\cG(t)$ where $a\in\cC_i(t)$ and $b\in\cC_j(t)$ are arbitrary.
 \end{definition}

 A cluster of size $n$ occurs iff we have consensus. The sum of the weight of the clusters in an ordered cluster decompositions equals $n$. Each influence graph can correspond to several weighted influence graphs while each weighted influence graph corresponds to a unique influence graph. The possible weighted influence graphs with three vertices are depicted in Figure~\ref{fig_weighted_influence_graphs_3}. Instead of labels we write the weight below each vertex (where we may drop weights of one later on). The corresponding transition graph is given in Figure~\ref{fig_transition_weighted_3}. A BC-process has reached its final stable state at time $t\in\N$ if the corresponding weighted influence graph $\widetilde{\cG}(t)$ does not contain any edges. In Figure~\ref{fig_weighted_influence_graphs_3} and Figure~\ref{fig_transition_weighted_3} this is the case for graphs $\widetilde{\cG}_1$, $\widetilde{\cG}_2$, and $\widetilde{\cG}_3$. We observe that for $n=3$ agents there are no loops at unstable states, which is different for $n\ge 4$. From Figure~\ref{fig_transition_weighted_3} we can directly read off the maximum freezing time $F(3)=2$.

 \begin{figure}[htp]
   \begin{center}
    $\widetilde{G}_1$: \begin{tikzpicture}[node/.style={circle, draw, inner sep=2pt}]
    \node[node,label=below:$3$] (v1) at (0,0) {};
  \end{tikzpicture}\quad\quad
  $\widetilde{G}_2$: \begin{tikzpicture}[node/.style={circle, draw, inner sep=2pt}]
    \node[node,label=below:$2$] (v1) at (0,0) {};
    \node[node,label=below:$1$] (v1) at (1,0) {};
  \end{tikzpicture}\quad\quad
  $\widetilde{G}_3$: \begin{tikzpicture}[node/.style={circle, draw, inner sep=2pt}]
    \node[node,label=below:$1$] (v1) at (0,0) {};
    \node[node,label=below:$1$] (v2) at (1,0) {};
    \node[node,label=below:$1$] (v3) at (2,0) {};
  \end{tikzpicture}\quad\quad
  $\widetilde{G}_4$: \begin{tikzpicture}[node/.style={circle, draw, inner sep=2pt}]
    \node[node,label=below:$1$] (v1) at (0,0) {};
    \node[node,label=below:$1$] (v2) at (1,0) {};
    \node[node,label=below:$1$] (v3) at (2,0) {};
    \draw (v1) -- (v2);
  \end{tikzpicture}\\
  $\widetilde{G}_5$: \begin{tikzpicture}[node/.style={circle, draw, inner sep=2pt}]
    \node[node,label=below:$1$] (v1) at (0,0) {};
    \node[node,label=below:$1$] (v2) at (1,0) {};
    \node[node,label=below:$1$] (v3) at (2,0) {};
    \draw (v1) -- (v2) -- (v3);
  \end{tikzpicture}\quad\quad
  $\widetilde{G}_6$: \begin{tikzpicture}[node/.style={circle, draw, inner sep=2pt}]
    \node[node,label=below:$1$] (v1) at (0,0) {};
    \node[node,label=below:$1$] (v2) at (1,0) {};
    \node[node,label=below:$1$] (v3) at (2,0) {};
    \draw (v1) -- (v2) -- (v3);
    \draw[bend left=45] (v1) to (v3);
  \end{tikzpicture}
     \caption{Weighted influence graphs for three agents.}
     \label{fig_weighted_influence_graphs_3}
   \end{center}
 \end{figure}
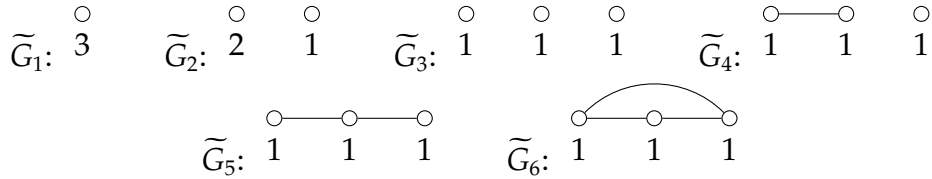

 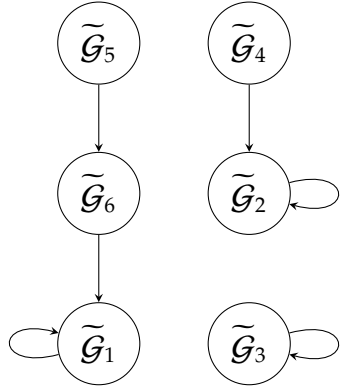
\begin{figure}[htp]
   \begin{center}
     \begin{tikzpicture}[>=stealth, node distance=2cm,
    vertex/.style={circle,draw,minimum size=6pt}]

    \node[vertex] (A1) at (0,0) {$\widetilde{\cG}_1$};
    \node[vertex] (A2) at (2,2) {$\widetilde{\cG}_2$};
    \node[vertex] (A3) at (2,0) {$\widetilde{\cG}_3$};
    \node[vertex] (A4) at (2,4) {$\widetilde{\cG}_4$};
    \node[vertex] (A5) at (0,4) {$\widetilde{\cG}_5$};
    \node[vertex] (A6) at (0,2) {$\widetilde{\cG}_6$};

    \draw[->] (A5) --  (A6);
    \draw[->] (A6) --  (A1);
    \draw[->] (A4) --  (A2);

    \draw[->] (A1) to[loop left]  ();
    \draw[->] (A2) to[loop right]  ();
    \draw[->] (A3) to[loop right]  ();
  \end{tikzpicture}
  \caption{Transition graph for weighted influence graphs with three vertices, see Figure~\ref{fig_weighted_influence_graphs_3} for the labeling.}
     \label{fig_transition_weighted_3}
   \end{center}
 \end{figure}

 \begin{remark}
   The techniques described in this section can also be applied to higher dimensional BC-processes in $\R^l$ if we choose a metric $d$ induces by $\Vert\cdot\Vert_1$ or $\Vert\cdot\Vert_\infty$. For e.g.\ the Euclidean metric, induced by $ \Vert\cdot\Vert_2$, we would need to replace linear programs by semi-definite programs. In any case the question for the class of graphs that can be realized as a BC-process is interesting.
 \end{remark}

\section{Maximum freezing time}\label{sec_freezing_time}

As mentioned in Section~\ref{sec_preliminaries} the maximum freezing time $F(n)$ of a BC-process with $n$ agents is upper bounded by $4n\left(n^2+1\right)$. Due to the decomposition in Lemma~\ref{lemma_decompose} and a linear lower bound $F(n)\in\Omega(n)$, see e.g.\ \cite[Lemma 2.1]{kurz2014long}, it suffices to consider connected BC-processes in order to determine $F(n)$. By definition we have $F(1)=0$. The discussion of the possible cases in Example~\ref{example_4} directly implies $F(2)=1$ and $F(3)=2$. A configuration showing $F(4)\ge 5$ is given in Example~\ref{example_2}. The latter example is an instance of a so-called the \emph{equidistant configuration}, i.e.\ the starting configuration is given by $x_i(0)=\alpha i+\beta$ for some $\alpha,\beta\in \R$. In Example~\ref{example_2} $\alpha$ equals $\varepsilon$. Indeed, $F(n)$ is attained for these configurations for all $n\le 4$, while it is not for a larger number of agents.\footnote{To be more precise, we state that for $\alpha=\varepsilon$ a freezing time of $\tfrac{5}{n}+O(1)$ was shown in \cite{hegarty2016hegselmann}. Also for $\alpha\approx 0.81$ a periodic pattern could be described and an upper bound of $n+O(1)$ could be concluded, which the authors of \cite{hegarty2016hegselmann} conjecture to be example with the longest freezing time within the class of equidistant configurations. A constructive quadratic lower bound was shown in \cite{wedin2015quadratic}. As reported in \cite{kurz2014long}, it can be computationally checked that equidistant configurations do not yield the maximum freezing time for $5$ up to $14$ agents. So, up to our knowledge, there is no formal proof for our statement, but it is very likely that a counter-example exists. } The exact values $F(4)=5$, $F(5)=7$, $F(6)=9$, and the lower bound $F(7)\ge 12$ are already stated in \cite{kurz2014long}.

Next we want to parameterize all BC-process for four agents with maximum freezing time. As a initial parameterization we utilize $x_1(0)=0$, $x_2(0)=x$, $x_3(0)=x+y$, $x_4(0)=x+y+z$, and $\varepsilon=1$, where $x,y,z\ge 0$. Due to Example~\ref{example_2} we are only interested in configurations with a freezing time of at least $5$. Since $F(n)\le n-1$ for $n\le 3$, the transition graph in Figure~\ref{fig_transition_4} implies that the only relevant influence graphs are contained in $\left\{\cG_6,\cG_7,\cG_8,\cG_9\right\}$. First we check how many loops can occur in one of the states $\cG_i$. For $\cG_9$ we observe that after at least one iteration we end up in a consensus. After two times $\cG_7$ we obtain $x_1(2)=\tfrac{25x+14y+3z}{36}$ and $x_4(2)=\tfrac{7x+6y+3z}{8}$, so that
$$
  d\!\left(x_1(2),x_4(2)\right)=\frac{-x+14y+15z}{72}<1=\varepsilon
$$
and $x_1(3)=x_2(3)=x_3(3)=x_4(3)$. After two times $\cG_8$ we obtain
$x_1(2)=\tfrac{13x+8y+3z}{18}$ and $x_4(2)=\tfrac{15x+10y+5z}{18}$, so that
$$
  d\!\left(x_1(2),x_4(2)\right)=\frac{x+y+z}{9}<1=\varepsilon
$$
and $x_1(3)=x_2(3)=x_3(3)=x_4(3)$.
After two times $\cG_6$ we obtain
$x_1(2)=\tfrac{5x+6y+2z}{18}$ and $x_3(2)=\tfrac{19x+12y+5z}{24}$, so that
$d\!\left(x_1(2),x_3(2)\right)<\varepsilon$. By symmetry we also have $d\!\left(x_2(2),x_4(2)\right)<\varepsilon$. So either $\cG(2)=\cG_8$ or $\cG(2)=\cG_9$. Thus, we have $F(4)\le 5$ and the only candidates for sequences of influence graphs are $\left(\cG_6,\cG_7,\cG_8,\cG_8,\cG_9\right)$ and $\left(\cG_6,\cG_6,\cG_8,\cG_8,\cG_9\right)$. However, for the first chain we have $x_1(2)=x_2(2)$, so that $\cG(2)\neq \cG_8$. So, only the second chain, as attained in Example~\ref{example_2} and Example~\ref{example_3}, is possible.

\begin{example}
  Let $\cP=(X(0),1)$ be a BC-process for five agents with $X(0)=\big(0,1,2,3,3+x\big)$, where $\tfrac{19}{28}< x\le \tfrac{38}{47}$. We recursively compute
  $$
    X(1)=\left(\frac{1}{2},1,2,\frac{8+x}{3},\frac{6+x}{2}\right),
  $$
  $$
    X(2)=\left(\frac{3}{4},\frac{7}{6},\frac{17+x}{9},\frac{46+5x}{18},\frac{34+5x}{12}\right),\text{ and}
  $$
  $$
    X(3)=\left(\frac{23}{24},\frac{137+4x}{108},\frac{101+7x}{54}\right),
  $$
  where $\cG(0)=\cG(1)=\cG(2)$ is a path $1-2-3-4-5$. For $\cG(3)=\cG(4)$ there is an additional edge between agents $3$ and $5$, so that
  $$
    X(4)=\left(\frac{481+8x}{432},\frac{295+12x}{216},\frac{1784+169x}{864},\frac{1510+161x}{648},\frac{1510+161x}{648}\right)\text{ and}
  $$
  $$
    X(5)=\left(\frac{1071+32x}{864},\frac{3926+233x}{2592},\frac{20972+1939x}{10368},\frac{17432+1795x}{7776},\frac{17432+1795x}{7776}\right).
  $$
  In $\cG(5)$ the only missing edge is the edge between agent $1$ and agent $5$. With this we compute
  $$
    x_1(6)=\frac{49528+3255x}{31104}\quad\text{and}\quad
    x_5(6)=\frac{390820+31361x}{124416}.
  $$
  Since $0<x_5(6)-x_1(6)\le 1$ we read a consensus at time $t=7$.
\end{example}
Up to mirror symmetry there is a unique sequence of influence graphs for five agents that can be realized by a BC-process with maximum freezing time $F(5)=7$.

\begin{example}
  The following sequence of influence graphs $\cG(t)$ can be realized by a BC-process, which shows $S(6)\ge 9$.

  \medskip

  \noindent
  $\cG(0)$: \begin{tikzpicture}[node/.style={circle, draw, inner sep=2pt}]
    \node[node,label=below:$1$] (v1) at (0,0) {};
    \node[node,label=below:$2$] (v2) at (1,0) {};
    \node[node,label=below:$3$] (v3) at (2,0) {};
    \node[node,label=below:$4$] (v4) at (3,0) {};
    \node[node,label=below:$5$] (v5) at (4,0) {};
    \node[node,label=below:$6$] (v6) at (5,0) {};
    \draw (v1) -- (v2) -- (v3) -- (v4) -- (v5) -- (v6);
  \end{tikzpicture} \quad\quad\quad\quad
  $\cG(1)$:  \begin{tikzpicture}[node/.style={circle, draw, inner sep=2pt}]
    \node[node,label=below:$1$] (v1) at (0,0) {};
    \node[node,label=below:$2$] (v2) at (1,0) {};
    \node[node,label=below:$3$] (v3) at (2,0) {};
    \node[node,label=below:$4$] (v4) at (3,0) {};
    \node[node,label=below:$5$] (v5) at (4,0) {};
    \node[node,label=below:$6$] (v6) at (5,0) {};
    \draw (v1) -- (v2) -- (v3) -- (v4) -- (v5) -- (v6);
  \end{tikzpicture}\\
  $\cG(2)$:  \begin{tikzpicture}[node/.style={circle, draw, inner sep=2pt}]
    \node[node,label=below:$1$] (v1) at (0,0) {};
    \node[node,label=below:$2$] (v2) at (1,0) {};
    \node[node,label=below:$3$] (v3) at (2,0) {};
    \node[node,label=below:$4$] (v4) at (3,0) {};
    \node[node,label=below:$5$] (v5) at (4,0) {};
    \node[node,label=below:$6$] (v6) at (5,0) {};
    \draw (v1) -- (v2) -- (v3) -- (v4) -- (v5) -- (v6);
    \draw[bend left=45] (v1) to (v3);
  \end{tikzpicture}\quad\quad\quad\quad
  $\cG(3)$:  \begin{tikzpicture}[node/.style={circle, draw, inner sep=2pt}]
    \node[node,label=below:$1$] (v1) at (0,0) {};
    \node[node,label=below:$2$] (v2) at (1,0) {};
    \node[node,label=below:$3$] (v3) at (2,0) {};
    \node[node,label=below:$4$] (v4) at (3,0) {};
    \node[node,label=below:$5$] (v5) at (4,0) {};
    \node[node,label=below:$6$] (v6) at (5,0) {};
    \draw (v1) -- (v2) -- (v3) -- (v4) -- (v5) -- (v6);
    \draw[bend left=45] (v1) to (v3);
  \end{tikzpicture}\\
  $\cG(4)$:  \begin{tikzpicture}[node/.style={circle, draw, inner sep=2pt}]
    \node[node,label=below:$1$] (v1) at (0,0) {};
    \node[node,label=below:$2$] (v2) at (1,0) {};
    \node[node,label=below:$3$] (v3) at (2,0) {};
    \node[node,label=below:$4$] (v4) at (3,0) {};
    \node[node,label=below:$5$] (v5) at (4,0) {};
    \node[node,label=below:$6$] (v6) at (5,0) {};
    \draw (v1) -- (v2) -- (v3) -- (v4) -- (v5) -- (v6);
    \draw[bend left=45] (v1) to (v3);
    \draw[bend left=45] (v4) to (v6);
  \end{tikzpicture}\quad\quad\quad\quad
  $\cG(5)$:  \begin{tikzpicture}[node/.style={circle, draw, inner sep=2pt}]
    \node[node,label=below:$1$] (v1) at (0,0) {};
    \node[node,label=below:$2$] (v2) at (1,0) {};
    \node[node,label=below:$3$] (v3) at (2,0) {};
    \node[node,label=below:$4$] (v4) at (3,0) {};
    \node[node,label=below:$5$] (v5) at (4,0) {};
    \node[node,label=below:$6$] (v6) at (5,0) {};
    \draw (v1) -- (v2) -- (v3) -- (v4) -- (v5) -- (v6);
    \draw[bend left=45] (v1) to (v3);
    \draw[bend left=45] (v4) to (v6);
  \end{tikzpicture}\\
  $\cG(6)$:  \begin{tikzpicture}[node/.style={circle, draw, inner sep=2pt}]
    \node[node,label=below:$1$] (v1) at (0,0) {};
    \node[node,label=below:$2$] (v2) at (1,0) {};
    \node[node,label=below:$3$] (v3) at (2,0) {};
    \node[node,label=below:$4$] (v4) at (3,0) {};
    \node[node,label=below:$5$] (v5) at (4,0) {};
    \node[node,label=below:$6$] (v6) at (5,0) {};
    \draw (v1) -- (v2) -- (v3) -- (v4) -- (v5) -- (v6);
    \draw[bend left=45] (v1) to (v3);
    \draw[bend left=45] (v4) to (v6);
  \end{tikzpicture}\quad\quad\quad\quad
  $\cG(7)$:  \begin{tikzpicture}[node/.style={circle, draw, inner sep=2pt}]
    \node[node,label=below:$1$] (v1) at (0,0) {};
    \node[node,label=below:$2$] (v2) at (1,0) {};
    \node[node,label=below:$3$] (v3) at (2,0) {};
    \node[node,label=below:$4$] (v4) at (3,0) {};
    \node[node,label=below:$5$] (v5) at (4,0) {};
    \node[node,label=below:$6$] (v6) at (5,0) {};
    \draw (v1) -- (v2) -- (v3) -- (v4) -- (v5) -- (v6);
    \draw[bend left=45] (v1) to (v3);
    \draw[bend left=45] (v1) to (v4);
    \draw[bend left=45] (v2) to (v4);
    \draw[bend left=45] (v4) to (v6);
    \draw[bend left=45] (v3) to (v5);
    \draw[bend left=45] (v3) to (v6);
  \end{tikzpicture}\\
  $\cG(8)$:  \begin{tikzpicture}[node/.style={circle, draw, inner sep=2pt}]
    \node[node,label=below:$1$] (v1) at (0,0) {};
    \node[node,label=below:$2$] (v2) at (1,0) {};
    \node[node,label=below:$3$] (v3) at (2,0) {};
    \node[node,label=below:$4$] (v4) at (3,0) {};
    \node[node,label=below:$5$] (v5) at (4,0) {};
    \node[node,label=below:$6$] (v6) at (5,0) {};
    \draw (v1) -- (v2) -- (v3) -- (v4) -- (v5) -- (v6);
    \draw[bend left=45] (v1) to (v3);
    \draw[bend left=45] (v1) to (v4);
    \draw[bend left=45] (v1) to (v5);
    \draw[bend left=45] (v1) to (v6);
    \draw[bend left=45] (v2) to (v4);
    \draw[bend left=45] (v2) to (v5);
    \draw[bend left=45] (v2) to (v6);
    \draw[bend left=45] (v4) to (v6);
    \draw[bend left=45] (v3) to (v5);
    \draw[bend left=45] (v3) to (v6);
  \end{tikzpicture}
\end{example}

Up to mirror symmetry there are exactly two sequences of influence graphs for six agents that can be realized by a BC-process with maximum freezing time $F(6)=9$.

We can algorithmically determine $F(n)$ using the transition graph for $n$ agents. First we determine for each state $\cG'$ the maximum number of iterations a BC-process can remain in the same state.\footnote{We can easily recognize stable states by checking whether all connected components are complete graphs, i.e.\ graphs with all possible edges.} The longest path in the transitions graphs with bounded loops is an upper bound for $F(n)$. In order to verify $F(n)<k$ we only need to check that no path of length $k$ is realizable as a BC-process. We may also tabulate some non-realizable subpaths to speed up the search. We did this for $7$ agents and obtained $S(7)=12$. There are exactly two corresponding realizable sequences of influence graphs given by
$\cG_1^4 \cG_2^4 \cG_3^3 \cG_4^1$ and
$\cG_1^4 \cG_2^5 \cG_3^2 \cG_4^1$, where $\cG_4$ is the complete graph on seven vertices and $\cG_1,\cG_2,\cG_3$ are given as follows.

\bigskip

\noindent
$\cG_1$: \begin{tikzpicture}[node/.style={circle, draw, inner sep=2pt}]
    \node[node,label=below:$1$] (v1) at (0,0) {};
    \node[node,label=below:$2$] (v2) at (1,0) {};
    \node[node,label=below:$3$] (v3) at (2,0) {};
    \node[node,label=below:$4$] (v4) at (3,0) {};
    \node[node,label=below:$5$] (v5) at (4,0) {};
    \node[node,label=below:$6$] (v6) at (5,0) {};
    \node[node,label=below:$6$] (v7) at (6,0) {};
    \draw (v1) -- (v2) -- (v3) -- (v4) -- (v5) -- (v6)-- (v7);
  \end{tikzpicture}\\
  $\cG_2$: \begin{tikzpicture}[node/.style={circle, draw, inner sep=2pt}]
    \node[node,label=below:$1$] (v1) at (0,0) {};
    \node[node,label=below:$2$] (v2) at (1,0) {};
    \node[node,label=below:$3$] (v3) at (2,0) {};
    \node[node,label=below:$4$] (v4) at (3,0) {};
    \node[node,label=below:$5$] (v5) at (4,0) {};
    \node[node,label=below:$6$] (v6) at (5,0) {};
    \node[node,label=below:$6$] (v7) at (6,0) {};
    \draw (v1) -- (v2) -- (v3) -- (v4) -- (v5) -- (v6)-- (v7);
    \draw[bend left=45] (v1) to (v3);
    \draw[bend left=45] (v5) to (v7);
  \end{tikzpicture}\\
  $\cG_3$: \begin{tikzpicture}[node/.style={circle, draw, inner sep=2pt}]
    \node[node,label=below:$1$] (v1) at (0,0) {};
    \node[node,label=below:$2$] (v2) at (1,0) {};
    \node[node,label=below:$3$] (v3) at (2,0) {};
    \node[node,label=below:$4$] (v4) at (3,0) {};
    \node[node,label=below:$5$] (v5) at (4,0) {};
    \node[node,label=below:$6$] (v6) at (5,0) {};
    \node[node,label=below:$6$] (v7) at (6,0) {};
    \draw (v1) -- (v2) -- (v3) -- (v4) -- (v5) -- (v6)-- (v7);
    \draw[bend left=45] (v1) to (v3);
    \draw[bend left=45] (v1) to (v4);
    \draw[bend left=45] (v2) to (v4);
    \draw[bend left=45] (v4) to (v7);
    \draw[bend left=45] (v4) to (v6);
    \draw[bend left=45] (v5) to (v7);
  \end{tikzpicture}

\section{Maximum fragmentation}\label{sec_fragmentation}

Given a BC-process $\cP=(X(0),\varepsilon)$ with $n$ agents we denote the number of connected components of $\cG(F(n))$ by $S(\cP)$, i.e., the number of connected components of the influence graph after the configurations has frozen. By $S(n)$ we denote the maximum value of $S(\cP)$ over all connected BC-processes with $n$ agents, i.e.\ those BC-processes where $\cG(0)$ is connected.\footnote{If the $x_i(0)$ are pairwise different for $1\le i\le n$ and $\varepsilon$ is sufficiently small, then there are no edges in $\cG(0)$ and we have $n$ connected components. Also in the more general situation where not all connected components consist of a single vertex we can use Lemma~\ref{lemma_decompose} to decompose the configuration into several connected ones.} So, $S(\cP)$ measures the final degree of segregation, i.e.\ how much the stable state deviates from a consensus where $S(\cP)=1$. Clearly, we have $S(n)\le n$, which is indeed attained for $n=1$. The discussion in Example~\ref{example_4} shows $S(1)=S(2)=S(3)=1$. From the transition graph in Figure~\ref{fig_transition_4} we also conclude $S(4)=1$.

\begin{example}
   \label{example_8}
   Consider the BC-process $\cP(X(0),1)$ with a starting configuration given by $x_1(0)=x_2(0)=0$, $x_3(0)=1$, $x_4(0)=2$, and $x_5(0)=3$. Note that $\cG(0)$ is connected. After one iteration we obtain $x_1(1)=x_2(1)=\tfrac{1}{3}$, $x_3(1)=\tfrac{3}{4}$, $x_4(1)=2$, and $x_5(1)=\tfrac{5}{2}$, so that $\cP$ freezes after two iterations with $x_1(2)=x_2(2)=x_3(2)=\tfrac{17}{36}$ and $x_4(2)=x_5(2)=\tfrac{9}{4}$. Thus, we have $S(5)\ge 2$.
\end{example}

By examining the transitions graphs for five and for six agents we can verify $S(5)=S(6)=2$ stating without a proof that $S(n+1)\ge S(n)$ for all $n\in\N$.

\begin{lemma}
  \label{lemma_seg_bound_1}
  Let $\cP=(X(0),1)$ be a BC-process, $s=\#I_1(0)-1$,
  and $t=\#I_2(0)-1$. If $x_2(1)-x_1(1)>1$, then we have
  \begin{equation}
    t> 2(s+1)+\tfrac{1}{s}\ge 6.
  \end{equation}
\end{lemma}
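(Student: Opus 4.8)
Normalize, identify the two index-intervals $I_1(0)$ and $I_2(0)$ explicitly, write $x_2(1)-x_1(1)$ as a single fraction, and bound it from above using only the elementary inequalities forced by membership/non-membership in these intervals.

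First I would invoke affine invariance (Lemma~\ref{lemma_basic}(5)) to assume $x_1(0)=0$, and note that the statement is only meaningful for $s\ge 1$ (so that $1/s$ makes sense; equivalently $\#I_1(0)\ge 2$). By Lemma~\ref{lemma_insider_interval} the insider sets are index-intervals; since agent~$1$ is leftmost, $I_1(0)=\{1,\dots,s+1\}$, and since $2\in I_1(0)$ we get $1\in I_2(0)$ (Lemma~\ref{lemma_basic}(3)), hence $I_2(0)=\{1,\dots,t+1\}$. Because $x_1(0)=0\le x_2(0)$, the confidence window of agent~$2$ reaches at least as far to the right as that of agent~$1$, so $r_1(0)\le r_2(0)$, i.e.\ $t\ge s$; and $t=s$ would give $x_1(1)=x_2(1)$, contradicting the hypothesis, so in fact $t\ge s+1$. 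Writing $A=\sum_{j=2}^{s+1}x_j(0)$ and $B=\sum_{j=s+2}^{t+1}x_j(0)$ (a sum of $t-s$ terms), the updating rule~(\ref{eq_update_rule}) gives $x_1(1)=\tfrac{A}{s+1}$ and $x_2(1)=\tfrac{A+B}{t+1}$, so that
$$
  x_2(1)-x_1(1)=\frac{(s+1)B-(t-s)A}{(s+1)(t+1)}.
$$

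Next I would collect the three elementary bounds: $x_2(0)\le 1$ (from $2\in I_1(0)$); $A\ge x_2(0)$ (the first summand, the remaining ones being non-negative); and $x_j(0)\le x_{t+1}(0)\le x_2(0)+1$ for every $j\le t+1$ (from $t+1\in I_2(0)$ and the ordering), whence $B\le (t-s)(x_2(0)+1)$. Substituting the last two into the numerator,
$$
  (s+1)B-(t-s)A\;\le\;(t-s)\big[(s+1)(x_2(0)+1)-x_2(0)\big]\;=\;(t-s)\big[s\,x_2(0)+s+1\big]\;\le\;(t-s)(2s+1),
$$
using $x_2(0)\le 1$ in the last step. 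Hence $x_2(1)-x_1(1)\le \dfrac{(t-s)(2s+1)}{(s+1)(t+1)}$, and the hypothesis $x_2(1)-x_1(1)>1$ forces $(t-s)(2s+1)>(s+1)(t+1)$.

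The rest is routine algebra: $(t-s)(2s+1)-(t+1)(s+1)=st-2s^2-2s-1$, so the inequality is equivalent to $st>2s^2+2s+1$, i.e.\ (dividing by $s>0$) $t>2(s+1)+\tfrac1s$, which is the claimed bound. Since $t\in\N$: for $s=1$ this reads $t>5$, hence $t\ge 6$; for $s\ge 2$ one has $2(s+1)+\tfrac1s\ge 6.5$, hence $t\ge 7\ge 6$. I do not expect a genuine obstacle here; the only points needing care are the bookkeeping that $I_1(0)$ and $I_2(0)$ are exactly the intervals $\{1,\dots,s+1\}$ and $\{1,\dots,t+1\}$, the dismissal of the degenerate cases $s=0$ and $t=s$, and keeping the inequality directions straight (a lower bound on $A$ against an upper bound on $B$). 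One can sharpen the estimate to $t>s^2+3s+1$ by using $A\ge s\,x_2(0)$ in place of $A\ge x_2(0)$, but the weaker bound already gives the statement.
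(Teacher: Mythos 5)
Your proof is correct and follows essentially the same route as the paper's: both compute $x_2(1)-x_1(1)$ explicitly, bound it above by $\frac{(t-s)(2s+1)}{(s+1)(t+1)}$ using that all opinions in $I_1(0)$ and $I_2(0)$ lie within unit windows of $x_1(0)$ resp.\ $x_2(0)$, and then extract $t>2(s+1)+\tfrac{1}{s}$ by the same algebra. The only differences are cosmetic (you work with the partial sums $A,B$ of the opinions where the paper parameterizes by consecutive gaps) plus some welcome extra bookkeeping on your part ($s\ge 1$, $t\ge s+1$, and the integer rounding that yields $t\ge 6$) that the paper leaves implicit.
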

\begin{proof}
  Using the parameterization $x_i(0)=\sum_{j=1}^{i-1} x_j$ with non-negative $x_j$ we compute
  \begin{equation}
    x_1(1)= \frac{\sum_{i=1}^s (s+1-i)x_i}{s+1}
  \end{equation}
  and
  \begin{equation}
    x_2(1)= \frac{\sum_{i=1}^t (t+1-i)x_i}{t+1} ,
  \end{equation}
  so that
  \begin{eqnarray*}
    x_2(1)-x_1(1)&=&\frac{t-s}{(s+1)(t+1)}\sum_{i=1}^s ix_i
    \,+\, \frac{\sum\limits_{i=s+1}^t (t+1-i)x_i}{t+1}\\
    &\le& \frac{(t-s)s}{(s+1)(t+1)}\cdot \underset{x_{s+1}(0)-x_1(0)}{\underbrace{\sum_{i=1}^s x_i}} \,+\,\frac{t-s}{t+1} \cdot \underset{x_{t+1}(0)-x_{s+1}(0)}{\underbrace{\sum_{i=s+1}^t x_i}} \\
    &\le& \frac{(t-s)s}{(s+1)(t+1)}+ \frac{t-s}{t+1}=
    \frac{(t-s)(2s+1)}{(s+1)(t+1)}.
  \end{eqnarray*}
  Since $x_2(1)-x_1(1)>1$ and $s\in\N_{>0}$ the statement follows.
\end{proof}

\begin{lemma}
   \label{lemma_seg_bound_2}
   Let $\cP=(X(0),1)$ be a BC-process with $n\ge 3$ agents $2\le i\le n-1$, $s_1,s_2\in \N$ with $I_i(0)=\{i-s_1,\dots,i,\dots,i+s_2\}$, $a_1,a_2\in\N$ with $I_{i-1}(0)=\{i-a_1,\dots,i,\dots,i+a_2\}$, and $b_1,b_2\in\N$ with $I_{i+1}(0)=\{i-b_1,\dots,i,\dots i+b_2\}$. If $x_i(1)-x_{i-1}(1)>1$ and $x_{i+1}(1)-x_i(1)>1$, then we have
   $a_1+b_2\ge 6$.
\end{lemma}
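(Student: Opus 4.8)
The plan is to reduce the two extremal configurations to Lemma~\ref{lemma_seg_bound_1} and to clear the remaining finitely many cases by a short direct check (or by linear programming). First some bookkeeping: since $i\in I_{i-1}(0)$ and $i\in I_{i+1}(0)$, the edges $\{i-1,i\}$ and $\{i,i+1\}$ lie in $E(0)$, so by Lemma~\ref{lemma_basic}(3) we have $i-1,i+1\in I_i(0)$, i.e.\ $s_1,s_2\ge 1$. From the assumed ordering $x_1(0)\le\cdots\le x_n(0)$ the endpoints $l_j(0)$ and $r_j(0)$ are weakly increasing in $j$, hence $a_1\ge s_1\ge b_1\ge 0$ and $b_2\ge s_2\ge a_2\ge 0$; in particular $a_1+b_2\ge s_1+s_2$, so we may assume $s_1+s_2\le 5$ from now on.

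\textbf{The case $a_1=1$ (and, symmetrically, $b_2=1$).} Here $s_1=1$, so $I_{i-1}(0)$ and $I_i(0)$ share the left endpoint $i-1$, and in fact $I_i(0)=\{i-1,\dots,i+s_2\}$ contains both neighbourhoods. Restricting the process to the agents $\{i-1,i,\dots,i+s_2\}$ and relabelling them $1,2,\dots,s_2+2$ yields a BC-process in which agent~$1$ is leftmost, the updates of the new agents $1$ and $2$ agree with those of the original agents $i-1$ and $i$, and $x_2(1)-x_1(1)=x_i(1)-x_{i-1}(1)>1$. Lemma~\ref{lemma_seg_bound_1} then forces $\#I_i(0)-1\ge 6$ (the inequality $t>2(s+1)+\tfrac1s$ makes the integer $t$ at least $6$ for every $s\ge 1$), so $s_2=\#I_i(0)-2\ge 5$ and hence $b_2\ge s_2\ge 5$, giving $a_1+b_2\ge 6$. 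Applying the same argument to the mirror-image process (replace $X(0)$ by its reversed negative, which swaps the roles of $i-1$ and $i+1$, of the $a_\bullet$ and the $b_\bullet$, and turns condition~(II) into condition~(I)) disposes of $b_2=1$.

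\textbf{The remaining case $a_1\ge 2$ and $b_2\ge 2$.} If $a_1+b_2\ge 6$ we are done, so suppose $a_1+b_2\le 5$; then $(a_1,b_2)\in\{(2,2),(2,3),(3,2)\}$, and with $1\le s_1\le a_1$, $1\le s_2\le b_2$, $0\le b_1\le s_1$, $0\le a_2\le s_2$ only finitely many parameter tuples survive. Fix one and normalize $x_i(0)=0$. The (non-)membership conditions defining $I_{i-1}(0),I_i(0),I_{i+1}(0)$ together with the two hypotheses $x_i(1)-x_{i-1}(1)>1$ and $x_{i+1}(1)-x_i(1)>1$ form a finite system of linear (in)equalities in the coordinates $x_j(0)$, $j\in I_{i-1}(0)\cup I_i(0)\cup I_{i+1}(0)$, because once the neighbourhoods are fixed $x_{i-1}(1),x_i(1),x_{i+1}(1)$ are linear in these $x_j(0)$ via (\ref{eq_update_rule}). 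One verifies this system is infeasible — e.g.\ by the linear program of Lemma~\ref{lemma_lp_realization_check} augmented with the two gap constraints, or by hand, using that the common block of $I_{i-1}(0)$ and $I_i(0)$ (resp.\ of $I_i(0)$ and $I_{i+1}(0)$) cancels in $x_i(1)-x_{i-1}(1)$ (resp.\ $x_{i+1}(1)-x_i(1)$), reducing that gap to an explicit combination of a handful of the $x_j(0)$ spread over index ranges whose spans are each capped at $\varepsilon=1$ by a defining edge of one of the three neighbourhoods, and these spans sum to strictly less than the value the strict inequality demands. This contradiction finishes the proof.

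\textbf{Main obstacle.} The third step is the crux: for $a_1,b_2\ge 2$ there is no single clean inequality and a case distinction seems unavoidable. What keeps it manageable is precisely the overlap cancellation just described — it turns each case into a comparison of a small integer against a sum of a few $\varepsilon$-bounded spans — together with the reductions ($a_1+b_2\ge s_1+s_2$, and Lemma~\ref{lemma_seg_bound_1} handling $a_1=1$ and $b_2=1$) that shrink the list of surviving tuples to only $(a_1,b_2)\in\{(2,2),(2,3),(3,2)\}$. Getting the threshold to come out as exactly $6$, matching Lemma~\ref{lemma_seg_bound_1}, is where the constants have to be tracked carefully.
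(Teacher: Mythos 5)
Your proposal is correct in its overall architecture and reaches the same endpoint, but it organizes the reduction differently from the paper. The paper keeps all six parameters $s_1,s_2,a_1,a_2,b_1,b_2$ in play from the start and removes the continuous degrees of freedom by an extremal \emph{shifting} argument: each agent in the window is pushed to an extreme admissible position (depending on which of the three central agents it influences), so that w.l.o.g.\ $x_j(0)\in\{-2,-1,0,1,2\}$; this turns $x_{i-1}(1),x_i(1),x_{i+1}(1)$ into closed-form rational functions of the six integer parameters, and the lemma becomes a pure integer enumeration. You instead (i) dispose of the boundary cases $a_1=1$ and $b_2=1$ by embedding them into Lemma~\ref{lemma_seg_bound_1} via the restriction to the window $\{i-1,\dots,i+s_2\}$ (this step is sound: since $a_1=1$ forces $s_1=1$, both $I_{i-1}(0)$ and $I_i(0)$ are initial segments of the window, the time-$0$ updates are unaffected by the restriction, and the integrality of $t$ in $t>2(s+1)+\tfrac1s$ indeed yields $s_2\ge 5$), and (ii) reduce the rest to LP infeasibility checks for the three pairs $(a_1,b_2)\in\{(2,2),(2,3),(3,2)\}$ with continuous opinion variables. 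Your route buys a cleaner treatment of the extreme cases — the paper does not exploit Lemma~\ref{lemma_seg_bound_1} here at all — and avoids having to justify that the shifts preserve the insider sets while only increasing the gaps $d_1,d_2$, which is the delicate point in the paper's argument. What it costs is that the terminal verification is a family of LP infeasibility certificates rather than a table lookup over integer tuples. Be aware that both your write-up and the paper's leave that terminal finite check unexecuted (the paper ends with ``we may simply enumerate the feasible tuples''); your parenthetical ``or by hand, using that the common block cancels\dots'' is too loose to stand on its own, so if you want the proof to be self-contained you should either actually run the LPs (and exhibit dual multipliers as in Example~\ref{example_7}) or carry out the cancellation explicitly for each of the surviving tuples.
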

\begin{proof}
  We use the parameterization $x_i(0)=0$, $x_{i-j}=-\sum_{h=1}^j x_h$, and $x_{i+j}=\sum_{h=1}^j y_h$, where $j\in\N$ and the $x_h,y_h$ are non-negative. With this, we compute
  \begin{equation}
    x_i(1)= \frac{-\sum\limits_{h=1}^{s_1} (s_1+1-h)x_h+\sum\limits_{h=1}^{s_2} (s_2+1-h)y_h}{s_1+s_2+1}.
  \end{equation}
  \begin{equation}
    x_{i-1}(1)= \frac{-\sum\limits_{h=1}^{a_1} (a_1+1-h)x_h+\sum\limits_{h=1}^{a_2} (a_2+1-h)y_h}{a_1+a_2+1},
  \end{equation}
  and
  \begin{equation}
    x_{i+1}(1)= \frac{-\sum\limits_{h=1}^{b_1} (b_1+1-h)x_h+\sum\limits_{h=1}^{b_2} (b_2+1-h)y_h}{b_1+b_2+1}.
  \end{equation}
  The aim is to show constraints for the parameters $s_1,s_2,a_1,a_2,b_1,b_2$ using the assumptions $d_1:=x_i(1)-x_{i-1}(1)>1$ and $d_2:=x_{i+1}(1)-x_i(1)>1$. We have $a_1\ge s_1\ge b_1\ge 0$, $0\le a_2\le s_2\le b_2$, and $s_1,s_2\ge 1$. Now we look at the cases how the agents $i-a_1\le j\le i+b_2$ influence the three central agents $i-1$, $i$, and $i+1$. All agents $i-a_1\le j<i-s_1$ only influence $i-1$, so that moving that as far left as possible increases $d_1$ and keeps $d_2$ as well as our parameters fixed. Thus, w.l.o.g.\ we assume $x_{j}(0)=x_{i-1}(0)-1$. Similarly, for all agents $i+s_2<j\le i+b_2$ w.l.o.g.\ we assume $x_j(0)=x_{i+1}(0)+1$. Next we observe that the agents $i-b_1\le j\le i+a_2$ influence all three agents $i-1$, $i$, and $i+1$, so that moving their initial opinions around has no effect on $d_1$ and $d_2$. W.l.o.g.\  we assume $x_j(0)=x_{i}(0)=0$ for all $i-b_1\le j\le i-1$ and all $i+1\le j\le i+a_2$.
  The agents with $i-s_1\le j<i-b_1$ only influence agents $i-1$ and $i$, so that their movement has no effect on $d_1$. In order to maximize $d_2$ we can move their initial opinion as far to the left as possible. Thus, we assume $x_j(0)=-1$. Similarly, we assume $x_j(0)=1$ for all $a_2< j\le s_2$. Now, moving $x_{i-1}(0)$ to the left as far as possible increases $d_1$ as well as $d_2$, so that we assume $x_{i-1}(0)=-1$. Similarly, we assume $x_{i+1}(0)=1$, which also affects the leftmost and the rightmost agents that are influenced at time $0$ by agent~$i-1$ or agent~$i+1$, respectively. I.e., we have $x_j(0)\in\{-2,-1,0,1,2\}$ for all $i-a_1\le j\le i+b_2$. With this, we compute
  \begin{equation}
   x_i(1)=\frac{-\left(s_1-b_1\right)+\left(s_2-a_2\right)}{s_1+s_2+1},
  \end{equation}
  \begin{equation}
   x_{i-1}(1)=\frac{-2\left(a_1-s_1\right)-\left(s_1-b_1\right)}{a_1+a_2+1},
  \end{equation}
  and
  \begin{equation}
   x_{i+1}(1)=\frac{2\left(b_2-s_2\right)+\left(s_2-a_2\right)}{b_1+b_2+1}.
  \end{equation}
  Now we may simply enumerate the feasible tuples for our parameters which satisfy $x_i(1)-x_{-1}i(1)>1$ and $x_{i+1}(1)-x_i(1)>1$.
 \end{proof}

 The cases of Lemma~\ref{lemma_seg_bound_1} and Lemma~\ref{lemma_seg_bound_2} cannot occur for $n\le 6$ agents. So, if we would end up with three clusters they all have to be of size $2$ and $n=6$. Since $S(4)=1$ we can assume a freezing times of one, i.e.\ all cluster are created at the same time. As influence graph only a path of length $5$ might be possible. However, we can also exclude this possibility by a small computation so that we end up with a theoretical proof for $S(5)=S(6)=2$.

Next we state two constructions for a linear number of connected components in the stable state.

\begin{lemma}
  We have $S(5k+2)\ge 2k+2$ for all $k\ge 1$.
\end{lemma}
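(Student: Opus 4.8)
The plan is to give an explicit construction. Starting from the five‑agent gadget of Example~\ref{example_8}, which produces two clusters from a connected starting configuration (agents at positions $0,0,1,2,3$ with $\varepsilon=1$, freezing after two steps into two clusters), I would string together $k$ shifted copies of this block and glue consecutive copies so that the overall influence graph $\cG(0)$ is connected while, after freezing, each block still contributes its two clusters, plus (for the $+2$ in $2k+2$) one or two extra terminal clusters. Concretely, with $\varepsilon=1$ one takes the $j$‑th block ($0\le j\le k-1$) occupying an interval of length roughly $3$ so that the last agent of block $j$ and the first agent(s) of block $j+1$ are within distance $1$ at time $0$ — guaranteeing $\cG(0)$ is connected — but their opinions drift apart after the first update, so that by Lemma~\ref{lemma_decompose} the process decomposes after step~$1$ into essentially independent sub‑blocks, each of which then behaves like the Example~\ref{example_8} gadget and freezes into two clusters. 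Counting: $k$ blocks give $2k$ clusters, and a careful choice of the two extremal agents (for instance a doubled extremal opinion as in Example~\ref{example_8}, or two added isolated‑after‑step‑one agents at the far right) yields $2$ more, for a total of $2k+2$ clusters from $5k+2$ agents.

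The key steps, in order, are: (i) fix $\varepsilon=1$ and write down the starting configuration as a concatenation of $k$ translated copies of the length‑$3$ gadget together with the $2$ extra agents, with the translation offset chosen slightly larger than the gadget width so that only adjacent gadgets touch in $\cG(0)$; (ii) check that $\cG(0)$ is connected — this is immediate from the chosen offsets, since consecutive blocks share an edge; (iii) compute $X(1)$ and verify that between block $j$ and block $j+1$ the gap exceeds $\varepsilon=1$, which is the only non‑trivial inequality and which I would make robust by leaving a small slack parameter in the offset; (iv) invoke Lemma~\ref{lemma_decompose} to conclude that from time $1$ onward the dynamics splits into independent pieces, each isomorphic (up to affine equivalence, Lemma~\ref{lemma_basic}(5)) to the Example~\ref{example_8} process or a trivial singleton piece; (v) apply the known behaviour of the gadget (two clusters) and of the trivial pieces (one cluster each) and add up to get $2k+2$ connected components in the frozen state, hence $S(5k+2)\ge 2k+2$.

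The main obstacle I expect is step~(iii): ensuring that after one synchronous update the inter‑block gaps have genuinely opened past $\varepsilon$, \emph{simultaneously} for all $k$ junctions and without destroying the internal dynamics of any block. The delicate point is that the first‑round averaging inside a block already moves its boundary agents, so the offset must be tuned so that the boundary agents of neighbouring blocks were just barely connected at $t=0$ (to keep $\cG(0)$ connected) yet separate at $t=1$; since all blocks are translates of one another this is a single one‑variable inequality, and I would verify it once and then quote translation invariance (Lemma~\ref{lemma_basic}(5)) for the rest. A secondary subtlety is handling the two "extra" agents so that they really contribute two additional clusters rather than merging into an adjacent block's cluster after freezing; the cleanest route is to mimic Example~\ref{example_8} exactly at one end (a doubled opinion that fuses with the nearest block to yield one cluster) and to argue the count directly, or alternatively to attach them so that from step~$1$ on they are isolated singletons, which Lemma~\ref{lemma_decompose} again makes routine. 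Everything else is a finite explicit computation on a single block, already essentially carried out in Example~\ref{example_8}.
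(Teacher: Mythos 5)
Your route is genuinely different from the paper's --- the paper does not glue disjoint translates of the Example~\ref{example_8} gadget but writes down a single global configuration (two agents at every opinion $4j$ for $0\le j\le k$ and one agent at every other integer in $\{0,\dots,4k\}$, so the ``gadgets'' overlap in the doubled agents) and computes $X(1),X(2),X(3)$ explicitly --- but there are two concrete gaps that stop your plan from going through. The first is the count itself. You need every five-agent block to contribute two clusters \emph{and} the two leftover agents to contribute two further clusters; for $k=1$ that would give $S(7)\ge 4$, contradicting the computationally verified value $S(7)=3$ reported in Section~\ref{sec_fragmentation}, so no tuning of the offsets can rescue the accounting. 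The ``$+2$'' is exactly the weak link: a doubled extremal opinion as in Example~\ref{example_8} enlarges an existing cluster rather than creating a new one, and two agents appended at one end that are mutually connected at time $0$ will not end up as two separate singletons. (In fact the paper's own configuration realizes the component sizes $\left(3,1,4,1,4,\dots,1,4,1,3\right)$, i.e.\ only $2k+1$ components --- check $k=1$: $X(0)=(0,0,1,2,3,4,4)$ freezes with sizes $(3,1,3)$ --- so the bound $2k+2$ in the statement appears to be an off-by-one even in the paper; $2k+1$ is what this kind of construction actually delivers.)

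The second gap is your step~(iii). Connecting adjacent gadgets at time $0$ is not compatible with ``verify one block and quote translation invariance'': once the last agent of block $j$ and the first agents of block $j+1$ are within distance $1$, the insider sets of the boundary agents change, and with them the entire first update of each block. Concretely, in the isolated gadget $(0,0,1,2,3)$ the agent at $3$ moves to $\tfrac{5}{2}$ and later fuses with the agent at $2$ into a cluster of size $2$; as soon as two agents at opinion $4$ become visible to it, it moves to $\tfrac{13}{4}$ instead, the agent at $2$ becomes an isolated singleton, and the block's contribution changes from $(3,2)$ to something of the form $(3,1,\dots)$. So the coupled dynamics must be recomputed globally rather than blockwise, which is precisely what the paper's overlapping construction and its three displayed iterates do.
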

\begin{proof}
  We consider a BC-process $\cP=(X(0),1)$ with $x_i(0)\in\{0,1,\dots,4k\}$. More precisely, there are exactly two agents with $x_i=4j$ for all $0\le j\le k$ and exactly one agent with $x_i=h$ for all $0\le h\le 4k$ that are not divisible by $4$. Setting $K:=4k$, we compute
  \begin{eqnarray*}
    \tiny
    X(1)&\!\!\!\!=\!\!\!\!&\big(\tfrac{1}{3},\tfrac{1}{3},\tfrac{3}{4},\underset{1}{\underbrace{2,3\!+\!\tfrac{1}{4},4,4,5\!-\!\tfrac{1}{4}}},\underset{2}{\underbrace{6,7\!+\!\tfrac{1}{4},8,8,9\!-\!\tfrac{1}{4}}},\dots,\underset{k-1}{\underbrace{4k\!-\!6,4k\!-\!5\!+\!\tfrac{1}{4},K\!-\!4,K\!-\!4,K\!-\!3\!-\!\tfrac{1}{4}}},\\
    &&K\!-\!2,K\!-\!\tfrac{3}{4},K\!-\!\tfrac{1}{3},K\!-\!\tfrac{1}{3}\big)\\
    X(2)&\!\!\!\!=\!\!\!\!&\big(\tfrac{17}{36},\tfrac{17}{36},\tfrac{17}{36},\underset{1}{\underbrace{2,3\!+\!\tfrac{3}{4},4,4,5\!-\!\tfrac{3}{4}}},
    \dots,\underset{k-1}{\underbrace{K\!-\!6,K\!-\!5\!+\!\tfrac{3}{4},K\!-\!4,K\!-\!4,K\!-\!3\!-\!\tfrac{3}{4}}},K\!-\!2,K\!-\!\tfrac{17}{36},K\!-\!\tfrac{17}{36},K\!-\!\tfrac{17}{36}\big)\\
    X(3)&\!\!\!\!=\!\!\!\!&\big(
    \tfrac{17}{36},\tfrac{17}{36},\tfrac{17}{36},\underset{1}{\underbrace{2,4,4,4,4}},
    \dots,\underset{k-1}{\underbrace{K\!-\!6,K\!-\!4,K\!-\!4,K\!-\!4,K\!-\!4\!}},K\!-\!2,K\!-\!\tfrac{17}{36},K\!-\!\tfrac{17}{36},K\!-\!\tfrac{17}{36}\big),
  \end{eqnarray*}
  so that the freezing time is three and there are $2k+2$ components in $\cG(3)$. The sizes of the components are given by
  $$
    \big(3,\underset{1}{\underbrace{1,4}},\dots,\underset{k-1}{\underbrace{1,4}},3\big).
  $$
\end{proof}

\begin{lemma}
  \label{lemma_segregation_fraction_one_half}
  We have $S(4k+9)\ge 2k+3$ for all $k\ge 0$.
\end{lemma}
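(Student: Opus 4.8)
The plan is to mimic the proof of the previous lemma (the $S(5k+2)\ge 2k+2$ construction): exhibit, for every $k\ge 0$, an explicit starting configuration $X(0)\in\R^{4k+9}$ with confidence level $\varepsilon=1$ whose BC-process is connected, freezes after a number of steps bounded uniformly in $k$, and whose stable influence graph has exactly $2k+3$ connected components. Since $S(n)$ is taken over connected BC-processes, the first requirement is that $\cG(0)$ be connected; as in the $S(5k+2)$ construction this is arranged by placing the agents at (essentially) consecutive integer positions, so that $\cG(0)$ contains the Hamiltonian path $1-2-\cdots-n$. The bound $S(n)\le n$ shows $2k+3$ is the best one could hope for at this density, so the construction is tight in spirit, improving the ratio $\tfrac25$ of the $S(5k+2)$ construction to $\tfrac12$.

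Concretely, I would build $X(0)$ from a repeating period-$4$ gadget — a short block of agents with a suitably placed multiplicity, exactly in the spirit of the doubled agents in the $S(5k+2)$ construction — designed so that, one step into the dynamics, it detaches from its neighbours and, after collapsing, contributes a net of $2$ connected components while consuming only $4$ agents; this is flanked by two fixed boundary gadgets that between them use $9$ agents and contribute the remaining $3$ components. Writing $X(0)$ down and evaluating the updating rule~(\ref{eq_update_rule}) once should produce a $\cG(1)$ in which gaps of width strictly larger than $\varepsilon=1$ have opened up periodically; by Lemma~\ref{lemma_decompose} these gaps persist, so the process decomposes into $2k+3$ independent small pieces, each of which one checks reaches consensus within one or two further steps, giving the freezing claim via Lemma~\ref{lemma_basic}. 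Counting the pieces gives $2k+3$ and counting agents gives $4k+9$. It is also natural to make the whole configuration mirror-symmetric about its centre, so that by Lemma~\ref{lemma_basic}.(4) it suffices to analyse one half — a central block plus one reflected end.

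The routine part is the verification: once $X(0)$ is fixed it is a finite computation with small fractions, entirely parallel to the displayed computation of $X(1),X(2),X(3)$ in the previous lemma. The delicate part — and the main obstacle — is designing the period-$4$ gadget (and the two boundary gadgets) so that all constraints hold simultaneously: $\cG(0)$ connected, the separating non-edges appearing at exactly step $1$ (not earlier, which would disconnect $\cG(0)$, and not later, which could merge the intended clusters), every resulting piece genuinely collapsing to a point rather than drifting into its neighbour, and the two boundary gadgets calibrated so that the totals close exactly to $4k+9$ agents and $2k+3$ clusters. Once a period-$4$ gadget realising the $\tfrac12$ ratio is identified, the remainder is bookkeeping.
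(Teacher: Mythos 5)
Your plan is the right one and matches the paper's approach in outline: an explicit starting configuration built from a repeating four-agent gadget with two boundary blocks, analysed by iterating the updating rule a bounded number of times and invoking Lemma~\ref{lemma_decompose} to separate the pieces. However, the proposal stops exactly where the proof begins. You never write down the gadget; you only list the properties it would need to have, and you explicitly defer the design (``Once a period-$4$ gadget realising the $\tfrac12$ ratio is identified, the remainder is bookkeeping''). For a lemma of this type the gadget \emph{is} the proof --- everything else really is bookkeeping --- so what you have is a statement of intent, not an argument. Nothing you wrote certifies that a period-$4$ gadget with the required behaviour exists at all; that is precisely the nontrivial claim.

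For comparison, the paper's construction takes
$X(0)=\big(0,1,1+\tfrac{1}{2},2,\;3,4,4+\tfrac{1}{2},5,\;\dots,\;K,K+1,K+1+\tfrac{1}{2},K+2,\;K+3,K+4,K+4+\tfrac{1}{2},K+5,K+6\big)$
with $K=3k$, i.e.\ the gadget is $\big(3j,\,3j+1,\,3j+1+\tfrac12,\,3j+2\big)$, four agents packed into an interval of length $2$ with consecutive gadgets one unit apart, flanked by a four-agent block on the left and a five-agent block on the right. One then computes $X(1)$, $X(2)$, $X(3)$ explicitly: after one step each gadget's leftmost agent detaches as a singleton and the remaining three agents merge, the process freezes at $t=3$, and the component sizes are $\big(4,\,1,3,\,\dots,\,1,3,\,1,\,4\big)$, giving $2k+3$ components on $4k+9$ agents. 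Two of your guesses happen to be consistent with this (the boundary blocks do use $9$ agents for $3$ components, and the configuration is mirror symmetric), but some details of your sketch do not match what actually happens: the gaps do not all open ``periodically'' as clean inter-gadget splits --- each gadget splits \emph{internally} into a $1+3$ pattern --- and the freezing time is $3$, not the ``one or two further steps'' per piece you describe. To turn your proposal into a proof you must exhibit a concrete $X(0)$ and carry out the finite computation; as it stands the key idea is missing.
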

\begin{proof}
  We consider a BC-process $\cP=(X(0),1)$ with
  $$
    X(0)=\big(0,1,1+\tfrac{1}{2},2,\underset{1}{\underbrace{3,4,4+\tfrac{1}{2},5}},\dots,\underset{k}{\underbrace{K,K+1,K+1+\tfrac{1}{2},K+2}},K+3,K+4,K+4+\tfrac{1}{2},K+5,K+6\big),
  $$
  where $K:=3k$. With this, we compute
  \begin{eqnarray*}
    X(1)&\!\!\!=\!\!\!& \big(
    \tfrac{1}{2},1\!+\!\tfrac{1}{8},1\!+\!\tfrac{1}{2},2\!-\!\tfrac{1}{8},\underset{1}{\underbrace{3,4\!+\!\tfrac{1}{8},4\!+\!\tfrac{1}{2},5\!-\!\tfrac{1}{8}}},\dots,\underset{k}{\underbrace{K,K\!+\!1\!+\!\tfrac{1}{8},K\!+\!1\!+\!\tfrac{1}{2},K\!+\!2\!-\!\tfrac{1}{8}}},\\
    && \!K\!+\!3,K\!+\!4+\tfrac{1}{8},K\!+\!4+\tfrac{1}{2},K\!+\!5\!-\!\tfrac{1}{8},K\!+\!5\!+\!\tfrac{1}{2}
    \big)\\
    X(2)&\!\!\!=\!\!\!&\big(1\!+\!\tfrac{1}{24},1\!+\!\tfrac{1}{4},1\!+\!\tfrac{1}{4},1\!+\!\tfrac{1}{2},
    \underset{1}{\underbrace{3,4\!+\!\tfrac{1}{2},4\!+\!\tfrac{1}{2},4\!+\!\tfrac{1}{2}}},\dots,\underset{k}{\underbrace{K,K\!+\!1\!+\!\tfrac{1}{2},K\!+\!1\!+\!\tfrac{1}{2},K\!+\!1+\!\tfrac{1}{2}}}, \\
    && K\!+\!3,K\!+\!5\!-\!\tfrac{1}{2},K\!+\!5\!-\!\tfrac{1}{4},K\!+\!5\!-\!\tfrac{1}{4},K\!+\!5\!-\!\tfrac{1}{24}\big) \\
    X(3)&\!\!\!=\!\!\!&\big(1\!+\!\tfrac{25}{96},1\!+\!\tfrac{25}{96},1\!+\!\tfrac{25}{96},1\!+\!\tfrac{25}{96},
    \underset{1}{\underbrace{3,4\!+\!\tfrac{1}{2},4\!+\!\tfrac{1}{2},4\!+\!\tfrac{1}{2}}},\dots,\underset{k}{\underbrace{K,K\!+\!1\!+\!\tfrac{1}{2},K\!+\!1\!+\!\tfrac{1}{2},K\!+\!1+\!\tfrac{1}{2}}},\\
    &&K\!+\!3,K\!+\!5\!-\tfrac{25}{96},K\!+\!5\!-\tfrac{25}{96},K\!+\!5\!-\tfrac{25}{96},K\!+\!5\!-\tfrac{25}{96}
    \big),
  \end{eqnarray*}
  so that the freezing time is three and there are $2k+3$ components in $\cG(3)$. The sizes of the components are given by
  $$
    \big(4,\underset{1}{\underbrace{1,3}},\dots,\underset{k}{\underbrace{1,3}},1,4\big).
  $$
\end{proof}

As a fun fact we remark that ChatGPT proves $S(n)=\lceil\tfrac{n}{2}\rceil$ based on the wrong assumption that no cluster of size one can occur in the middle of an initially connected starting configuration (and a wrong construction for the lower bound).

\begin{example}
  \label{ex_three_ones}
  Let $\cP=(X(0),1)$ be a BC-process with $13$ agents, where $X(0)=(0,0,0,1,1,2,3,4,5,5,6,6,6)$. We compute
  $X(1)=\big(\tfrac{2}{5},\tfrac{2}{5},\tfrac{2}{5},\tfrac{2}{3},\tfrac{2}{3},\tfrac{7}{4},3,\tfrac{17}{4},\tfrac{16}{3},\tfrac{16}{3},\tfrac{28}{5},\tfrac{28}{5},\tfrac{28}{5}\big)$ and $\big(\tfrac{38}{75},\tfrac{38}{75},\tfrac{38}{75},\tfrac{7}{4},3\tfrac{17}{4},6-\tfrac{38}{75},6-\tfrac{38}{75},6-\tfrac{38}{75},6-\tfrac{38}{75},6-\tfrac{38}{75}\big)$. So, the freezing time is two and the cluster sizes are given by $(5,1,1,1,5)$.
\end{example}

The moral of Lemma~\ref{lemma_seg_bound_1} and Lemma~\ref{lemma_seg_bound_2} is that if we have a split with a resulting small connected component, then at both or the unique other side with have quite many agents condensed in a small interval. In principle this can be used to upper bound the number of connected components.
\begin{lemma}
  Let $\cP=(X(0),1)$ be a connected BC-process, then we have
  \begin{equation}
    S(\cP) < \omega(0)+1.
  \end{equation}
\end{lemma}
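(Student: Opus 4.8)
The plan is to analyze the frozen opinion profile $X(t^\star)$ with $t^\star:=F(\cP)$ and to show that the distinct opinions occurring in it are pairwise more than $\varepsilon=1$ apart; since these level sets turn out to be exactly the connected components of the frozen influence graph, the final width already exceeds $S(\cP)-1$, and the claim then follows from the monotonicity of the width (Lemma~\ref{lemma_basic}.(2)).

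Let $c_1<c_2<\dots<c_m$ be the distinct values among $x_1(t^\star),\dots,x_n(t^\star)$ and let $i^\star$ be the largest index with $x_{i^\star}(t^\star)=c_1$. All insiders of $i^\star$ at time $t^\star$ have opinion at least $c_1$ by minimality and at most $c_1+\varepsilon=c_1+1$ by the confidence bound, and since $X(t^\star+1)=X(t^\star)$ the updating rule~(\ref{eq_update_rule}) tells us that $c_1$ equals the arithmetic mean of these opinions. An average of numbers all $\ge c_1$ can equal $c_1$ only when every one of them equals $c_1$, so $I_{i^\star}(t^\star)$ consists solely of agents sitting at $c_1$; in particular no agent has an opinion in $(c_1,c_1+1]$, whence $c_2>c_1+1$.

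Since $c_2-c_1>1=\varepsilon$, Lemma~\ref{lemma_decompose} (applied with the index $i^\star$) splits the process at time $t^\star$ into the agents with opinion $c_1$ and the agents with opinion $\ge c_2$, and the latter is again a BC-process which is frozen at time $t^\star$. Re-running the argument of the previous paragraph on it gives $c_3>c_2+1$, and iterating yields $c_{j+1}-c_j>1$ for all $1\le j\le m-1$. These same gaps show that in $\cG(t^\star)$ there is no edge between agents carrying different opinions (their distance exceeds $\varepsilon$) while any two agents carrying the same opinion are adjacent; hence the connected components of $\cG(t^\star)$ are exactly the level sets of $X(t^\star)$, so $S(\cP)=m$ (and this equals the number of components of $\cG(F(n))$ since the influence graph is constant from time $t^\star$ on).

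Summing the gaps gives $\omega(t^\star)=c_m-c_1=\sum_{j=1}^{m-1}(c_{j+1}-c_j)>m-1=S(\cP)-1$, and Lemma~\ref{lemma_basic}.(2) yields $\omega(0)\ge\omega(t^\star)>S(\cP)-1$, i.e.\ $S(\cP)<\omega(0)+1$. (In the degenerate case $\omega(0)=0$ the process is already at consensus with $S(\cP)=1$, and the strict inequality should be read with the implicit assumption $\omega(0)>0$, or weakened to $\le$.) The only real care needed is in the iteration: one must check that the ``tail'' sub-process produced by Lemma~\ref{lemma_decompose} is genuinely a frozen BC-process on its own agents, so that the minimality argument reapplies verbatim; everything else is routine.
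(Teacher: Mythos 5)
Your proof is correct. There is in fact nothing in the paper to compare it against: the lemma is stated there without any proof, the surrounding prose only gesturing at the ``moral'' of Lemma~\ref{lemma_seg_bound_1} and Lemma~\ref{lemma_seg_bound_2}. Your argument is the natural self-contained one and almost certainly what was intended: at freezing time the insiders of a minimal-opinion agent all lie in $[c_1,c_1+1]$ and average to $c_1$, so they all sit at $c_1$ and hence $c_2>c_1+1$; iterating gives gaps exceeding $\varepsilon=1$ between all surviving opinion values, so the components of the frozen influence graph are exactly the level sets, the final width exceeds $S(\cP)-1$, and Lemma~\ref{lemma_basic}.(2) transfers this to $\omega(0)$. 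Two small remarks. First, the appeal to Lemma~\ref{lemma_decompose} is dispensable: once $c_{j}-c_{j-1}>1$ is known, no insider of an agent at $c_j$ can have opinion below $c_j$, so the averaging argument reapplies directly without formally splitting off a sub-process. Second, your caveat about the degenerate case is a genuine (if minor) defect of the statement as printed, not of your proof: for $\omega(0)=0$ one has $S(\cP)=1=\omega(0)+1$ and the strict inequality fails, so the lemma needs either the hypothesis $\omega(0)>0$ (equivalently $m\ge 2$, where your argument even gives the stronger $\omega(0)>m-1$) or a weak inequality.
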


However, there are a few technical obstacles to overcome. E.g., in Lemma~\ref{lemma_seg_bound_2} feasible parameters are given by $\left(s_1,s_2,a_1,a_2,b_1,b_2\right)=(1,1,2,0,0,2)$, so that there are seven agents within an interval of length $4$. As shown in Example~\ref{lemma_segregation_fraction_one_half} we can have connected components of size $3$ next to a connected component of size $1$. However, the three agents on the left and on the right do not need to converge to a single connected component, see Example~\ref{ex_three_ones}. So, if small connected components split, then there are dense regions of agents which can only be spitted by even denser regions of agents.

\begin{example}
 Consider a BC-process with $x_i(0)\in \Z$ for all $i$ and there is a unique agent with starting opinion $0$, unique agents with starting opinion $\pm 1$, two agents with starting opinion $\pm 2$, three agents with starting opinion $\pm 3$, six agents with starting opinion $\pm 4$, ten agents with starting opinion $\pm 5$, and $19$ agents with starting opinion $\pm 6$. After one time step we reach a consensus where the sizes of the connected components are given by $(29,6,3,2,1,1,1,2,3,6,29)$.
\end{example}

Of course we can extend the previous example with exponentially growing cluster sizes. Indeed we may conjecture that it is some kind of a worst case example.
\begin{conjecture}
  Let $\left(c_1,\dots,c_l\right)$ be a sequence of cluster sizes of a BC-process. Then, there cannot be a connected subsequence $(1,1,1,1)$ or $(2,2,1,1,1,2,2)$.
\end{conjecture}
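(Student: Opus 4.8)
The plan is to reduce, via Lemma~\ref{lemma_decompose} and affine equivalence, to a connected BC-process with $\varepsilon=1$, and to read ``cluster sizes of a BC-process'' as the sizes of the clusters of the frozen profile $X(F(n))$, listed left to right. The guiding principle is the moral of Lemma~\ref{lemma_seg_bound_1} and Lemma~\ref{lemma_seg_bound_2}, which I would first extend from the single step $0\to 1$ to an arbitrary step $\tau\to\tau+1$: their proofs use only the inherited ordering and the parameterization of the opinions by non-negative consecutive increments, both available at any time by Lemma~\ref{lemma_basic}.(1). The extended statements would say that whenever an agent's left and/or right gap grows past $1$ at step $\tau$, the ``closing'' side must carry at least six agents within a unit-length window around the relevant neighbour at time $\tau$. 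One should keep in mind that interior singletons genuinely occur (the examples with cluster sizes $(29,6,3,2,1,1,1,2,3,6,29)$ and $(5,1,1,1,5)$ show up to three consecutive interior singletons), so the argument must be quantitatively sharp rather than merely crude.

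Next I would attach to every interior singleton a \emph{crystallization time}. If agent $b$ is a singleton in the frozen state and has agents on both sides, then once $I_b(t)=\{b\}$ the process has decomposed around $b$ (Lemma~\ref{lemma_decompose}) and $b$ never moves again; hence there is a last time $\tau_b$ with $I_b(\tau_b)\neq\{b\}$, and at step $\tau_b\to\tau_b+1$ at least one of $b$'s two gaps opens past $1$. A case split then applies: if both gaps open simultaneously, invoke the extended Lemma~\ref{lemma_seg_bound_2}; if one gap was already open, the process has already split there and the extended Lemma~\ref{lemma_seg_bound_1} applies to the adjacent sub-process. Either way each interior singleton forces a heavy neighbourhood of at least six (suitably counted) agents immediately to one side of it at its own crystallization time. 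For a run of consecutive singletons I would line these heavy neighbourhoods up along the fixed global ordering and argue that, for a $(1,1,1,1)$ block or for the singletons inside a $(2,2,1,1,1,2,2)$ block, they cannot all be placed: every such heavy region must eventually collapse onto one of the listed clusters, but the clusters flanking (and lying inside) the block have size only $1$ or $2$ and cannot have contained six agents, yielding a contradiction.

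The hard part will be the time-desynchronization: the singletons of a forbidden block need not crystallize at the same step, and a heavy region certified at time $\tau_b$ may disperse before a neighbouring singleton crystallizes, so a naive overlap count does not close; the asymmetry of the second forbidden pattern ($(2,2,1,1,1,2)$ is \emph{not} excluded) is a warning that any working argument must be direction-sensitive and cannot be a crude symmetric counting. I would attack this with a charging argument that tracks, for each final cluster, the set of agents that are ever within a unit window of it, together with a monovariant of the form $\sum |x_i(t)-x_j(t)|$ over the pairs that ultimately separate, to certify that the heavy regions demanded by the two patterns would require strictly more agents than the block contains. A safer complementary route is an induction on the number of agents strictly between the two outermost flanking clusters, which is finite because of the width bound $S(\cP)<\omega(0)+1$ together with the lower bounds of Lemmas~\ref{lemma_seg_bound_1} and~\ref{lemma_seg_bound_2}; the base cases involve only $8$ and $11$ agents inside the block and are within reach of the influence-graph/linear-programming machinery of Section~\ref{sec_influence_graphs}, while the inductive step would peel off the outermost flanking cluster after showing it cannot help accommodate the forced heavy regions.
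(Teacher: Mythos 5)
This statement is labelled as a \emph{conjecture} in the paper: the author gives no proof of it, and the surrounding discussion (the examples with cluster-size vectors $(5,1,1,1,5)$ and $(29,6,3,2,1,1,1,2,3,6,29)$, and the remark about ``technical obstacles'') makes clear that it is being posed as an open problem. So there is no paper proof to compare against, and the relevant question is only whether your proposal actually settles the conjecture. It does not. What you have written is a research plan, and you yourself flag the decisive step as unresolved: the ``time-desynchronization'' problem, i.e.\ that the crystallization times $\tau_b$ of consecutive interior singletons need not coincide, so the heavy neighbourhoods certified by the (extended) Lemmas~\ref{lemma_seg_bound_1} and~\ref{lemma_seg_bound_2} at different times cannot simply be intersected or summed. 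The two devices you offer to close this --- a charging argument with an unspecified monovariant $\sum\left|x_i(t)-x_j(t)\right|$, and an induction whose base cases are merely asserted to be ``within reach'' of the LP machinery and whose inductive step ``would peel off the outermost flanking cluster after showing it cannot help'' --- are both placeholders, not arguments. Since the paper's own examples show that three consecutive interior singletons do occur when flanked by sufficiently heavy clusters, the boundary between realizable and non-realizable patterns is delicate, and precisely the quantitative step you defer is where the whole difficulty lives.

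Two smaller points. First, the finiteness claim underlying your induction is shaky: the bound $S(\cP)<\omega(0)+1$ controls the number of final clusters by the initial width, not the number of agents that interact with the block before freezing, and agents outside the pattern can influence the block's agents at early times; you would need to argue separately that only a bounded set of agents is relevant. Second, the preliminary step of extending Lemmas~\ref{lemma_seg_bound_1} and~\ref{lemma_seg_bound_2} from the step $0\to 1$ to an arbitrary step $\tau\to\tau+1$ is plausible (the proofs indeed use only the inherited ordering from Lemma~\ref{lemma_basic}.(1) and the increment parameterization), and your reduction of each interior singleton's separation event to one of the two lemmas via the crystallization time $\tau_b$ and Lemma~\ref{lemma_decompose} is sound. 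That part could be turned into a correct lemma. But as it stands the proposal leaves the conjecture exactly as open as the paper does.
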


So, while there may be a lot of small cases to check, see Lemma~\ref{lemma_seg_bound_1} and Lemma~\ref{lemma_seg_bound_2} for examples, it might be a finite problem to determine an exact formula for $S(n)$ or to determine its fractional limit $\lim_{n\to\infty} S(n)/n$ at the very least.

\begin{conjecture}
   $$
     \lim\limits_{n\to\infty}\frac{S(n)}{n}=\frac{1}{2}
   $$
\end{conjecture}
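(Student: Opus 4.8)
The two inequalities $\liminf_{n\to\infty} S(n)/n\ge \tfrac{1}{2}$ and $\limsup_{n\to\infty} S(n)/n\le \tfrac{1}{2}$ are of very different character, and I would treat them separately. For the lower bound, Lemma~\ref{lemma_segregation_fraction_one_half} gives $S(4k+9)\ge 2k+3$ for every $k\ge 0$, i.e.\ $S(n)\ge\tfrac{n}{2}-O(1)$ along the residue class $n\equiv 1\pmod 4$; combined with the monotonicity $S(n+1)\ge S(n)$ stated in the text (equivalently, by padding the extremal configuration with a few extra agents shadowing an existing one), this gives $S(n)\ge\tfrac{n}{2}-O(1)$ for all $n$ and hence $\liminf_{n\to\infty} S(n)/n\ge\tfrac{1}{2}$.

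For the upper bound I would argue on the ordered sequence of final cluster sizes $(c_1,\dots,c_S)$ with $\sum_j c_j=n$, aiming at the amortized estimate $n=\sum_j c_j\ge 2S-O(1)$; equivalently, the number of size-one clusters must be compensated, up to a bounded error, by the surplus $\sum_{j:\,c_j\ge 3}(c_j-2)$ carried by the oversized ones. The engine should be the quantitative local phenomenon isolated in Lemmas~\ref{lemma_seg_bound_1} and~\ref{lemma_seg_bound_2}: whenever two momentarily adjacent agents separate across the confidence gap, one of the two sides must contain a fixed further number of agents confined to a unit interval. Hence a frozen singleton cluster is flanked by a dense block of agents packed into an interval of bounded width, and applying the bound $S(\cP)<\omega(0)+1$ locally to that block shows that its many agents can only end up in boundedly many clusters; so, amortized, each cheap cluster is paid for by a nearby expensive region. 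Making this rigorous needs (i) a time-robust version of Lemmas~\ref{lemma_seg_bound_1}--\ref{lemma_seg_bound_2} for a separation at an arbitrary step $t\to t+1$ rather than only $0\to 1$ (the one-step statements rescale, but one must also use that the spread only shrinks, $\omega(t)\le\omega(0)$, and that agents already sit in coinciding blocks at time $t$), and (ii) a charging scheme assigning to each small frozen cluster its own, pairwise disjoint block of witness agents in a neighbouring dense region, set up so that the witness blocks stay disjoint even when several small clusters are close together, as in Example~\ref{ex_three_ones}.

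The main obstacle is that a final singleton cluster need not come from an agent that was isolated early on: the separations producing it may be spread over many of the up to $O(n^3)$ time steps, and neighbouring small clusters may draw on overlapping dense regions, so a naive ``each singleton costs six private neighbours forming one cluster'' accounting overcounts badly. One way around this is to pass to the last time $t^\star$ at which the eventual coarsest partition of the agents into components is already forced and to analyse $X(t^\star)$, where the remaining separations happen within one or two further steps and the block structure is explicit; another is to upgrade the preceding conjecture on forbidden cluster patterns to a finite, mechanically checkable list of forbidden finite subsequences of $(c_1,\dots,c_S)$ whose exclusion already forces the average cluster size to tend to $2$. Either way, the substance of the proof lies in reducing an unbounded-depth dynamical question to a finite case analysis in the spirit of the transition-graph arguments of Section~\ref{sec_freezing_time}; and, as the remark about ChatGPT warns, the attempt should be accompanied by a dedicated computer search for configurations with $S(n)/n>\tfrac{1}{2}$ at moderate $n$, in case the conjecture is subtly false.
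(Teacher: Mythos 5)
You should note first that this statement is a \emph{conjecture} in the paper: no proof is given there, so your proposal cannot be compared against an existing argument and must stand on its own. It does not — it is a research programme rather than a proof, and you largely say so yourself. The lower bound $\liminf_{n\to\infty} S(n)/n\ge\tfrac{1}{2}$ is essentially in order, since Lemma~\ref{lemma_segregation_fraction_one_half} supplies $S(4k+9)\ge 2k+3$; but both of your justifications for covering the remaining residue classes are shaky. The monotonicity $S(n+1)\ge S(n)$ is itself only ``stated without a proof'' in the paper, and your proposed fix --- padding with an extra agent that shadows an existing one --- fails because a duplicated agent changes the weights in the update rule (\ref{eq_update_rule}) and hence the entire trajectory; nor can you park the extra agent far away, since $S(n)$ ranges only over connected starting configurations. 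This is repairable (e.g.\ by enlarging one outer block of the construction in Lemma~\ref{lemma_segregation_fraction_one_half} by $1$, $2$ or $3$ agents and re-verifying the three displayed iterations), but as written it is a gap.

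The upper bound $\limsup_{n\to\infty} S(n)/n\le\tfrac{1}{2}$ is the actual content of the conjecture, and your argument for it is not a proof. The two ingredients you flag as ``needed'' are precisely the missing mathematics: Lemmas~\ref{lemma_seg_bound_1} and~\ref{lemma_seg_bound_2} are proved only for a separation at step $0\to 1$, and Example~\ref{ex_three_ones} together with the later example with cluster sizes $(29,6,3,2,1,1,1,2,3,6,29)$ shows exactly why your charging scheme is delicate: several adjacent singleton clusters can draw on the same flanking dense region, and the compensating mass can be pushed outward into ever denser blocks, so the witness sets you want to assign to distinct cheap clusters need not be disjoint. You also lean on the bound $S(\cP)<\omega(0)+1$, which the paper states without proof (it is immediately followed by a list of obstacles), which applies to a whole BC-process rather than to a sub-block still influenced by outside agents, and which for a connected process only yields the trivial $S(\cP)<n$ since $\omega(0)$ can be as large as $(n-1)\varepsilon$. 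Until the amortized inequality $\sum_j c_j\ge 2S-O(1)$ is actually derived from a proven, time-uniform local separation lemma combined with a concrete disjoint charging rule, the upper bound remains open --- which is, of course, why the paper states this as a conjecture rather than a theorem.
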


\begin{example}
  Let $\cP=(X(0),1)$ be a BC-process with
  $$
    X(0)=\big(0,1,2,2.5,3,4,5\big),
  $$
  so that
  $$
    X(1)=(0.5,1,2.125,2.5,2.875,4,4.5)
  $$
  and
  $$
    X(2)=(0.75,0.75,2.5,2.5,2.5,4.25,4.25).
  $$
  Thus, we have $S(7)\ge 3$ and a vector $(2,3,2)$ of cluster sizes. For
  $$
    X(0)=\big(0,0,1,2,3,4,4\big)
  $$
  we compute
  $$
    X(1)=\left(\tfrac{1}{3},\tfrac{1}{3},\tfrac{3}{4},2,\tfrac{13}{4},\tfrac{11}{3},\tfrac{11}{3}\right)
  $$
  ending up with cluster sizes $(3,1,3)$.
\end{example}
Indeed, we can computationally check $S(7)=S(8)=3$.

\begin{example}
  Let $\cP=(X(0),1)$ be a BC-process with
  $$
    X(0)=\big(0,1,2,2.5,3,4,5,5.5,6\big),
  $$
  so that
  $$
    X(1)=(0.5,1,2.125,2.5,2.875,4,5.125,5.5,5.5)
  $$
  and
  $$
    X(2)=(0.75,0.75,2.5,2.5,2.5,4,5.375,5.375,5.375).
  $$
  Thus, we have $S(9)\ge 4$ and a vector $(2,3,1,3)$ of cluster sizes.
\end{example}
Indeed, we can computationally check $S(9)=4$ and cluster size vector $(2,3,1,3)$ as well as the mirrored version $(3,1,3,2)$ are the only possibilities for four clusters.
For $10$ agents and four clusters the following size vectors can occur up to mirror symmetry: $(2,1,5,2)$, $(2,3,1,4)$, $(2,3,2,3)$, $(2,3,3,2)$, $(2,4,1,3)$, $(3,1,1,5)$, and $(3,1,3,3)$. Moreover, we indeed have $S(10)=4$.

\section{Conclusion and outlook}\label{sec_conclusion}

Here we studied different notions of equivalence classes for the BC model. We mainly focused on the combinatorial description of BC-processes as sequences of influence graphs. Proposing linear programming as a computational tool to decide the existence of a suitable starting configuration yields a full machinery to study properties of the basically infinite number of BC-processes by restricting to a finite number of equivalence classes. Here we considered the maximum number of time steps of a dynamic to end up in a stable state as well as the maximum fragmentation. In the literature also the width of the final configuration is a key characteristic. However, this value indeed depends on the precise starting configuration, so that we did not study it in this paper.

For the maximum freezing time $F(n)$ it would be interesting to compute exact values for larger numbers of agents, or bounds at the very least. Here more research on algorithmic details and improvements for our presented general framework are needed. Maybe the dumbbell construction from \cite{wedin2015quadratic} can be refined a bit to also yield relatively large freezing times for medium sized numbers of agents. For the maximum segregation $S(n)$ the ultimate goal is to find an exact formula, which does not fully seem out of reach.

So far, we only asked for extreme cases in terms of a given number of agents. A next step might be to study different variants of non-monotonicity and counter intuitive behavior described in the literature. It is well known that slightly increasing the confidence level $\varepsilon$ can increase the freezing time and the number of clusters, while for large $\varepsilon$ they are zero and one, respectively. But how extreme can this non-monotonicities be? In \cite{hegselmann2023bounded} the author reported several instances where slightly increasing the confidence level $\varepsilon$ turned a consensus into three different opinion clusters. So far, no example with a jump from a consensus to four or more different opinion clusters is known. Can it even exist?

Of course, similar questions can be also studied for higher dimensional BC-processes, where even the characterization of possible influence graphs might be an interesting problem.


\end{document}